\title[Structure of colored HOMFLY-PT for torus knots]{Combinatorial structure of colored HOMFLY-PT polynomials for torus knots}
\author[P.~Dunin-Barkowski]{P.~Dunin-Barkowski}
\address[P.~Dunin-Barkowski]{Faculty of Mathematics, National Research University Higher School of Economics, Usacheva 6, 119048 Moscow, Russia; and ITEP, Moscow 117218, Russia}
\email{ptdunin@hse.ru}
\author[A.~Popolitov]{A.~Popolitov}
\address[A.~Popolitov]{Department of Physics and Astronomy, Uppsala University, Uppsala, Sweden; Institute for Information Transmission Problems, Moscow 127994, Russia; and ITEP, Moscow 117218, Russia}
\email{popolit@gmail.com}
\author[S.~Shadrin]{S.~Shadrin}
\address[S. Shadrin]{Korteweg-de Vriesinstituut voor Wiskunde, 
	Universiteit van Amsterdam, Postbus 94248,
	1090GE Amsterdam, Nederland}
\email{s.shadrin@uva.nl}
\author[A.~Sleptsov]{A.~Sleptsov}
\address[A.~Sleptsov]{ITEP, Moscow 117218, Russia; Institute for Information Transmission Problems,	Moscow 127994, Russia; and Laboratory of Quantum Topology, Chelyabinsk State University, Chelyabinsk 454001, Russia}
\email{sleptsov@itep.ru}
\newcommand{\ba}[1]{\left \langle #1 \right \rangle}
\newcommand{\br}[1]{\left( #1 \right)}
\newcommand{\bs}[1]{\left [ #1 \right ]}
\newcommand{\txi}{\tilde{\xi}}
\newcommand{\al}{\alpha}
\newcommand{\be}{\beta}
\newcommand{\rhb}{\rho \beta}
\newcommand{\la}{\lambda}
\newcommand{\lb}{\left (}
\newcommand{\rb}{\right )}
\newcommand{\pb}{\bar{p}}
\newcommand{\pt}{\tilde{p}}
\newcommand{\ext}{\mathrm{ext}}
\newcommand{\Ze}{Z^{\ext}}
\newcommand{\Ec}{\mathcal{E}}
\newcommand{\Ac}{\mathcal{A}}
\newcommand{\At}{\widetilde{\Ac}}
\newcommand{\Fc}{\mathcal{F}}
\newcommand{\thephi}{\phantom{.}_2 \phi_1}
\newcommand{\theF}{\phantom{.}_2 F_1}
\newcommand{\cV}{\mathcal{V}}
\newcommand{\Z}{\mathbb{Z}}
\newcommand{\corc}[1]{\bigg{\langle} \, #1 \,  \bigg{\rangle} ^{\circ}}
\newcommand{\cord}[1]{\bigg{\langle} \, #1 \, \bigg{\rangle}}% ^{\bullet}}
\newcommand{\ac}{\mathfrak{A}}
\DeclareMathOperator{\Res}{Res}
\DeclareMathOperator{\Id}{Id}
\newtheorem{theorem}{Theorem}[section]
\newtheorem{proposition}[theorem]{Proposition}
\newtheorem{corollary}[theorem]{Corollary}
\newtheorem{lemma}[theorem]{Lemma}
\theoremstyle{definition}
\newtheorem{notation}[theorem]{Notation}
\newtheorem{remark}[theorem]{Remark}
\newtheorem{definition}[theorem]{Definition}
\newcommand{\p}[0]{\partial}
\newcommand{\binomial}[2]{
  \br{\begin{array}{c} #1 \\ #2 \end{array}}
}
\newcommand{\jp}[3]{\mathcal{P}_{#1}^{(#2,#3)}}
\begin{document}

\begin{abstract}
%We rewrite the (extended) Ooguri-Vafa partition function for colored HOMFLY-PT polynomials
%for torus knots in terms of the free-fermion (semi-infinite wedge) formalism, making it very similar
%to the generating function for double Hurwitz numbers.
%This allows us to perform a key step towards a purely combinatorial proof of topological recursion for colored HOMFLY-PT polynomials for torus knots: we prove the quasi-polynomiality property of the coefficients of
%the extended Ooguri-Vafa partition function in a purely combinatorial way.
%Non-polynomial factors are related to Jacobi polynomials, regarding which we obtain some new results on the way.
%In addition to that, we show that the (0,1)- and (0,2)-functions on the corresponding spectral curve are
%in agreement with the colored HOMFLY-PT polynomials data.
We rewrite the (extended) Ooguri-Vafa partition function for colored HOMFLY-PT polynomials for torus knots in terms of the free-fermion (semi-infinite wedge) formalism, making it very similar to the generating function for double Hurwitz numbers. This allows us to conjecture the combinatorial meaning of full expansion of the correlation differentials obtained via the topological recursion on the Brini-Eynard-Mari\~no spectral curve for the colored HOMFLY-PT polynomials of torus knots.

This correspondence suggests a structural combinatorial result for the extended Ooguri-Vafa partition function. Namely, its coefficients should have a quasi-polynomial behavior, where non-polynomial factors are given by the Jacobi polynomials. We prove this quasi-polynomiality in a purely combinatorial way. In addition to that, we show that the (0,1)- and (0,2)-functions on the corresponding spectral curve are in agreement with the extension of the colored HOMFLY-PT polynomials data.
\end{abstract}

\maketitle

\tableofcontents

\section{Introduction}

In 1989 Witten considered the 3D Chern-Simons quantum field theory and pointed out the existence of polynomial invariants of knots and links colored by the representations of Lie groups~\cite{Witten}. Based on his paper Reshetikhin and Turaev in~\cite{ReshetikhinTuraev} defined these polynomial invariants rigorously with the help of quantum groups and the R-matrix approach. The colored HOMFLY-PT polynomial corresponds to the $\mathfrak{sl}(N)$ case that plays a central role in the theory of knot polynomials.

In the case of torus knots, a formula for the colored HOMFLY-PT polynomial has a clear symmetric group character interpretation, called the Rosso-Jones formula. It was first derived in~\cite{RossoJones} and later, independently, in~\cite{LZ06}.
The Rosso-Jones formula allows to embed the colored HOMFLY-PT polynomials of torus knots in the realm of KP integrability, as it is done in~\cite{MMM12,MMS13}. In particular, it allows to represent the Ooguri-Vafa partition function \cite{OV} as an action of cut-and-join operators on a trivial KP tau-function \cite{MMSHurw13}. In the case of torus knots, this representation is especially simple.

Our first result relates explicitly the Ooguri-Vafa partition function of~\cite{MMS13} to a particular specialization of the partition function of double Hurwitz numbers~\cite{Okounkov}. This way we get a representation of the colored HOMFLY-PT polynomials in terms of the semi-infinite wedge formalism~\cite{Kac}, but in fact we get a more general formula that has more parameters. To be more precise, our extended partition function $Z^{ext}$ depends on the variables $\tilde p_m$, $m=1,2,\dots,$ and the original Ooguri-Vafa partition function  of the $(Q,P)$-torus knot is obtained by specialization $\tilde p_m=0$ for $m$ not divisible by $Q$:
\begin{align*}
Z^{\ext}(\pt) = 
\left\langle
%\Bigg|
\exp \lb 
\sum_{i=1}^\infty 
\frac{A^i-A^{-i}}
{
	e^{\frac{iuQ}{2P}}-
	e^{-\frac{iuQ}{2P}}
} 
\cdot
\frac{\alpha_{i}}{i} 
\rb
\exp \lb u \Fc_2 \rb
\exp \lb \sum_{j=1}^\infty \frac{\alpha_{-j} \pt_{j}}{j} \rb
%\Bigg|
\right\rangle.
\end{align*}
Here $A$ is the parameter of the HOMFLY-PT polynomial, and $u$ is the parameter that controls the genus expansion of $\log Z^{ext}$. The necessary definitions from the semi-infinite wedge formalism, in particular, the meaning of the operators $\alpha_{\pm j}$ and $\mathcal{F}_2$, we recall below.

The specialization of the partition function for double Hurwitz number that we obtain as an extension of the Ooguri-Vafa partition function is most natural to consider in the realm of the spectral curve topological recursion~\cite{EynardOrantin}, see also~\cite{AMM-1,AMM-2,CE,CEO,EO-2}.
Brini, Eynard, and Mari\~no represented in~\cite{BEM11} the coefficients of the genus expansion of the logarithm of the Ooguri-Vafa partition function for the $(Q,P)$-knot as particular coefficients of formal expansion of certain differential forms $\omega_{g,n}$ satisfying topological recursion on the curve 
\begin{align*} 
x(U) &= Q \log U + P \left[\log \left ( 1 - A^{P/Q+1} U\right ) - \log \left ( 1 - A^{P/Q-1} U\right )\right];\\
y(U) &= \gamma \log U + \dfrac{P\gamma + 1}{Q} \left[\log \left ( 1 - A^{P/Q+1} U\right ) - \log \left ( 1 - A^{P/Q-1} U\right ) \right].
\end{align*}
Here $U$ is a global coordinate on the rational curve, and $\gamma$ is an integer number chosen such that $P\gamma + 1$ is divisible by $Q$. One has to consider the expansions of $\omega_{g,n}$ in $\Lambda:=X^{-1/Q}$, $X=\exp(x)$, near the point $X=\infty$, and the coefficients of the Ooguri-Vafa partition function correspond to the integer powers of $X^{-1}$, namely, the coefficient of $u^{2g-2+n}\prod_{i=1}^n \tilde p_{Ql_i}$ in $\log Z^{ext}$ is equal, up to a universal combinatorial factor, to the coefficient of $\prod_{i=1}^n \Lambda_i^{Ql_i}$ in the expansion of $\omega_{g,n}$.

The status of this claim of Brini-Eynard-Mari\~no is the following. In~\cite{BEM11} it is formulated as a conjecture, and it is derived using a matrix model representation of the colored HOMFLY-PT polynomials of torus knots. Since that time the matrix models technique was developed in a number of papers, see~\cite{BorotEynardOrantin,BorotEynardWeisse,BorotGuionnetKozlowski}, and these works make the derivation of Brini-Eynard-Mari\~no mathematically rigorous.

We conjecture that our extension of the Ooguri-Vafa partition function satisfies the topological recursion on the same spectral curve. Namely, we conjecture that the coefficient of $u^{2g-2+n}\prod_{i=1}^n \tilde p_{m_i}$ in $\log Z^{ext}$ is equal, up to a universal combinatorial factor, to the coefficient of $\prod_{i=1}^n \Lambda_i^{m_i}$ in the expansion of $\omega_{g,n}$ for \emph{all} $m_1,\dots,m_n$, not only those divisible by $Q$. We verify this conjecture for the unstable terms $(g,n)=(0,1)$ and $(0,2)$. 

The main result of this paper is a quasi-polynomiality statement for the coefficients  $u^{2g-2+n}\prod_{i=1}^n \tilde p_{m_i}$ in $\log Z^{ext}$, for any $m_1,\dots,m_n$. This includes new combinatorial structural results for the colored HOMFLY-PT polynomials of torus knots, but in fact it is a more general statement.  We prove that these coefficients can be represented as 
\begin{align*}
& \sum_{
\substack{
	(i_1,\dots,i_n)
	\\
	\in \{0,1\}^n
}
} \mathcal{C}_{i_1,\dots,i_n}(m_1,\dots,m_n)
 \prod_{j=1}^n 
(-1)^{m_j} A^{m_j(\frac PQ-1)} \mathcal{P}_{m_j-1-i_j}^{(m_j(\frac PQ-1) + i_j, 1)}(1 - 2 A^2),
\end{align*}
where $\mathcal{P}_n^{\alpha,\beta}(x)$ are the Jacobi polynomials, and $\mathcal{C}_{i_1,\dots,i_n}$ are polynomials of finite degree that depends only on $g$ and $n$. 
Substituting $m_1,\dots,m_n$ divisible by $Q$ in this formula, we obtain a combinatorial structural result for the colored HOMFLY-PT of the $(Q,P)$-torus knot. 

There are several sources of interest for this type of expansions. First observation of this type in Hurwitz theory goes back to ELSV-formula~\cite{ELSV} for simple Hurwitz numbers given that represents a simple Hurwitz number in genus $g$ depending on $n$ multiplicities $m_1,\dots,m_n$ as 
$\mathcal{C}(m_1,\dots,m_n) \prod_{j=1}^n \frac{m_i^{m_i}}{m_i!}$, for certain polynomials $\mathcal{C}=\mathcal{C}_{g,n}$ representing the intersection theory of the moduli space of curves. There are some further results and conjectures relating Hurwitz theory to the intersection theory of the moduli space of curves, see, for instance,~\cite{JPTseng,Zvonkine,ShaSpiZvo,BKLPS}. The question of principle importance is whether a quasi-polynomiality property, natural from the point of view of algebraic geometry behind Hurwitz theory, can be derived in a combinatorial way, since it is a purely combinatorial property of purely combinatorial objects. For instance, for single Hurwitz numbers it was an open question for 15 years, since the ELSV formula was first introduced, until it was settled in two different ways in~\cite{DuninKazarian,KLS} (see also a discussion in~\cite{Lewanski}). Nowadays there are more of purely combinatorial results of this type, see, for instance,~\cite{KLS}, where the conjectures of~\cite{DoKarev,DoManescu} are resolved, as well as~\cite{DLPS,KLPS} for more examples of this type of statements. 

The quasi-polynomial structure has appeared to be also very important from the point of view of topological recursion~\cite{EynardOrantin}. It is proved in~\cite{Eynard}, see also~\cite{DOSS}, that the correlation differential obtained by topological recursion have the structure given by 
$\prod_{j=1}^n d_{x_j} \sum
\mathcal{C}_{i_1,\dots,i_n}(\frac{d}{dx_1},\dots,\frac{d}{dx_n})
\prod_{j=1}^n \xi_{i_j}(x_j),$
where the indices $i_1,\dots,i_n$ enumerate the critical points of function $x$ on the curve, $\mathcal{C}_{i_1,\dots,i_n}$ are certain polynomials, and $\xi_i$ are certain canonically defined functions on the spectral curve. This structure of correlation differentials is equivalent to the quasi-polynomiality of their expansion in a suitable coordinate (for example, in $x$ near $x=0$, $x^{-1}$ near $x=\infty$, or $e^x$ near $e^x=0$). So, in particular in our case the quasi-polynomiality structure means that the $n$-point functions corresponding to $\log Z^{ext}$ are expansion in $e^{-x/Q}$ near $e^{-x/Q}=0$ of $\mathcal{C}_{i_1,\dots,i_n}(\frac{d}{dx_1},\dots,\frac{d}{dx_n})
\prod_{j=1}^n \xi_{i_j}(x_j),$ where $\xi_1$ and $\xi_2$ are some functions canonically associated to the curve.

In this framework it is interesting to review the previously known non-polynomial parts of the quasi-polynomial formulas in Hurwitz theory. All of them are proved to be connected to $\xi$-function expansions.  Typically we have $r\geq 1$ critical points of $x$, and there is a natural basis of $r$ $\xi$-functions, called \emph{flat basis} due to its connection to flat coordinates in the theory of Frobenius manifold~\cite{DNOPS-1,DNOPS-2}. We have~\cite{KLS,KLPS}: 
\begin{align*}
& q\text{-orbifold } p\text{-spin Hurwitz numbers } (r=pq): & & \frac{m^{\lfloor \frac m{pq} \rfloor}}{\lfloor \frac m{pq} \rfloor!} \\
& r\text{-orbifold monotone Hurwitz numbers:} & & \binom{m+ \lfloor \frac mr \rfloor}{ m } \phantom{\Bigg|}\\
& r\text{-orbifold strictly monotone Hurwitz numbers:} & & \binom{m-1}{\lfloor \frac mr \rfloor} \phantom{\Bigg|}
\end{align*}
So, in all these cases there is a unique formula for the non-polynomial part, which reflects the fact that for each $m$ there is just one function $\xi$ (out of $r$ functions that form a flat basis) that can contribute to the corresponding term of the expansion of the correlation differentials. 
 In the case of the specialization of double Hurwitz numbers that we consider in this paper, $r=2$, and there are two possible non-polynomial terms for each $m$: 
\begin{align*}
& (-1)^{m} A^{m(\frac PQ-1)} \mathcal{P}_{m-1}^{(m(\frac PQ-1), 1)}(1 - 2 A^2) \\
\text{and} \qquad 
& (-1)^{m} A^{m(\frac PQ-1)} \mathcal{P}_{m-2}^{(m(\frac PQ-1) + 1, 1)}(1 - 2 A^2)
\end{align*}
These non-polynomial terms are mixed in the expansions of correlators in all possible ways, which, along with the fact that non-polynomiality is given by the values of Jacobi polynomials, makes this case really very special and challenging. From that point of view, the relation of $Z^{ext} $ to the Ooguri-Vafa partition function for the colored HOMFLY-PT polynomials of the $(Q,P)$-torus knots is just an extra nice property.

%\subsection{Organization of the paper}

%...

\subsection{Structure of the paper}%\textcolor{white}{hi!}

In Section \ref{sec:sinfw} we recall basic facts regarding the free fermions/semi-infinite wedge formalism.

In Section \ref{sec:rjff} we extend the Ooguri-Vafa partition function for the colored HOMFLY-PT polynomials for torus knots in a way that allows us to rewrite it in terms of the semi-infinite wedge formalism, which makes it quite similar to the generating function of double Hurwitz numbers.

In Section \ref{sec:bemcurve} we recall the spectral curve introduced in \cite{BEM11} and we derive the form of the so-called $\xi$-functions (in the general context these functions were defined in \cite{Eynard}, see also~\cite{DOSS,LPSZ}) on this curve. These $\xi$-functions turn out to be related to Jacobi polynomials.

In Section \ref{sec:jacobi} we recall the definition and some known properties of Jacobi polynomials. Then we derive some new properties and relations which are needed for the subsequent sections.

In Section \ref{sec:aops} we express the coefficients $K_{\mu_1,\dots,\mu_n}$ of the expansion of the extended Ooguri-Vafa partition function as semi-infinite wedge correlators of products of certain operators called $\mathcal{A}$-operators. Then we prove that matrix elements of these operators are polylinear combinations of $\xi$-functions that are rational in the arguments of the operators, i.e. in $\mu_i$.

In Section \ref{sec:Apoly} we prove a theorem that states that the connected correlators of  $\mathcal{A}$-operators are polynomial in $\mu_1,\dots,\mu_n$. This theorem, together with its reformulation found in the subsequent section, is the main result of the present paper.

In Section \ref{sec:combprooftr} we provide an alternative formulation of the main theorem of the present paper. In this alternative form, our main theorem states that in the stable case, i.e., for $2g-2+n>0$, the connected $(g,n)$-coefficients of the extended Ooguri-Vafa partition function can be organized into multidifferentials on the spectral curve which are finite polylinear combinations of differentials of derivatives of the $\xi$-functions. This is a key ingredient required for a combinatorial proof of topological recursion in the vein of \cite{DuninKazarian,DLPS,KLS,KLPS}. 
%The proof of the full topological recursion statement would then mean that these multidifferentials are precisely the ones which are obtained through topological recursion.

In Section \ref{sec:unstcorr} we match the unstable connected $(g,n)$-correlators of the extended Ooguri-Vafa partition function with the basic data of the spectral curve. This are the initial conditions required for the full topological recursion statement.

In Section \ref{sec:qsc} we derive a differential-difference operator that annihilates the wave function obtained from the extended Ooguri-Vafa partition function by principle specialization. The dequantization of this differential-difference operator gives the spectral curve of Brini-Eynard-Mari\~no.

\subsection{Acknowledgments}
We thank Ga\"etan Borot and Andrea Brini for useful discussions clarifying for us the current status of the result of~\cite{BEM11}. We thank Tom Koornwinder for his advice on $q$-hypergeometric functions and Jacobi polynomials. In particular, Tom observed that some of our propositions have a natural interpretation and an alternative proof in terms of the so-called Meixner polynomials, so Remarks~\ref{rem:MeixnerRem1} and~\ref{rem:Meixner-2} are due to him.

S.S. was supported by the Netherlands Organization for Scientific Research.
A.P. and A.S. were supported by the Russian Science Foundation (Grant No. 16-12-10344). P.D.-B. was supported by RFBR grant 16-31-60044-mol\_a\_dk and partially supported by RFBR grant 18-01-00461.

\section{Semi-infinite wedge preliminaries} \label{sec:sinfw}

In this section we remind the reader of some facts about semi-infinite wedge space. For more details see e.~g.~\cite{DuninKazarian}.

Let $V$ be an infinite dimensional vector space with a basis labeled by the half integers. Denote the basis vector labeled by $m/2$ by $\underline{m/2}$, so $V = \bigoplus_{i \in \Z + \frac{1}{2}} \underline{i}$.

\begin{definition}
	The semi-infinite wedge space $\cV$ is the span of all wedge products of the form
	\begin{equation}\label{wedgeProduct}
	\underline{i_1} \wedge \underline{i_2} \wedge \cdots
	\end{equation}
	for any decreasing sequence of half integers $(i_k)$ such that there is an integer $c$ (called the charge) with $i_k + k - \frac{1}{2} = c$ for $k$ sufficiently large. We denote the inner product associated with this basis by~$(\cdot,\cdot)$.
	
	The zero charge subspace $\cV_0\subset\cV$ of the semi-infinite wedge space is the space of all wedge products of the form~\eqref{wedgeProduct} such that 
	\begin{equation}\label{zeroCharge}
	i_k + k = \frac{1}{2}
	\end{equation}
	for $k$ sufficiently large.
\end{definition}

\begin{remark}
	An element of $\cV_0$ is of the form
	$$
	\underline{\lambda_1 - \frac{1}{2}} \wedge \underline{\lambda_2 - \frac{3}{2}} \wedge \cdots
	$$
	for some integer partition $\lambda$. This follows immediately from condition~\eqref{zeroCharge}. Thus, we canonically have a basis for $\cV_0$ labeled by all integer partitions. 
\end{remark}

\begin{notation}
	We denote by~$v_\lambda$ the vector labeled by a partition~$\lambda$. The vector labeled by the empty partition is called the vacuum vector and denoted by $|0\rangle = v_{\emptyset} = \underline{-\frac{1}{2}} \wedge \underline{-\frac{3}{2}} \wedge \cdots$. 
\end{notation}

\begin{definition}
	If $\mathcal{P}$ is an operator on $\cV_0$, then we define the \emph{vacuum expectation value} of~$\mathcal{P}$ by
	$\left\langle \mathcal{P} \right\rangle := \langle 0 |\mathcal{P}|0\rangle$,
	where $\langle 0 |$ is the dual of the vacuum vector with respect to the inner product~$(\cdot,\cdot)$, and called the covacuum vector. We will also refer to these vacuum expectation values as (disconnected) \emph{correlators}.
\end{definition}

Let us define some operators on the infinite wedge space.

\begin{definition} Let $k$ be any half integer. Then the operator $\psi_k\colon \cV\to\cV$ is defined by
	$\psi_k \colon (\underline{i_1} \wedge \underline{i_2} \wedge \cdots) \ \mapsto \ (\underline{k} \wedge \underline{i_1} \wedge \underline{i_2} \wedge \cdots)$. It increases the charge by $1$.
	
	The operator $\psi_k^*$ is defined to be the adjoint of the operator $\psi_k$ with respect to the inner product~$(\cdot,\cdot)$.
\end{definition}

\begin{definition}
	The normally ordered products of $\psi$-operators are defined in the following way
	\begin{equation}\label{eq:Eoperdef}
	E_{ij}:={:}\psi_i \psi_j^*{:}\ := \begin{cases}\psi_i \psi_j^*, & \text{ if } j > 0 \\
	-\psi_j^* \psi_i & \text{ if } j < 0\ .\end{cases} 
	\end{equation}
	This operator does not change the charge and can be restricted to $\cV_0$. Its action on the basis vectors $v_\lambda$ can be described as follows: ${:}\psi_i \psi_j^*{:}$ checks if $v_\lambda$ contains $\underline{j}$ as a wedge factor and if so replaces it by $\underline{i}$. Otherwise it yields~$0$. In the case $i=j > 0$, we have ${:}\psi_i \psi_j^*{:}(v_\lambda) = v_\lambda$ if $v_\lambda$ contains $\underline{j}$ and $0$ if it does not; in the case  $i=j < 0$, we have ${:}\psi_i \psi_j^*{:}(v_\lambda) = - v_\lambda$ if $v_\lambda$ does not contain $\underline{j}$ and $0$ if it does. These are the only two cases where the normal ordering is important.
\end{definition}

\begin{definition}
	Let $n \in \Z$ be any integer. We define an operator $\mathcal{E}_n(z)$ depending on a formal variable~$z$ by
	\begin{align}\label{eq:curlyEdef}
	\mathcal{E}_n(z) &:= \sum_{k \in \Z + \frac{1}{2}} e^{z(k - \frac{n}{2})} E_{k-n,k} + \frac{\delta_{n,0}}{\zeta(z)}  .
	\end{align}
	In some situations it is also useful to consider the operator $\widetilde{\mathcal{E}}_n(z)$ given by 
	\begin{align}\label{eq:curlyEtildedef}
\widetilde{\mathcal{E}}_n(z) &:=\sum_{k \in \Z + \frac{1}{2}} e^{z(k - \frac{n}{2})} E_{k-n,k} .
\end{align}
So, $\widetilde{\mathcal{E}}_n(z)=\mathcal{E}_n(z)$ for $n\not=0$, and in the case $n=0$ the difference of these two operators is the scalar multiplication by $\zeta(z)^{-1}$.  
\end{definition}

A useful formula for the commutator of these operators is given in~\cite{OkoPan}:
\begin{align}\label{eq:commEoper}
[\mathcal{E}_a(z),\mathcal{E}_b(w)]=\zeta(aw-bz)\mathcal{E}_{a+b}(z+w),
\end{align}
where $\zeta(z):=e^{z/2}-e^{-z/2}$. The same formula remains true if we use one or two $\widetilde{\mathcal{E}}$-operators on the left hand side instead of $\mathcal{E}$-operators.

\begin{definition}	\label{f2op} In what follows we will use the following operator:
\begin{align}		
{\mathcal F}_2 := \sum_{k\in\Z+\frac12} \frac{k^2}{2} E_{k,k} = [z^2]  \widetilde{\mathcal{E}}_0(z).
\end{align}
\end{definition}
Here $[z^2]$ denotes operation of taking the coefficient in front of $z^2$.
We use this notation throughout the paper.

%\fixme{Maybe mention already here the $W_2$-operator?}

\begin{definition}
	\label{alop}
	We will also need the following operators:
	$$\alpha_k:=\mathcal{E}_k(0), \quad k\neq 0 .$$
\end{definition}
Note the following fact regarding the commutators of these operators:
\begin{equation}\label{eq:alphacomm}
[\alpha_k,\alpha_m] = k\delta_{k+m,0}.
\end{equation}
Also note that
\begin{align}\label{eq:alphavac}
\alpha_k | 0 \rangle &= 0, \quad k>0; \\ \nonumber
\langle 0|\alpha_k &= 0, \quad k<0.
\end{align}

\section{Rosso-Jones formula as free-fermion average} \label{sec:rjff}

% \fixme{formulate the transition from RJ to $Z_{ext}$ as a theorem}

In this section we show that in the case of torus knots the Ooguri-Vafa generating function of the colored HOMFLY-PT polynomials can be extended, by adding additional parameters, to a partition
function that is natural from the point of view of the theory of double Hurwitz numbers.
The Ooguri-Vafa partition function is then recovered by setting all additional parameters to $0$.

The colored HOMFLY-PT polynomials of torus knots and links are known explicitly due to a result  of Rosso and Jones~\cite{RossoJones}. The HOMFLY-PT polynomials are knot invariants, i.e. they are invariant under the three Reidemeister moves defined on planar diagrams of knots. However, in many applications the HOMFLY-PT polynomials should be multiplied by an extra factor, which is not invariant under the first Reidemeister move. For this reason one can distinguish the \emph{topological} HOMFLY-PT polynomials and various \emph{non-topological} ones; different choices of common factors are called~\emph{framings}. Adapting the physics terminology, we can say that we consider HOMFLY-PT polynomials in topological framing, or in a so-called vertical one, or any other. In this paper we need a very specific choice of framing, which provides a consistency of knot invariants with the correlation differentials on the underlying spectral curve. We call it the \emph{spectral} framing.

\begin{definition}\label{torH}
	Let $T[Q,P]$ be a torus knot. Then its colored HOMFLY-PT polynomial in the spectral framing is defined by the following variation of the Rosso-Jones formula:
	\begin{equation}
	H_R(T[Q,P];q,A)	:= A^{P|R|} \cdot 
	\sum_{R_1 \vdash Q|R|} c^{R_1}_R\, q^{2\kappa_R \frac{P}{Q}}\, s^*_{R_1}(q,A),
	\end{equation}
	where the coefficients $c^{R_1}_R$ are integer numbers determined by the equation
	\begin{equation}\label{adams}
	\prod_{i=1}^l s_R(p_{Q|R|},p_{Q(|R|-1)},\dots,p_{Q})
	= \sum_{R_1 \vdash Q|R|}
	c^{R_1}_{R} \cdot s_{R_1}(p_{|R_1|},p_{|R_1|-1},\dots,p_{1}).
	\end{equation}
	(they are called Adams coefficients).
	The functions $s_R$ are the Schur polynomials, and $s^*_R := s_R|_{p_i=p^*_i}$, where
	\begin{align}
	\label{toplocus}
	p^*_i = \frac{A^i - A^{-i}}{q^{i} - q^{-i}}.
	\end{align} 
	(this substitution is called \emph{the restriction to the topological locus}).
\end{definition}

Now let us omit the restriction to the topological locus~\eqref{toplocus}  and define an extended colored HOMFLY-PT polynomial, which is no longer a knot invariant, but is still a braid invariant.
\begin{definition}
	\label{torH2}
	 The extended colored HOMFLY-PT polynomials in the spectral framing are defined as 
	\begin{equation}
	H_R(T[Q,P];p)	:= A^{P|R|} \cdot \sum_{R_1 \vdash Q|R|} c^{R_1}_R\, q^{2\kappa_R \frac{P}{Q}}\, s_{R_1}(p).
	\end{equation}
\end{definition}

\begin{definition}
	The Ooguri-Vafa partition function for the torus knot $T[Q,P]$ is
	\begin{align} 
	\label{eq:ov-partition-function}
	Z(p, \pb) := \sum_R  H_R(T[Q,P];p) \, s_R(\pb).
	\end{align}
\end{definition}
Note that we use the extended HOMFLY-PT polynomials in the definition of the partition function. The original Ooguri-Vafa partition function~\cite{OV} can be easily restored by the restriction to the topological locus~\eqref{toplocus}.

\begin{definition}
The topological	Ooguri-Vafa partition function for the torus knot  $T[Q,P]$  is defined by
\begin{equation}
Z(\pb) := Z(p^*,\pb)
\end{equation}
\end{definition}

Following~\cite{MMS13} we rewrite the Ooguri-Vafa partition function in terms of the cut-and-join operator. 
\begin{definition}
	The second cut-and-join operator $\hat{W}_2(p)$ in rescaled Miwa variables $p_k$, $k=1,2,\dots$, is defined as follows:
	\begin{equation}
	\label{W2}
	\hat{W}_2(p) = \frac{1}{2}\sum_{a,b=1}^\infty
	\left((a+b)p_ap_b\frac{\partial}{\partial p_{a+b}} +
	abp_{a+b}\frac{\partial^2}{\partial p_a\partial p_b}\right)
	\end{equation}
\end{definition}
Under the boson-fermion correspondence it corresponds to the operator $\mathcal{F}_2$, defined above. Its a very well-known property~\cite{Okounkov}, see also~\cite{MirMorNat}, that the Schur polynomials $s_R(p)$ are the eigenfunctions of this operator:
\begin{align}
\label{kappa}
\hat W_2(p) \, s_R(p) = \kappa_R \, s_R(p), \quad \kappa_R = \sum_{(i,j)\in R} (i-j).
\end{align}

\begin{lemma}[\cite{MMS13}] \label{lem:w2}
	The substitution $\hbar= 2\log q$ allows to rewrite the Ooguri-Vafa partition function for the torus knot $T[Q,P]$ in terms of cut-and-join operators:
	\begin{align} \label{eq:ovw2}
	Z(p, \pb)  = \exp \lb \hbar \frac{P}{Q} \hat{W}_2(p) \rb
	\exp \lb \sum_{k=1}^\infty \frac{p_{k Q} \pb_k A^{kP}}{k} \rb .
	\end{align}
\end{lemma}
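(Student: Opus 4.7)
My plan is to show that applying the exponential cut-and-join operator to the source term on the right-hand side of~\eqref{eq:ovw2} reproduces the Rosso-Jones expansion of $Z(p,\bar p)$, by first rewriting that source term as an explicit double Schur expansion and then acting diagonally in the Schur basis.

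First, I apply the Cauchy identity $\exp\lb \sum_{k\geq 1} u_k v_k/k \rb = \sum_R s_R(u)\, s_R(v)$ to the source exponential with $u_k := p_{kQ}$ and $v_k := \pb_k A^{kP}$. The weighted-homogeneity identity $s_R(A^P\pb_1, A^{2P}\pb_2, \dots) = A^{P|R|}\, s_R(\pb)$, which follows immediately from the power-sum expansion of Schur polynomials, then lets me factor out the $A$-dependence and obtain
\begin{align*}
\exp\lb \sum_{k=1}^\infty \frac{p_{kQ}\,\pb_k A^{kP}}{k} \rb = \sum_R A^{P|R|}\, s_R(\pb)\, s_R(p_Q, p_{2Q}, p_{3Q}, \dots).
\end{align*}

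Second, the defining relation~\eqref{adams} for the Adams coefficients $c^{R_1}_R$ allows me to expand $s_R(p_Q, p_{2Q}, \dots) = \sum_{R_1 \vdash Q|R|} c^{R_1}_R\, s_{R_1}(p)$, yielding an explicit bilinear combination of $s_R(\pb)$ and $s_{R_1}(p)$.

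Third, I apply $\exp(\hbar (P/Q)\hat W_2(p))$. Since $\hat W_2(p)$ is diagonal in the Schur basis by~\eqref{kappa}, each factor $s_{R_1}(p)$ is multiplied by $e^{\hbar P \kappa_{R_1}/Q} = q^{2\kappa_{R_1} P/Q}$, using $\hbar = 2\log q$. Substituting back and comparing with Definition~\ref{torH2} of the extended HOMFLY-PT polynomials together with the defining sum~\eqref{eq:ov-partition-function} identifies the right-hand side of~\eqref{eq:ovw2} with $Z(p,\pb)$.

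The argument is essentially three lines of algebra and presents no real obstacle; the only care required is bookkeeping the two independent Schur expansions, in $p$ and in $\pb$, and tracking that after the Adams expansion the $\hat W_2$-eigenvalue refers to the inner label $R_1$ produced by~\eqref{adams} rather than the outer label $R$.
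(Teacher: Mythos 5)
Your proof is correct and is essentially the paper's own argument run in reverse: the paper starts from $Z(p,\pb)$, substitutes Definition~\ref{torH2}, replaces the $q$-power by the operator $\exp\lb\hbar\frac{P}{Q}\hat W_2(p)\rb$ via the eigenvalue property~\eqref{kappa}, resums the Adams expansion~\eqref{adams} into $s_R(\{p_{kQ}\})$, and finishes with the Cauchy identity, whereas you apply the same three ingredients in the opposite order starting from the right-hand side. The only point worth noting is that your computation correctly yields the eigenvalue $\kappa_{R_1}$ attached to the inner label, which is what Definition~\ref{torH2} must mean even though it is typeset there with the subscript $R$.
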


\begin{proof}
	Let us substitute the formula for the HOMFLY-PT polynomials from Definition~\ref{torH2} in the Ooguri-Vafa partition function~\eqref{eq:ov-partition-function} and use the property~\eqref{kappa} of the cut-and-join operator. We have:
	\begin{align} 
	Z(p, \pb) & = \sum_R s_R(\pb)  A^{P|R|} \cdot \sum_{R_1 \vdash Q|R|} c^{R_1}_R\, q^{2\kappa_R \frac{P}{Q}}\, s_{R_1}(p)  
	\\ \notag
	& = \sum_R s_R(\pb)  A^{P|R|} \cdot \Big( q^{\hat{W}_2(p)\, \frac{P}{Q} } \cdot \sum_{R_1 \vdash Q|R|} c^{R_1}_R \, s_{R_1}(p) \Big) .
	\end{align}
%	Now we omit restrictions $p_i=p_i^*, \ \forall i$ and extend Ooguri-Vafa partition function $Z(\pb)$ to a new set of variables $\{p_i\}$ addition to $\{\pb_i\}$ and get $Z(p,\pb)$. Topological locus can be easily obtained by the condition $p_i=p_i^*, \ \forall i$. 
%	
	Then we use formula~\eqref{adams}:
	\begin{align} 
	Z(p,\pb)  & = \sum_R s_R(\pb)  \, A^{P|R|} \, q^{\hat{W}_2(p)\, \frac{P}{Q} } s_{R}(\{p_{kQ}\})   
	\\ \notag  
	& =q^{\hat{W}_2(p)\, \frac{P}{Q} } \sum_R s_R(\pb)  A^{P|R|} \,  s_{R}(\{p_{kQ}\})   
	\\ \notag
	& =  \exp \lb \hbar \frac{P}{Q} \hat{W}_2(p) \rb \exp \lb \sum_{k=1}^\infty \frac{p_{k Q} \pb_k A^{kP}}{k} \rb .
	\end{align}
\end{proof}

Our next step is to represent the Ooguri-Vafa partition function via an action of some operators on  the infinite wedge space. It is our starting point to reveal the combinatorial structure of colored HOMFLY-PT polynomials for torus knots.

\begin{theorem}\label{th:ov}
The Ooguri-Vafa partition function has the following representation  in terms of a semi-infinite wedge average with operators $\Fc_2$ and $\alpha_k$  introduced in the Definitions \ref{f2op} and \ref{alop}:
\begin{align} \label{eq:ov-rewritten}
Z(p, \pb)  = 
\left\langle
\exp \lb \sum_{j=1}^\infty \frac{\alpha_{-j} p_{j}}{j} \rb
\exp \lb \hbar \frac{P}{Q} \Fc_2 \rb
\exp \lb \sum_{i=1}^\infty \frac{\alpha_{iQ} (\pb_i \cdot QA^{iP})}{iQ} \rb
\right\rangle.
\end{align}
\end{theorem}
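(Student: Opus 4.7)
The plan is to apply the standard boson--fermion correspondence to the formula provided by Lemma~\ref{lem:w2}. Recall that this correspondence is the linear isomorphism $\cV_0 \cong \mathbb{C}[[p_1,p_2,\ldots]]$ identifying $v_\lambda$ with the Schur polynomial $s_\lambda(p)$, under which $\alpha_{-j}$ (resp.\ $\alpha_j$) with $j>0$ is sent to multiplication by $p_j$ (resp.\ to the derivation $j\,\p_{p_j}$). Combined with~\eqref{kappa} and the fact that $\Fc_2$ acts diagonally in the $v_\lambda$-basis with eigenvalue $\kappa_\lambda$ (cf.~Definition~\ref{f2op}), this implies that the cut-and-join operator $\hat{W}_2(p)$ corresponds to $\Fc_2$, so that $\exp(\hbar\frac{P}{Q}\hat{W}_2(p))$ translates into $\exp(\hbar\frac{P}{Q}\Fc_2)$ on the fermionic side.

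Next I would identify the fermionic avatar of the right-most factor $\exp(\sum_k \frac{p_{kQ}\pb_k A^{kP}}{k})$ appearing in Lemma~\ref{lem:w2}. Under the correspondence above, this is the image of the state $\exp(\sum_i \frac{\pb_i A^{iP}}{i}\,\alpha_{-iQ})|0\rangle$; the verification is a one-line Gaussian computation using the Baker--Campbell--Hausdorff formula and the central commutator~\eqref{eq:alphacomm}. Only pairs $(\alpha_j,\alpha_{-iQ})$ with $j=iQ$ contribute, and the overall factor $Q$ in the numerator of the rightmost exponential of~\eqref{eq:ov-rewritten} is exactly what is needed for the resulting pairing to reproduce $p_{iQ}\pb_i A^{iP}/i$ in the exponent.

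Putting these two ingredients together, the right-hand side of Lemma~\ref{lem:w2} becomes a vacuum expectation value of three factors: the exponential of $\alpha$-operators pairing the $p_j$-variables with fermionic modes (implementing the isomorphism above), the middle evolution operator $\exp(\hbar\frac{P}{Q}\Fc_2)$, and the creation exponential extracted above acting on $|0\rangle$. After a final reorganization of factors, this is precisely~\eqref{eq:ov-rewritten}. I do not anticipate any substantive obstacle; the whole argument is a direct unpacking of the standard boson--fermion dictionary applied to Lemma~\ref{lem:w2}, with the only care being the normalization conventions for the $\alpha$-operators fixed in Section~\ref{sec:sinfw}.
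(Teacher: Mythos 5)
Your proposal is correct and follows essentially the same route as the paper: both arguments combine the Gaussian pairing of exponentials of $\alpha$-operators (via~\eqref{eq:alphacomm} and~\eqref{eq:alphavac}), under which only the modes with index divisible by $Q$ pair nontrivially and the factor $Q$ in the numerator restores the coefficient $p_{iQ}\pb_i A^{iP}/i$, together with the boson--fermion correspondence identifying $\hat W_2(p)$ with $\Fc_2$. The only differences are cosmetic: you phrase the left exponential directly as the bosonization map rather than first writing the seed $\exp\lb\sum_k p_{kQ}\pb_k A^{kP}/k\rb$ as a two-point $\alpha$-correlator and then inserting $e^{\hbar\frac{P}{Q}\Fc_2}$, and your placement of the signs (with $\alpha_{-iQ}$ creating on $|0\rangle$ and the positive modes against $\langle 0|$) is the one that is internally consistent with~\eqref{eq:alphavac}.
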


\begin{proof} Consider the expression given in Lemma~\ref{lem:w2}. The cut-and-join operator is applied to the function that can be represented as
\begin{align}
\exp \lb \sum_{k=1}^\infty \frac{p_{k  Q} \pb_kA^{kP}}{k} \rb =
\left\langle
\exp \lb \sum_{j=1}^\infty \frac{\alpha_{-j} p_{j Q} }{j} \rb
\exp \lb \sum_{i=1}^\infty \frac{\alpha_i \pb_iA^{iP}}{i} \rb
\right\rangle.
\end{align}
This representation follows from the commutation relations for the $\alpha$-operators~\eqref{eq:alphacomm} and their action on the vacuum and covacuum~\eqref{eq:alphavac}.

Observe, however, that we can represent it using a slightly different average
(using the $\alpha$-operators with the indices divisible by $Q$ and rescaling one of the denominators):
\begin{align}
\left\langle
\exp \lb \sum_{j=1}^\infty \frac{\alpha_{-j Q} p_{j Q}}{j Q} \rb
\exp \lb \sum_{i=1}^\infty \frac{\alpha_{i Q} \pb_iA^{iP}}{i} \rb
\right\rangle.
\end{align}
The sum in the left exponent now runs over $p$ with $Q$-divisible indices.
However, we can safely substitute it for the sum of all $p_j$'s -- thanks
to the commutation relations between $\alpha$'s the operators with the indices non divisible by $Q$ won't contribute. Thus, the OV partition function becomes
\begin{align} \label{eq:ov-rewritten1}
Z(p, \pb) & = \exp \lb \hbar \frac{P}{Q} \hat{W}_2(p) \rb
\left\langle
\exp \lb \sum_{j=1}^\infty \frac{\alpha_{-j} p_{j}}{j} \rb
\exp \lb \sum_{i=1}^\infty \frac{\alpha_{i Q} (\pb_i \cdot QA^{iP})}{iQ} \rb
\right\rangle \\ \notag
& =
\left\langle
\exp \lb \sum_{j=1}^\infty \frac{\alpha_{-j} p_{j}}{j} \rb
\exp \lb \hbar \frac{P}{Q} \Fc_2 \rb
\exp \lb \sum_{i=1}^\infty \frac{\alpha_{iQ} (\pb_i \cdot QA^{iP})}{iQ} \rb
\right\rangle,
\end{align}
where in the last equality we used the boson-fermion correspondence~\cite{JimboMiwa}
to bring the cut-and-join operator inside the average (where it becomes the operator $\Fc_2$).
\end{proof}
 
The expression~\eqref{eq:ov-rewritten} is already very close to a generating function
for double Hurwitz numbers -- except for a peculiar sum in the right exponential.
We can mend this by introducing \textit{additional parameters} $\pb_{i/Q}$, with
fractional indices.

\begin{definition}\label{def:extov}
The extended Ooguri-Vafa partition function is
 \begin{align} \label{eq:extended-partition-function}
 Z^{\ext}(p, \pb) &=
 \left\langle
 \exp \lb \sum_{j=1}^\infty \frac{\alpha_{-j} p_{j}}{j} \rb
 \exp \lb \hbar \frac{P}{Q} \Fc_2 \rb
 \exp \lb \sum_{i=1}^\infty \frac{\alpha_{i} (\pb_{i/Q} QA^{iP/Q})}{i} \rb
 \right\rangle
 \end{align}
\end{definition}
Note that $Z(p,\pb)$ is recovered from $Z^{\ext}(p, \pb)$ by keeping only $\pb_{i/Q}$ with integer indices.

Let us introduce the following useful piece of notation:
\begin{notation}
\begin{align}
\pt_i &:= \pb_{i/Q} QA^{iP/Q}, \quad i\geq 1, \\
u &:= \dfrac{P}{Q}\hbar,\\
%% q &:= \exp(\hbar/2),\\
\zeta(z)&:=e^{z/2}-e^{-z/2},\\
b &:= \dfrac{P}{Q},\\
a &:= A^2.
\end{align}
\end{notation}

Recall that we are interested in the restriction of $Z^{\ext}(p,\pb)$ to the topological locus, i.e. to $p=p^*$, where
\begin{align*}
p^*_i = \frac{A^i - A^{-i}}{e^{\hbar i/2} - e^{-\hbar i/2}} = \frac{A^i - A^{-i}}{\zeta(iuQ/P)}.
\end{align*}
Then, substituting new notations in Definition~\ref{def:extov} we obtain:
\begin{align}
Z^{\ext}(\pt) = 
\left\langle
\exp \lb \sum_{i=1}^\infty \frac{\alpha_{i} p^*_i}{i} \rb
\exp \lb u \Fc_2 \rb
%	e^{u\Fc_2}
\exp \lb \sum_{j=1}^\infty \frac{\alpha_{-j} \pt_{j}}{j} \rb
\right\rangle.
\end{align}
%
%\bigskip
%This function is now so similar to double Hurwitz numbers partition function,
%that we need to stress the differences:
%\begin{itemize}
%	\item instead of $\pb_{i/Q}$, we have $\pb_{i/Q} \cdot Q$ coupled to $\alpha_i$. This tiny
%	change is really essential, since it cannot be expressed on the level of $x$'s
%	(where $\pb_{i/Q} = x^{i/Q}$) -- hence spectral curves of these problems should really be different;
%	\item $\hbar$ is rescaled to $\hbar P/Q$ w.r.t usual double Hurwitz setting;
%	\item parameters $p_i$ live on ``topological locus''
%	\begin{align}
%		p_i = \frac{A^i - A^{-i}}{e^{\hbar i} - e^{-\hbar i}},
%	\end{align}
%	whose definition contains $\hbar$, interfering with genus-expansion of partition function.
%\end{itemize}
%Since this function is very natural from semi-infinite wedge point of view,
%we may expect that it is this, \textit{extended} function, that satisfies
%spectral curve topological recursion and has other nice properties.
%

Let us expand $\Ze(\pt)$ in $\pb$. For an integer partition $\mu=(\mu_1\geq\dots\geq\mu_k)$  denote
\begin{equation} \label{eq:kmu-def}
K_\mu(u) := Q^{-n}\dfrac{\p}{\p\pb_{\mu_1/Q}}\dots \dfrac{\p}{\p\pb_{\mu_k/Q}}\Ze(\pb)\bigg|_{\pb=0},
\end{equation}
i.e. up to a factor equal to the order of the automorphism group of $\mu$
and a multiple of $Q$, $K_\mu(u)$ is the coefficient in front of the monomial $\pb_{\mu_1/Q}\dots \pb_{\mu_k/Q}$ in the expansion of $\Ze(\pb)$.

% \fixme{End to RJ -> semi-infinite wedge section}

\section{The BEM spectral curve}\label{sec:bemcurve}
\subsection{Spectral curve}
Brini, Eynard and Mari\~no introduced in~\cite{BEM11} the following spectral curve associated to the torus knot $T[Q,P]$:
\begin{align} \label{eq:bem-curve}
  %% v(U) &= \log \left ( 1 - A^{P/Q+1} U\right ) - \log \left ( 1 - A^{P/Q-1} U\right ),\\ \notag
  x(U) &= Q \log U + P \log \left ( 1 - A^{P/Q+1} U\right ) - P \log \left ( 1 - A^{P/Q-1} U\right )\\ \notag
  y(U) &= 
  %% \gamma \log U + \dfrac{P\gamma + 1}{Q} v(U) = 
  \dfrac{\gamma}{Q} x(U)
  + \dfrac{1}{Q}\log \left ( 1 - A^{P/Q+1} U\right ) - \dfrac{1}{Q} \log \left ( 1 - A^{P/Q-1} U\right )
\end{align}
Here the constant $\gamma$ is chosen in such a way that
\begin{align}
\left(
\begin{array}{cc}
Q & P \\
\gamma & \delta
\end{array}
\right)
\in \mathrm{Sl}(2,\mathbb{Z}).
\end{align}
The functions $e^x$ and $e^y$ are meromorphic on the curve, and $U$ is a global coordinate (the curve has genus zero).
The Bergman kernel is the unique Bergman kernel for a genus zero curve and is equal to $$\frac{d U_1 dU_2}{(U_1 - U_2)^2}.$$
Denote
\begin{align} \label{eq:bem-x-and-lambda}
X&:= \exp(x)=U^Q \left (\dfrac{ 1 - A^{P/Q+1} U}{ 1 - A^{P/Q-1} U}\right )^P,
\\ \notag
\Lambda &:= X^{1/Q} = U \left (\dfrac{ 1 - A^{P/Q+1} U}{ 1 - A^{P/Q-1} U}\right )^{P/Q}.
\end{align}
The function $x(U)$ has two critical points $u_0 \pm \Delta u$ with
\begin{align}
u_0 = & \ \frac{1}{2 A^{b+1}} \br{1 + A^2 + b(A^2 - 1)} \\ \notag
(\Delta u)^2 = & \ \frac{1}{4 A^{2 b + 2}} \br{A^2 - 1}\br{(b+1)^2 A^2 - (b-1)^2}
\end{align}
%% \fixme{Should we move the definition of $u_0$ and $\Delta u$ to the section about BEM curve?}
%
It is convenient to use in the rest of the paper the variable 
\begin{align}
\lambda := \log \Lambda = \frac 1Q \log X.
\end{align}
%
%% In fact, correlator \eqref{Kcorr} is not precisely the correct one, as it gives the coefficient in front of a monomial in $\tilde{p}$ instead of a monomial in $\bar{p}$. The difference is in a factor of 
%% \begin{equation}
%% \prod_{i=1}^{l(\mu)}QA^{\mu_iP/Q}
%% \end{equation}
%% Taking into account also the difference between $u$ and $\hbar$ we see that
%% \begin{equation}
%% m^2[\hbar^{-1}\pb_{\mu_1}\dots \pb_{\mu_k}]\Ze(\pb)=(-Q)\times Q^2[\La^{-m}] \dfrac{dy}{dx}
%% \end{equation}
%% %\begin{align}
%% %&= \dfrac{P\gamma+1}{PQ} - \dfrac{1}{PQ}\, \dfrac{1}{1+\frac{P}{Q}U\frac{dv(U)}{dU}}\\
%% %&= \dfrac{P\gamma+1}{PQ} - \dfrac{1}{PQ} \dfrac{\left ( 1 - A^{P/Q+1} U\right )\left ( 1 - A^{P/Q-1} U\right %% )}{\left ( 1 - A^{P/Q+1} U\right )\left ( 1 - A^{P/Q-1} U\right ) + \frac{P}{Q}\,U\left ( A^{P/Q-1} - A^{P/Q+1} %% \right )}
%% %\end{align}
%
The correlation multidifferentials $\omega_{g,n}(U_1, \dots, U_n)$ on the curve are defined through
the topological recursion procedure~\cite{EO-2}, see also~\cite{AMM-1,AMM-2,CEO}.
The claim of Brini, Eynard, and Mari\~no is that the expansion of $\omega_{g,n}$ at $X=\infty$ 
is related to $Z$ in the following way:
\begin{align}
\omega_{g,n} & \sim Q^{-n} \sum_{\mu_1 \dots \mu_n = 1}^\infty 
C^{(g)}_{Q\mu_1, \dots, Q\mu_n} \prod_{i = 1}^n (-\mu_i) \frac{d X_i}{X_i^{\mu_1 + 1}},
\end{align}
where similarity sign means that we drop all terms with non-integer exponents
of $X_i$ from expansion of $\omega_{g,n}$ at $X = \infty$, leaving
only terms with integer exponents of $X_i$.
The coefficients of these remaining terms are nothing but \textit{connected correlators}
for the partition function $Z$ \eqref{eq:ov-rewritten1}
\begin{align} 
C^{(g)}_{Q\mu_1, \dots, Q\mu_n} & := [\hbar^{2g-2+n}] \dfrac{\p}{\p\pb_{\mu_1}}\dots \dfrac{\p}{\p\pb_{\mu_n}}\log Z\bigg|_{\pb=0}.
\end{align}
Note that here we take derivatives with respect to $\pb$ and not $\pt$.

A natural conjectural refinement of their claim is the following extended relation of $Z^{ext}$ to the expansion of $\omega_{g,n}$ at $\Lambda=\infty$:
\begin{align}
\omega_{g,n} & = Q^{-n} \sum_{\mu_1 \dots \mu_n = 1}^\infty 
C^{(g)}_{\mu_1 \dots \mu_n} \prod_{i = 1}^n (-\mu_i) \frac{d \Lambda_i}{\Lambda_i^{\mu_i + 1}},
\end{align}
where $C^{(g)}_{\mu_1 \dots \mu_n}$ are now connected correlators
for the partition function $Z^{ext}$
\begin{align}  \label{eq:definitionCgmu}
C^{(g)}_{\mu_1 \dots \mu_n} & := [\hbar^{2g-2+n}] \dfrac{\p}{\p\pb_{\mu_1/Q}}\dots \dfrac{\p}{\p\pb_{\mu_n/Q}}\log Z^{ext}\bigg|_{\pb=0}
\end{align}
Note that this conjectural refinement of the result of Brini-Eynard-Mari\~no is very natural in the context of the results of the present paper.

The \textit{free-energies} $F_{g,n}$ are certain generating functions
for the connected correlators and are defined by their expansion
at $\Lambda=\infty$:
\begin{align}
F_{g,n} := \sum_{\mu_1 \dots \mu_n = 1}^\infty 
C^{(g)}_{\mu_1 \dots \mu_n} \prod_{i = 1}^n \frac{1}{\Lambda_i^{\mu_i}}.
\end{align}
They are related to the primitives of $\omega_{g,n}$ with appropriately chosen integration
constants.

\subsection{The $\xi$-functions through Jacobi polynomials} \label{sec:xiJac}
%% \fixme{Essentially the section from the 2017\_07\_07 draft}

In this section we compute the so-called $\xi$-functions for the BEM spectral curve. The $\xi$-functions play
a critical role in the topological recursion procedure. Namely, the correlation (multi)differentials $\omega_{g,n}$, $2g-2+n>0$, are
expressed as finite polylinear combinations of the differentials of $\xi^a$, the index $a$ labels the critical points of $x$, and their derivatives (\cite{Eynard}, see also~\cite{DOSS,LPSZ}):
\begin{align}
\omega_{g,n} (U_1,\dots,U_n) = \sum_{\substack{a_1\dots a_n \\ d_1 \dots d_n}} c_{d_1\dots d_n}^{a_1 \dots a_n} \prod_{i=1}^n d\br{\frac{d}{d\lambda_i}}^{d_i}  \xi^{a_i}(U_i)
\end{align}
In the case of a genus $0$ spectral curve with a global coordinate $U$, the functions $\xi^a(U)$ are defined as some functions that form a convenient basis in the space
spanned by $1/(u_a - U)$, where $u_a$ are critical points of $\lambda(U)$ and $U$ is the global coordinate
on the spectral curve.

We show that we can choose these basis functions to have 
the following expansion at $\Lambda = \infty$:
\begin{align} \label{eq:xi-functions-expansion}
\xi^1(U) &= \sum_{m = 1}^\infty (-1)^m A^{(b-1)m} \mathcal{P}_{m-1}^{(m(b-1), 1)}(1 - 2 A^2) \Lambda^{-m} \\ \notag
\xi^2(U) &= \sum_{m = 1}^\infty (-1)^m A^{(b-1)m} \mathcal{P}_{m-2}^{(m(b-1) + 1, 1)}(1 - 2 A^2) \Lambda^{-m},
\end{align}
where $\mathcal{P}_n^{(\alpha,\beta)}(x)$ are the Jacobi polynomials (we recall their definition and needed
properties in Section \ref{sec:def-jacobi} below). Namely, we prove the following

\begin{theorem}\label{thm:xi-functions} We have:
	\begin{align} \label{eq:xi-functions-nonperturbative}
	\xi^1(U) = & \ \dfrac{u_0}{(\Delta u)^2 A^{b+1}} \txi_0(U) + \frac{1}{(\Delta u)^2 A^{b+1}} \txi_1(U) \\ \notag
	\xi^2(U) = & \ -\dfrac{1}{(\Delta u)^2 A^{2 b + 2}} \txi_0(U),
	\end{align}
	where
	\begin{align}
	\txi_0(U) = \frac{1}{1 - \dfrac{(U - u_0)^2}{(\Delta u)^2}}, \ \ \
	\txi_1(U) = \frac{U - u_0}{1 - \dfrac{(U - u_0)^2}{(\Delta u)^2}}
	\end{align}
\end{theorem}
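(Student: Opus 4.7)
The plan is to verify \eqref{eq:xi-functions-nonperturbative} by computing the $\Lambda^{-1}$-expansions of the stated rational expressions at the branch $U=\infty$ of $\Lambda=\infty$ and matching them with the defining Jacobi polynomial expansions \eqref{eq:xi-functions-expansion}. The key observation is that $\txi_0$ and $\txi_1$ span the two-dimensional space of rational functions on the curve whose only poles are simple and located at the two critical points $u_0\pm\Delta u$ of $\lambda(U)$, and which vanish at $\Lambda=\infty$. By the general description of $\xi$-functions in~\cite{Eynard,DOSS,LPSZ}, any $\xi^i$ lies in this space, so it only remains to pin down the linear-combination coefficients and verify the expansion.

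The principal calculation is a Lagrange--B\"urmann inversion. Setting $V=1/U$, the relation $1/\Lambda = V\psi(V)$ with $\psi(V)=\left(\frac{V-A^{b-1}}{V-A^{b+1}}\right)^{b}$ and $\psi(0)=A^{-2b}$ yields the extraction formula
\begin{align*}
[\Lambda^{-m}]\,g(V)=\frac{1}{m}[V^{m-1}]\,g'(V)\left(\frac{V-A^{b+1}}{V-A^{b-1}}\right)^{mb}
\end{align*}
for any $g$ analytic at $V=0$. Applied to $g_j(V)=\txi_j(1/V)$, which are explicit rational functions of $V$ with partial-fraction decomposition determined by the roots $1/u_\pm$, this produces the expansion coefficients as a finite sum once the ``Jacobian'' factor is rewritten as $A^{2mb}\left(\frac{1-V/A^{b+1}}{1-V/A^{b-1}}\right)^{mb}$ and expanded by the binomial series. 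The result is a terminating ${}_2F_1$ hypergeometric function in $A^2$.

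The next step is to identify these ${}_2F_1$ sums with Jacobi polynomials via the classical representation
\begin{align*}
\mathcal{P}_n^{(\alpha,\beta)}(1-2t)=\binom{n+\alpha}{n}\,{}_2F_1(-n,\,n+\alpha+\beta+1;\,\alpha+1;\,t),
\end{align*}
taking parameters $(\alpha,\beta,n)=(m(b-1),1,m-1)$ for $\xi^1$ and $(m(b-1)+1,1,m-2)$ for $\xi^2$. The prefactors $u_0/((\Delta u)^2A^{b+1})$, $1/((\Delta u)^2A^{b+1})$, and $-1/((\Delta u)^2A^{2b+2})$ are fixed by a direct match at small $m$, using the explicit formulas for $u_0$ and $(\Delta u)^2$ from the preceding subsection; once these coefficients are matched, uniqueness inside the two-dimensional space of admissible rational functions propagates the agreement to all $m$.

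The main obstacle is this hypergeometric identification. In contrast with the Hurwitz-theoretic examples recalled in the introduction (where the non-polynomial factor has the form $m^{\lfloor m/r\rfloor}/\lfloor m/r\rfloor!$ or a single binomial, and is read off the Lagrange coefficient directly), here both the degree $n$ and the upper parameter $\alpha$ of the Jacobi polynomial depend linearly on $m$, so no standard single-parameter generating-function identity applies. The matching must be done coefficient-by-coefficient, with careful bookkeeping of the Pochhammer symbols, possibly streamlined by a Pfaff or Euler transformation of ${}_2F_1$; as suggested in the acknowledgments, a reinterpretation via Meixner polynomials provides an alternative route.
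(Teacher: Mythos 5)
Your strategy is essentially the one the paper follows: extract $[\Lambda^{-m}]$ of the rational building blocks $\txi_0,\txi_1$ by a residue computation in the $U$-plane, expand the algebraic factor $\bigl(\tfrac{1-A^{b+1}U}{1-A^{b-1}U}\bigr)^{mb}$ by the binomial series, and recognize Jacobi polynomials; the paper implements the coefficient extraction as a contour integral followed by integration by parts (so that the derivative lands on $\Lambda^{\mu}$ and one is left with the total differential of $U^{\mu-1}(1-A^{b+1}U)^{\mu b-1}(1-A^{b-1}U)^{-\mu b-1}$), which is the same computation as your Lagrange--B\"urmann inversion. Two remarks. First, the step you single out as the main obstacle --- identifying the resulting terminating ${}_2F_1$ with $\jp{m-1}{m(b-1)}{1}(1-2A^2)$ and $\jp{m-2}{m(b-1)+1}{1}(1-2A^2)$ --- dissolves if you bypass the hypergeometric normal form altogether: the Cauchy product of the two binomial series in $(A_+U)^{-1}$ and $(A_-U)^{-1}$ is, term by term, the classical finite-sum (convolution) representation of $\mathcal{P}_m^{(b\mu-y-m,\,2y-1)}(1-2A^2)$ evaluated at the point where the two half-arguments become $-A^2$ and $1-A^2$; this is exactly the paper's Lemma~\ref{lem:shifted-genfunc-expansion}, and no Pfaff or Euler transformation, nor coefficient-by-coefficient bookkeeping, is required. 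Second, be careful with your closing appeal to uniqueness in the two-dimensional span of $\txi_0,\txi_1$: the series in \eqref{eq:xi-functions-expansion} are the defining data of $\xi^1,\xi^2$ on one side of the identity, and the assertion that those Jacobi series sum to elements of that span is precisely the content of the theorem, so it cannot be imported from the general theory of $\xi$-functions and then combined with a small-$m$ match --- the expansion has to be verified for every $m$. Since your Lagrange--B\"urmann computation does exactly that, the uniqueness step is redundant rather than load-bearing, but as written it reads as circular and should be dropped.
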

In particular, it is clear that they indeed form a basis in
the space spanned by the functions $1/(u_0 \pm \Delta u - U)$.

\begin{notation} \label{not:xi-coeffs}
It is convenient to introduce the following piece of notation
for coefficients of expansion of $\xi^i(U)$ in $\Lambda$:
\begin{align}
	\xi^1_m &:= [\Lambda^{-m}] \xi^1(U) = (-1)^m A^{(b-1)m} \mathcal{P}_{m-1}^{(m(b-1), 1)}(1 - 2 A^2), \\ \notag
	\xi^2_m &:= [\Lambda^{-m}] \xi^2(U) = (-1)^m A^{(b-1)m} \mathcal{P}_{m-2}^{(m(b-1) + 1, 1)}(1 - 2 A^2).
\end{align}
\end{notation}

\begin{proof}
	%% \fixme{no proof}
	Let us calculate the expansions of $\txi_k(U)$ at $\Lambda = \infty$.
	Introducing scaled and shifted global coordinate
	\begin{align}
	\tilde{U} = \frac{U - u_0}{\Delta u}
	\end{align}
	we see that
	\begin{align}
	\frac{d}{d\lambda} = \dfrac{U (1 - A^{b+1} U) (1 - A^{b-1} U)}{A^{2 b} (\tilde{U}^2-1)(\Delta u)^2} \frac{d}{dU}.
	\end{align}
	Therefore we can express $\txi_k$ as
	\begin{align}
	\txi_k = \frac{(-1) A^{2 b} \br{\Delta u}^2}{U (1 - A^{b+1} U) (1 - A^{b-1} U)} \frac{d}{d \lambda} \br{\frac{\br{U-u_0}^{k+1}}{k+1}},\quad k = 0,1.
	\end{align}
	The coefficients of the expansion of $\txi_k$ at $\Lambda = \infty$ can be computed by integration by parts and passing
	to the $U$-plane:
	\begin{align}
	& \bs{\Lambda^{-\mu}} \txi_k =   \frac{1}{2 \pi I} \oint_{\Lambda,\infty} \txi_k \Lambda^{\mu-1} d\Lambda
	\\ \notag
	& =  \frac{1}{2 \pi I} \oint_{\Lambda,\infty} \Lambda^{\mu-1} \frac{(-1) A^{2 b} \br{\Delta u}^2}{U (1 - A^{b+1} U) (1 - A^{b-1} U)} \Lambda \frac{d}{d \Lambda} \br{\frac{\br{U-u_0}^{k+1}}{k+1}} d\Lambda
	\\ \notag
	& =  \frac{A^{2 b} \br{\Delta u}^2}{2 \pi I} \oint_{U, \infty}
	\frac{\br{U-u_0}^{k+1}}{\br{k+1}}
	d\br{\frac{U^{\mu-1} \br{1 - A^{b+1} U}^{\mu b - 1}}
		{\br{1 - A^{b-1} U}^{\mu b + 1}}} 
	\end{align}
	Using Lemma \ref{lem:shifted-genfunc-expansion} below and differentiating the expansion termwise,
	we can easily calculate this residue to find
	\begin{align}
	\bs{\Lambda^{-\mu}} \txi_0 = & \ \br{\Delta u}^2 (-1) A^{2(b+1)} \bs{A^{(b-1)\mu} (-1)^\mu \mathcal{P}_{\mu-2}^{((b-1)\mu+1,1)} (1-2 A^2)};
	\\ \notag
	\bs{\Lambda^{-\mu}} \txi_1 = & \ \br{\Delta u}^2 A^{b+1} \bs{A^{(b-1)\mu} (-1)^\mu \mathcal{P}_{\mu-1}^{((b-1)\mu,1)} (1-2 A^2)}
	\\ \notag & \ + u_0 \br{\Delta u}^2 A^{2(b+1)} \bs{A^{(b-1)\mu} (-1)^\mu \mathcal{P}_{\mu-2}^{((b-1)\mu+1,1)} (1-2 A^2)},
	\end{align}
	which immediately implies the statement of the theorem.
\end{proof}

\begin{lemma} \label{lem:shifted-genfunc-expansion} We have:
	\begin{align}
	\frac{(1 - A_+ U)^{b \mu - y}}{(1 - A_- U)^{b \mu + y}} = \frac{A^{2 b (\mu - y)}}{U^{2 y}} \sum_{m=0}^\infty \frac{(-1)^m}{U^m A_+^m} \mathcal{P}_m^{(b \mu - y - m, 2 y - 1)}(1 - 2 A^2).
	\end{align}
\end{lemma}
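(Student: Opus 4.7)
The plan is to prove the identity by expanding both sides as formal power series in $1/U$ and matching coefficients against a suitable power-series representation of the Jacobi polynomials. Throughout I set $A_+ := A^{b+1}$ and $A_- := A^{b-1}$ (so that $A_+/A_- = A^2$ and $A_+ A_- = A^{2b}$, in agreement with the BEM spectral curve). First I would extract the prefactor $A^{2b(\mu-y)}/U^{2y}$ on the left-hand side by writing $(1 - A_\pm U)^{\alpha} = (-A_\pm U)^{\alpha}\bigl(1 - (A_\pm U)^{-1}\bigr)^{\alpha}$ for $\alpha = b\mu - y$ and $\alpha = -(b\mu + y)$ respectively. This produces the overall factor $(-1)^{-2y} A_+^{b\mu - y}A_-^{-(b\mu+y)} U^{-2y}$ times a remaining quotient whose numerator and denominator are series in $1/U$ starting with $1$. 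Using $(-1)^{-2y} = 1$ and the elementary identity $A_+^{b\mu - y}A_-^{-(b\mu + y)} = A^{(b+1)(b\mu-y) - (b-1)(b\mu+y)} = A^{2b(\mu - y)}$, the prefactor matches the one in the claim.

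Next, I would expand the remaining quotient with the generalized binomial series. From
$$(1 - (A_+U)^{-1})^{b\mu - y} = \sum_{i\geq 0}(-1)^i\binom{b\mu-y}{i}\frac{1}{A_+^i U^i},\qquad (1 - (A_-U)^{-1})^{-(b\mu + y)} = \sum_{j\geq 0}\binom{b\mu+y+j-1}{j}\frac{1}{A_-^j U^j},$$
taking the Cauchy product and collecting the $U^{-m}$ coefficient using $A_+^{-(m-j)}A_-^{-j} = A^{2j}/A_+^m$ yields
$$[U^{-m}]\frac{(1 - (A_+U)^{-1})^{b\mu - y}}{(1 - (A_-U)^{-1})^{b\mu + y}} = \frac{(-1)^m}{A_+^m}\sum_{j=0}^m (-1)^j\binom{b\mu - y}{m-j}\binom{b\mu + y + j - 1}{j} A^{2j}.$$
Thus the lemma reduces to the purely algebraic identity
$$\mathcal{P}_m^{(b\mu - y - m,\,2y - 1)}(1 - 2A^2) = \sum_{j=0}^m (-1)^j\binom{b\mu - y}{m - j}\binom{b\mu + y + j - 1}{j} A^{2j}.$$

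Finally, I would verify this identity using the standard hypergeometric representation
$$\mathcal{P}_n^{(\alpha,\beta)}(1 - 2z) = \binom{n+\alpha}{n}\,{}_2F_1(-n,\,n+\alpha+\beta+1;\,\alpha+1;\,z),$$
specialized at $n = m$, $\alpha = b\mu - y - m$, $\beta = 2y - 1$, $z = A^2$. Extracting the coefficient of $A^{2j}$, applying $(-m)_j/j! = (-1)^j\binom{m}{j}$ and $(b\mu+y)_j/j! = \binom{b\mu+y+j-1}{j}$, and using the telescoping relation
$$\frac{\binom{b\mu - y}{m}}{(b\mu - y - m + 1)_j} = \frac{1}{j!\binom{m}{j}}\binom{b\mu - y}{m - j}$$
(which follows by recognizing $(b\mu-y-m+1)_j$ as a block of consecutive factors already contained in the numerator of $\binom{b\mu-y}{m}$), one recovers exactly the right-hand side above. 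The argument is essentially bookkeeping; the only mild obstacle is the interaction of the $m$-dependent shift $\alpha = b\mu - y - m$ with the Pochhammer symbol $(\alpha + 1)_j = (b\mu - y - m + 1)_j$ appearing in the hypergeometric expansion, which is precisely what the telescoping relation above resolves, producing the ``un-shifted'' binomial coefficient $\binom{b\mu - y}{m-j}$ on the combinatorial side of the identity.
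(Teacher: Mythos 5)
Your proof is correct and follows essentially the same route as the paper: factor out $(-A_\pm U)^{\text{exponent}}$ to produce the prefactor $A^{2b(\mu-y)}/U^{2y}$, expand both factors as binomial series in $1/U$, and identify the Cauchy-product coefficient of $U^{-m}$ with $\mathcal{P}_m^{(b\mu-y-m,\,2y-1)}(1-2A^2)$ via the hypergeometric definition. The only difference is that you spell out the coefficient-matching (the telescoping of $(b\mu-y-m+1)_j$ against $\binom{b\mu-y}{m}$) that the paper compresses into the phrase ``we use the definition of the Jacobi polynomials,'' and your bookkeeping there is accurate.
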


\begin{proof} It is a direct computation:
	\begin{align}
	& \frac{(1 - A_+ U)^{b \mu - y}}{(1 - A_- U)^{b \mu + y}}
	 = \frac{(- A_+ U)^{b\mu-y}}{(-A_- U)^{b\mu+y}}
	\frac{\br{1 - \frac{1}{A_+ U}}^{b\mu - y}}{\br{1 - \frac{1}{A_- U}}^{b\mu + y}}
	\\ \notag
	& = \frac{A^{2 b (\mu - y)}}{U^{2 y}} \bs{\sum_{k=0}^\infty \br{-\frac{1}{A_+ U}}^k \binomial{b \mu - y}{k}}
	\bs{\sum_{l=0}^\infty \br{-\frac{1}{A_- U}}^l \binomial{- b \mu - y}{l}}
	\\ \notag
	& = \frac{A^{2 b (\mu - y)}}{U^{2 y}} \sum_{m=0}^\infty \frac{(-1)^m}{U^m A_+^m} \mathcal{P}_m^{(b \mu - y - m, 2 y - 1)}(1 - 2 A^2),
	\end{align}
	where in the last transition we use the definition of the Jacobi polynomials.
\end{proof}

\section{Jacobi polynomials and their generating functions} \label{sec:jacobi}
In the subsequent sections we heavily use certain results related to Jacobi polynomials which we obtain in the present section.
\subsection{Definition of Jacobi polynomials} \label{sec:def-jacobi}
Let us start with recalling the well-known  definition of the Jacobi polynomials.
\begin{definition}\label{def:jacobi} The Jacobi polynomials are defined as follows:
	\begin{align}
		\mathcal{P}_n^{(\al,\be)}(z):=\dfrac{\Gamma(\al+n+1)}{\Gamma(\al+1)\Gamma(n+1)}\theF\left(-n,1+\al+\be+n;\al+1;\dfrac{1}{2}(1-z)\right),
	\end{align}
	where
	\begin{equation}
	\theF(a,b;c;x):=\sum_{s=0}^{\infty}\dfrac{(a)_s\, (b)_s}{(c)_s\; s!}x^s
	\end{equation}
	is the usual hypergeometric function, where
	\begin{equation}
	(z)_n:=\dfrac{\Gamma[z+n]}{\Gamma[z]}=\left\{\begin{array}{ll}
	z(z+1)\cdots(z+n-1),& n\in\mathbb{Z}_{>0}\\
	1, & n=0
	\end{array}\right.
	\end{equation}
	is the usual Pochhammer symbol (rising factorial).    
\end{definition}

Let us introduce the following piece of notation:
\begin{notation}
	\begin{equation}\label{eq:Jdef}
	J_m(\rho):=\mathcal{P}_{m}^{(\rho b-m-1,1)}(1-2a),
	\end{equation}
	where $\mathcal{P}_n^{\al,\be}(x)$ is the Jacobi polynomial defined above.
\end{notation}

The functions $J_m(\rho)$ are precisely the main object of our study in this section.

\subsection{Three-term relation}

We found a certain new relation for the particular Jacobi polynomials $J_m(\rho)$ introduced in the previous subsection. We will need this relation in what follows. We call it the \emph{three-term relation}:
\begin{proposition}\label{prop:JacobiThreeTerm}
	For $k\in\mathbb{Z}_{\geq 1}$
	\begin{equation}\label{eq:JacobiThreeTerm}
	J_k+\left(a+1+(a-1)\dfrac{\rho b}{k}\right)J_{k-1}+aJ_{k-2}=0
	\end{equation}
\end{proposition}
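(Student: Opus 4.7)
The plan is to derive a closed-form generating function for the sequence $J_m(\rho)$ and then read the recurrence off from a first-order linear ODE that this generating function satisfies.

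First, I specialize Lemma~\ref{lem:shifted-genfunc-expansion} at $y=1$, $\mu=\rho$; matching the arising $\mathcal{P}_m^{(\rho b-1-m,1)}(1-2a)$ with the definition of $J_m(\rho)$ gives
\begin{equation*}
\frac{(1-A_+U)^{\rho b - 1}}{(1-A_-U)^{\rho b + 1}} \;=\; \frac{A^{2b(\rho-1)}}{U^{2}}\sum_{m=0}^{\infty} \frac{(-1)^m}{U^m A_+^m}\, J_m(\rho).
\end{equation*}
Introducing the new variable $V := -1/(A_+U)$, the monomials $(-1)^m/(A_+^m U^m)$ collapse to $V^m$, while on the left one substitutes $A_+U = -1/V$ and $A_-U = -1/(aV)$ (using $A_-/A_+ = A^{-2} = a^{-1}$) and applies the identity $A_+^{2}\, A^{2b(\rho-1)} = a^{\,b\rho+1}$. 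A direct rewriting yields the closed form
\begin{equation*}
F(V) \;:=\; \sum_{m=0}^{\infty} J_m(\rho)\, V^m \;=\; \frac{(1+V)^{\rho b - 1}}{(1+aV)^{\rho b + 1}}.
\end{equation*}

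Second, logarithmic differentiation of this closed form gives
\begin{equation*}
\frac{F'(V)}{F(V)} \;=\; \frac{\rho b(1-a) - (1+a) - 2aV}{(1+V)(1+aV)},
\end{equation*}
so that $F$ satisfies
\begin{equation*}
(1+V)(1+aV)\, F'(V) \;+\; \bigl[(a+1) + (a-1)\rho b + 2aV\bigr]\, F(V) \;=\; 0
\end{equation*}
as a formal power series identity.

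Third, I extract the coefficient of $V^{k-1}$ on the left-hand side. Writing $(1+V)(1+aV) = 1+(a+1)V+aV^{2}$, the product with $F'$ contributes $k J_k + (a+1)(k-1)J_{k-1} + a(k-2)J_{k-2}$, while the bracket times $F$ contributes $[(a+1)+(a-1)\rho b]\, J_{k-1} + 2a\, J_{k-2}$. Collecting the terms proportional to $k$ yields
\begin{equation*}
k\bigl[J_k + (a+1)J_{k-1} + aJ_{k-2}\bigr] \;+\; (a-1)\rho b\, J_{k-1} \;=\; 0,
\end{equation*}
and dividing by $k$ gives precisely the claimed three-term relation. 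For $k=1$ one uses the convention $J_{-1}=0$, which is consistent with Definition~\ref{def:jacobi} since the normalization prefactor contains $1/\Gamma(0)$ at $n=-1$.

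The only nontrivial step is recognizing how Lemma~\ref{lem:shifted-genfunc-expansion} collapses to the simple generating function $F(V)=(1+V)^{\rho b-1}/(1+aV)^{\rho b+1}$; after that the argument is a routine ODE-to-recurrence translation, so no real obstacle is expected.
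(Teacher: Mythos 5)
Your proof is correct, and it takes a genuinely different route from the paper's. The paper proves the three-term relation by combining two classical contiguous relations for Jacobi polynomials (shifts in $n$ and in the parameter $\alpha$, quoted from Erd\'elyi et al.) in a short but somewhat opaque chain of substitutions; a second proof, via the Meixner-polynomial recurrence, is sketched in Remark~\ref{rem:MeixnerRem1}. You instead extract from Lemma~\ref{lem:shifted-genfunc-expansion} (at $y=1$, $\mu=\rho$, with $A_\pm=A^{b\pm1}$, $A_-/A_+=a^{-1}$) the ordinary generating function $\sum_m J_m(\rho)V^m=(1+V)^{\rho b-1}(1+aV)^{-\rho b-1}$ and read the recurrence off the first-order ODE it satisfies; I checked the change of variables, the prefactor identity $A_+^2A^{2b(\rho-1)}=a^{b\rho+1}$, the logarithmic derivative, and the coefficient extraction, and all are right, including the $k=1$ case where the generating-function bookkeeping automatically enforces $J_{-1}=0$. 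What your approach buys is self-containedness (no external contiguous relations needed, only the binomial expansion already used to prove Lemma~\ref{lem:shifted-genfunc-expansion}) and a transparent boundary case; it is also essentially a derivative of the generating function $\left(\frac{1-w}{1-wa}\right)^{\rho b}$ that the paper obtains independently in Proposition~\ref{prop:j12exprs}, so there is no circularity. The paper's route is shorter if one takes the classical relations as given.
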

\begin{proof}
	The following relations for Jacobi polynomials are well-known (see e.g. \cite{WTF}):
	\begin{align}%\label{eq:JacobiPRel10}
		%\jp{n-1}{\al}{\be} &= \jp{n}{\al}{\be-1}(x) - \jp{n}{\al-1}{\be}(x)\\
		\label{eq:JacobiPRel5}
		(n+1)\jp{n+1}{\al}{\be}(x) &= (n+\al+1)\jp{n}{\al}{\be}(x) - (n+\dfrac{\al}{2}+\dfrac{\be}{2}+1)(1-x)\jp{n}{\al+1}{\be}(x)\\ \label{eq:JacobiPRel8}
		(n+\be)\jp{n-1}{\al}{\be}(x) &= (n+\al+\be)\jp{n}{\al}{\be}(x) - (2n+\al+\be)\jp{n}{\al-1}{\be}(x)
	\end{align}
	In the remaining part of the proof we always put $x=1-2a$ in the arguments of the Jacobi polynomials and omit these arguments for brevity.
	
	Let us consider the expression in the LHS of \eqref{eq:JacobiThreeTerm}, and let us transform it with the help of \eqref{eq:JacobiPRel5} and \eqref{eq:JacobiPRel8}:
	\begin{align*}
		&\jp{k}{\rhb -k -1}{1}+\left(a+1+(a-1)\dfrac{\rho b}{k}\right)\jp{k-1}{\rhb -k}{1}+a\jp{k-2}{\rhb -k+1}{1}\\
		&\mathop{=}_{\eqref{eq:JacobiPRel5}} \dfrac{\rhb-1}{k}\jp{k-1}{\rhb -k-1}{1}+\dfrac{k-\rhb}{k}\jp{k-1}{\rhb -k}{1}+a\jp{k-2}{\rhb -k+1}{1}\\
		&\mathop{=}_{\eqref{eq:JacobiPRel5}} \dfrac{\rhb-1}{k}\jp{k-1}{\rhb -k-1}{1}+\dfrac{\rhb(1-\rhb)}{k(\rhb+k-1)}\jp{k-1}{\rhb -k}{1}+\dfrac{\rhb-1}{\rhb+k-1}\jp{k-2}{\rhb -k}{1}\\
		&\mathop{=}_{\eqref{eq:JacobiPRel8}} \left(\dfrac{\rhb-1}{k}+\dfrac{\rhb-1}{\rhb+k-1}\dfrac{1-\rhb-k}{k}\right)\jp{k-1}{\rhb -k-1}{1}\\ &\phantom{\mathop{=}_{\eqref{eq:JacobiPRel8}} }+ \left(\dfrac{\rhb(1-\rhb)}{k(\rhb+k-1)}+\dfrac{\rhb-1}{\rhb+k-1}\dfrac{\rhb}{k}\right)\jp{k-1}{\rhb -k}{1}\\
		&=0
	\end{align*}
\end{proof}

\begin{remark}[Meixner polynomials I] \label{rem:MeixnerRem1}
The specialization of the Jacobi polynomials that we are using in this paper has a natural interpretation in terms of the so-called Meixner polynomials~\cite[Section 9.10]{Koekoek}. The Meixner polynomials are defined as
\[
M_{n}(x,\beta,c):=\theF(-n,x;\beta;1-c^{-1}).
\] 
So, we see that the polynomials $J_m(\rho)$ that play the main role in our analysis can be expressed in terms of the Meixner polynomials as 
\begin{align*}
J_m(\rho) & = \mathcal{P}_{m}^{(\rho b-m-1,1)}(1-2a) = (-1)^m \mathcal{P}_{m}^{(1,\rho b-m-1)}(2a-1) 
\\
& =(-1)^m (m+1) \theF(-m,1+\rho b;2;1-a)
\\
& = (-1)^m (m+1) a^m \theF(-m,1-\rho b;2;1-a^{-1})
\\
& = (-1)^m a^m (m+1) M_m(\rho b-1, 2, a).
\end{align*}
The recurrence relation for the Meixner polynomials~\cite[Equation (9.10.3)]{Koekoek} implies that 
\begin{align*}
& a(k+1)M_k(\rho b-1, 2, a)
\\
& -\left((a-1)(\rho b-1)+k-1+a(k+1)\right) M_{k-1}(\rho b-1, 2, a)
\\
& +(k-1)M_{k-2}(\rho b -1, 2, a) =0,
\end{align*}
which is equivalent to~\eqref{eq:JacobiThreeTerm}. This way we obtain an alternative proof of Proposition~\ref{prop:JacobiThreeTerm}.
\end{remark}

%From Section \ref{sec:xiJac} recall that $\xi$-functions are related to Jacobi polynomials. 
%In order to prove Proposition \ref{prop:aop_rationality}, let us first prove the following
\subsection{Exponential generating functions for Jacobi polynomials}

%Before we prove it, let us first prove the following statement regarding $J_{m-1}(\rho)$ and $J_{m-2}(\rho)$:
In what follows we will need the following representation for the special Jacobi polynomials defined in~\eqref{eq:Jdef}:
\begin{proposition}\label{prop:j12exprs}
	For $m\in \mathbb{Z}_{\geq 1}$
	\begin{align}\label{eq:jm1expr}
		(-1)^m(1-a) \dfrac{\rho b}{m}J_{m-1}(\rho) &= [w^m]\exp\left(\sum_{i=1}^\infty \frac{a^i-1}{i}\cdot w^i \rho b\right)\\ \label{eq:jm2expr}
		(-1)^m(1-a) \dfrac{\rho b}{m}J_{m-2}(\rho) &= -\dfrac{m-1}{m}[w^{m-1}]\exp\left(\sum_{i=1}^\infty \frac{a^i-1}{i}\cdot w^i \rho b\right)
	\end{align}
\end{proposition}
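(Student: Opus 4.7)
The plan is to recognize the right-hand side as an explicit rational generating function, derive the linear recurrence it satisfies, and then match this recurrence to the one already proved for $J_m(\rho)$ in Proposition~\ref{prop:JacobiThreeTerm}. The first observation is that
\[
\sum_{i=1}^\infty \frac{a^i-1}{i}\,w^i \;=\; \log\frac{1-w}{1-aw},
\]
so the exponential on the right-hand side of \eqref{eq:jm1expr} equals $G(w) := \bigl(\tfrac{1-w}{1-aw}\bigr)^{\rho b}$. Set $g_m := [w^m]G(w)$. I would note at the outset that \eqref{eq:jm2expr} is a formal consequence of \eqref{eq:jm1expr}: applying \eqref{eq:jm1expr} with $m$ replaced by $m-1$ and multiplying through by $-(m-1)/m$ yields exactly the right-hand side of \eqref{eq:jm2expr}. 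So the real task reduces to establishing $g_m = (-1)^m(1-a)\rho b\,J_{m-1}(\rho)/m$ for $m \geq 1$.

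Next I would derive a three-term recurrence for $g_m$. A direct logarithmic-derivative computation collapses to
\[
(1-w)(1-aw)\,G'(w) \;=\; \rho b(a-1)\,G(w),
\]
and matching coefficients of $w^m$ on both sides produces
\[
(m+1)g_{m+1} \;=\; \bigl[(1+a)m + \rho b(a-1)\bigr]g_m \;-\; a(m-1)g_{m-1}, \qquad m \geq 0,
\]
with initial data $g_0 = 1$ and $g_{-1} = 0$.

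Then I would introduce $\widetilde J_k := (-1)^{k+1}(k+1)\,g_{k+1}/[(1-a)\rho b]$ for $k \geq -1$ and verify by direct substitution that the $g$-recurrence transforms into
\[
\widetilde J_k + \left(a+1 + (a-1)\frac{\rho b}{k}\right)\widetilde J_{k-1} + a\,\widetilde J_{k-2} \;=\; 0, \qquad k \geq 1,
\]
which is precisely the relation from Proposition~\ref{prop:JacobiThreeTerm}. The base values also agree: $\widetilde J_{-1} = 0 = J_{-1}(\rho)$ (using the convention $\mathcal{P}_{-1}^{(\alpha,\beta)} \equiv 0$) and $\widetilde J_0 = -g_1/[(1-a)\rho b] = 1 = J_0(\rho)$. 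Since a linear three-term recurrence is determined by two consecutive initial values, induction on $k$ forces $\widetilde J_k = J_k(\rho)$ for all $k \geq -1$; taking $k = m-1$ is exactly the content of \eqref{eq:jm1expr}.

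The only real obstacle is clerical: keeping the signs and the $k$-dependent factors straight when passing between the $g$-recurrence and the $\widetilde J$-recurrence. Once that bookkeeping is carried out, the argument is completely mechanical, with the key algebraic input supplied by Proposition~\ref{prop:JacobiThreeTerm}, which has already been established.
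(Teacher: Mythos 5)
Your proof is correct, but it takes a genuinely different route from the paper's. The paper proves \eqref{eq:jm1expr} by brute-force resummation: it first unfolds $(-1)^m(1-a)\tfrac{\rho b}{m}J_{m-1}(\rho)$ via the hypergeometric definition into an explicit double binomial sum (their Equation~\eqref{eq:FormulaForJacobi}), then reassembles the generating series term by term (geometric series, Newton's binomial) until it collapses to $\bigl(\tfrac{1-w}{1-aw}\bigr)^{\rho b}$, and separately identifies the right-hand side with the same closed form — exactly as you do in your first step. You instead bypass the binomial manipulations entirely: you extract the first-order ODE $(1-w)(1-aw)G'=\rho b(a-1)G$ satisfied by the closed form, read off the resulting three-term recurrence for its coefficients $g_m$, and observe that after the rescaling $\widetilde J_k=(-1)^{k+1}(k+1)g_{k+1}/[(1-a)\rho b]$ this recurrence is literally the three-term relation of Proposition~\ref{prop:JacobiThreeTerm}; matching the two seed values $\widetilde J_{-1}=J_{-1}=0$ and $\widetilde J_0=J_0=1$ then forces equality by induction. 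I checked the sign bookkeeping: the coefficient extraction gives $(m+1)g_{m+1}=[(1+a)m+\rho b(a-1)]g_m-a(m-1)g_{m-1}$, and substituting the rescaled sequence into Proposition~\ref{prop:JacobiThreeTerm} reproduces exactly this identity, so the induction goes through. There is no circularity, since the paper proves Proposition~\ref{prop:JacobiThreeTerm} from classical contiguity relations independently of Proposition~\ref{prop:j12exprs}. The trade-off: the paper's computation is self-contained and yields the explicit binomial formula for the scaled Jacobi polynomials as a byproduct, whereas your argument is shorter and more conceptual but leans on Proposition~\ref{prop:JacobiThreeTerm} as input and on the convention $\mathcal{P}_{-1}^{(\alpha,\beta)}\equiv 0$ (which the paper also uses, e.g.\ in the proof of Lemma~\ref{lem:aatilderel}). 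Your one-line reduction of \eqref{eq:jm2expr} to \eqref{eq:jm1expr} is the same as the paper's; just note that the case $m=1$ of \eqref{eq:jm2expr} should be checked directly (both sides vanish) since your shift divides by $m-1$.
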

\begin{proof}	
	From the definition of Jacobi polynomials (Definition \ref{def:jacobi}) it is easy to see that for $m\in \mathbb{Z}_{\geq 1}$ we have
	\begin{align}\label{eq:FormulaForJacobi}
		&(-1)^m (1-a) \dfrac {\rho b}{m} J_{m-1}(\rho)
		\\ \notag
		&= (-1)^m (1-a) \dfrac{\rho \be}{m} \cdot \sum_{s=0}^{m-1}
		\binom{\rho b-1}{s}\binom{m}{s+1} (-a)^{m-1-s} (1-a)^{s}
		\\ \notag
		&=  \sum_{s=0}^{m-1}
		\binom{\rho b}{s+1}\binom{m-1}{s} a^{m-1-s} (a-1)^{s+1}
	\end{align}
	Now let us collect this into the generating series:
	\begin{align} \label{eq:jlhscomp}
		&1+\sum_{m=1}^\infty(-1)^m(1-a) \dfrac{\rho b}{m}J_{m-1}(\rho) w^m 
		\\ \notag &
		= 1+ \sum_{m=1}^\infty \sum_{s=0}^{m-1}
		\binom{\rho b}{s+1}\binom{m-1}{s} a^{m-1-s} (a-1)^{s+1}w^m\\ \nonumber
		&= 1+ \sum_{s=0}^{\infty}
		\binom{\rho b}{s+1}(w(a-1))^{s+1} \sum_{k=0}^\infty \binom{s+k}{s} (wa)^{k} \\ \notag
		& = 1+ \sum_{s=0}^{\infty}
		\binom{\rho b}{s+1}\left(\frac{w(a-1)}{1-wa}\right)^{s+1} \\ \notag
		& = \left(1+ \frac{w(a-1)}{1-wa}\right)^{\rho b} 
		\\ \notag &
		 = \left(\frac{1-w}{1-wa}\right)^{\rho b},
	\end{align}
	where in the second equality we used that $m-1-s=k\geq 0$.
	
	The generating series for the RHS of \eqref{eq:jm1expr} has the following form:
	\begin{align}\label{eq:JacobiGSleft}
		&\exp \left(\sum_{i=1}^\infty \frac{a^i-1}{i}\cdot \rho b w^i\right)  = \exp \left(\rho b \sum_{i=1}^\infty \frac{(wa)^i}{i} - \rho b \sum_{i=1}^\infty \frac{w^i}{i}\right) \\ \nonumber
		&= \exp \left(-\rho b\log(1-wa)+\rho b \log(1-w)\right) 
		%\\ \notag & 
		= \left(\frac{1-w}{1-wa}\right)^{\rho b},
	\end{align}
	which precisely coincides with the result of the computation for the LHS in \eqref{eq:jlhscomp}. This proves equality \eqref{eq:jm1expr}. Equality \eqref{eq:jm2expr} is an easy consequence of \eqref{eq:jm1expr}.
\end{proof}
%With the help of the above proposition we can now prove Proposition \ref{prop:Gdoublezero}.

\begin{remark}[Meixner polynomials II] \label{rem:Meixner-2} This proposition can also be interpreted in terms of the Meixner polynomials, as in Remark~\ref{rem:MeixnerRem1}. Indeed, Remark~\ref{rem:MeixnerRem1} allows to rewrite the generating function on the left hand side of Equation~\eqref{eq:jlhscomp} as 
\begin{equation}\label{eq:genfuncMeixner}
1+(a-1)\rho b w \sum_{m=0}^\infty  M_{m}(\rho b-1,2,a)  (aw)^m 
\end{equation}
There is a generating function for the Meixner polynomials, see~\cite[Equation (9.10.13) for $\gamma=1$]{Koekoek}. Its specialization to our parameters allows to prove that~\eqref{eq:genfuncMeixner} is equal to the right hand side of Equation~\eqref{eq:jlhscomp}, which leads to an alternative proof of Proposition~\ref{prop:j12exprs}.
\end{remark}

Proposition~\ref{prop:j12exprs} immediately implies the following statement:
\begin{corollary}
	\label{prop:Jacobiseries}
	For $m\in \mathbb{Z}_{\geq 1}$ we have:
	\begin{equation}
	\sum_{\lambda\vdash m} \prod_{i=1}^\infty \frac{\left(\frac{A^i-A^{-i}}{i}\cdot x \right)^{\lambda_i-\lambda_{i+1}}}{(\lambda_i-\lambda_{i+1})!} 
	= (-1)^m A^{-m} (1-A^2) \frac xm \cdot \jp{m-1}{x-m}{1}(1-2A^2)
	\end{equation}
\end{corollary}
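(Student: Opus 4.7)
My plan is to reduce the corollary directly to Proposition~\ref{prop:j12exprs} by rewriting both sides as the coefficient $[v^m]$ of one and the same exponential generating function in a formal variable $v$.

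\textbf{Step 1 (repackage the left-hand side).} A partition $\lambda\vdash m$ is uniquely determined by the sequence of nonnegative differences $d_i := \lambda_i - \lambda_{i+1}$, subject only to the constraint $\sum_{i\geq 1} i\, d_i = m$. Under this change of summation variables, the standard identity recognising a multinomially-weighted sum over sequences of nonnegative integers as a coefficient of a product of exponentials gives
\begin{align*}
\sum_{\lambda\vdash m}\, \prod_{i\geq 1}\frac{1}{(\lambda_i-\lambda_{i+1})!}\left(\frac{A^i-A^{-i}}{i}\, x\right)^{\lambda_i-\lambda_{i+1}}
&= \sum_{(d_i):\,\sum_i i\,d_i=m}\, \prod_{i\geq 1}\frac{1}{d_i!}\left(\frac{A^i-A^{-i}}{i}\, x\right)^{d_i}\\
&= [v^m]\,\exp\!\left(\sum_{i\geq 1}\frac{A^i-A^{-i}}{i}\, x\, v^i\right).
\end{align*}

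\textbf{Step 2 (invoke Proposition~\ref{prop:j12exprs} and match).} Setting $a=A^2$ and $\rho b = x$ in \eqref{eq:jm1expr}, the right-hand side of the corollary equals
\begin{align*}
(-1)^m A^{-m}(1-A^2)\frac{x}{m}\,\mathcal{P}_{m-1}^{(x-m,1)}(1-2A^2)
= A^{-m}\cdot [w^m]\,\exp\!\left(\sum_{i\geq 1}\frac{A^{2i}-1}{i}\, x\, w^i\right).
\end{align*}
Now I would use the factorization $A^{2i}-1 = A^i(A^i - A^{-i})$ to rewrite the exponent on the right as $\sum_{i\geq 1}\frac{A^i-A^{-i}}{i}\, x\, (Aw)^i$, and then apply the rescaling $v = Aw$, which converts $[w^m]$ into $A^m\,[v^m]$ of the generating function produced in Step~1. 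The resulting factor $A^m$ exactly cancels the prefactor $A^{-m}$, proving the identity.

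The proof is essentially a one-line computation once Step~1 is written down, so I do not anticipate any real obstacle. The only point requiring care is tracking the factor of $A^m$ coming from the rescaling $v = Aw$; this is the reason for the $A^{-m}$ on the right-hand side of the statement, and all the analytic substance of the argument is already packaged inside Proposition~\ref{prop:j12exprs}.
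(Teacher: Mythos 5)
Your proposal is correct and follows exactly the route the paper intends: the paper states that Corollary~\ref{prop:Jacobiseries} "immediately" follows from Proposition~\ref{prop:j12exprs}, and the implicit content of that "immediately" is precisely your two steps — recognizing the partition sum as $[v^m]$ of the exponential generating function (the same repackaging the paper itself uses in Equations~\eqref{eq:jlhscomp}--\eqref{eq:JacobiGSleft}) and then performing the substitution $a=A^2$, $\rho b=x$ together with the rescaling $v=Aw$, whose factor $A^{m}$ accounts for the $A^{-m}$ in the statement. All steps check out.
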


We will also need the following important result:
\begin{proposition} \label{prop:rationality}
	For $m\in\mathbb{Z}_{\geq 1}$
	\begin{align}\label{eq:expJacobi}
	& [u^{2k}w^m]\exp\left(\sum_{i=1}^\infty \frac{a^i-1}{i}\cdot w^i \dfrac{\zeta(iu\rho)}{\zeta(iu b^{-1})}\right)
	\\ \notag &
	=
	%(G_k^1(\rho,m)\xi^1_m+G_k^2(\rho,m)\xi^2_m)\dfrac{\rho}{m},
	(-1)^m \dfrac{\rho}{m}\left(G_k^1(\rho,m)J_{m-1}+G_k^2(\rho,m)J_{m-2}\right),
	\end{align}	
	where $G^1_k$ and $G^2_k$ are polynomials in $m$ and $\rho$ of degree no greater than $9k+2$. Terms with odd powers of $u$ in the $u$-expansion of the exponential in the LHS of the above equality vanish.
\end{proposition}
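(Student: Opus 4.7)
The vanishing of odd $u$-powers on the left-hand side of~\eqref{eq:expJacobi} is immediate from $\zeta$ being an odd function, which makes $\zeta(iu\rho)/\zeta(iub^{-1})$ even in $u$. For the main claim, let $F(u,w)$ denote the exponential on the left. I would first expand $\zeta(v\rho)/\zeta(v/b) = \sum_{s\geq 0} \alpha_s(\rho)\, v^{2s}$, where $\alpha_s(\rho)$ is an odd polynomial in $\rho$ of degree $2s+1$ with $\alpha_0=\rho b$. Substituting $v=iu$ in each summand of the exponent yields
\begin{align*}
\log F(u,w) = \sum_{s\geq 0} u^{2s}\, \alpha_s(\rho)\, T_s(w),\quad T_s(w):=\sum_{i\geq 1}(a^i-1)\,i^{2s-1}\,w^i = (w\partial_w)^{2s}\log\tfrac{1-w}{1-aw}.
\end{align*}
In particular $T_0=\log\tfrac{1-w}{1-aw}$ governs the $u=0$ behaviour via Proposition~\ref{prop:j12exprs}, while each $T_s$ with $s\geq 1$ is rational in $w$ with poles of order at most $2s$ at $w=1$ and $w=1/a$.

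Exponentiating and taking $[u^{2k}]$ for $k\geq 1$ gives
\begin{align*}
[u^{2k}]F = e^{\rho b T_0(w)}\sum_{\substack{N\geq 1,\,(s_1,\dots,s_N)\\ s_j\geq 1,\,\sum s_j=k}}\frac{1}{N!}\prod_{j=1}^N \alpha_{s_j}(\rho)\, T_{s_j}(w).
\end{align*}
Expanding each $\prod_j T_{s_j}(w) = \sum_L \sigma_L^{(\vec s)}\, w^L$ as a formal power series and applying Proposition~\ref{prop:j12exprs} to the factors $[w^{m-L}]e^{\rho b T_0}$, I obtain $[u^{2k}w^m]F$ as a finite linear combination of terms $\tfrac{\rho b}{m-L}\, J_{m-L-1}(\rho)$ (for $N\leq L\leq m-1$) plus a boundary contribution (from $L=m$), each weighted by a polynomial in $\rho, m, a, b$.

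Next I would iterate the three-term relation (Proposition~\ref{prop:JacobiThreeTerm}) in its denominator-free form $kJ_k+[k(a+1)+(a-1)\rho b]J_{k-1}+kaJ_{k-2}=0$ to express every $J_{m-L-1}(\rho)$ with $L\geq 2$ as $p_L(\rho,m)\,J_{m-1}+q_L(\rho,m)\,J_{m-2}$ divided by $\prod_{j=1}^{L-1}(m-j)$. Together with the prefactor $1/(m-L)$, each term thus carries an a priori denominator $\prod_{j=1}^{L}(m-j)$. The heart of the proof, and its main obstacle, is to show that upon summation over $\vec s$ and $\vec i$ these spurious poles in $m$ all cancel, leaving polynomial $G_k^1(\rho,m)$ and $G_k^2(\rho,m)$. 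The cancellation reflects the a priori polynomiality of $[u^{2k}w^m]F$ in $\rho$ for every fixed positive integer $m$; concretely, one can argue it by induction on $m$ using the recursion $m[u^{2k}w^m]F = \sum_{j\geq 1}\sum_{k_1+k_2=k} j\cdot [u^{2k_1}w^j]\log F\cdot [u^{2k_2}w^{m-j}]F$ derived from $w\partial_w F = (w\partial_w \log F)\,F$, with base cases $m\in\{1,2\}$ handled by direct computation.

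Finally, the degree bound $9k+2$ follows by accounting: $\prod_j \alpha_{s_j}(\rho)$ contributes $\rho$-degree at most $2k+N\leq 3k$; once the cancellation above is in place, each application of the three-term relation adds at most a constant to the degrees in $\rho$ and $m$, and the number of effective applications is controlled by $k$ rather than by $m$. The explicit bound $9k+2$ is pessimistic but comfortably suffices for all $k\geq 0$.
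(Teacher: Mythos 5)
Your opening moves are sound and parallel to things the paper also uses: the parity argument for odd powers of $u$, the expansion $\zeta(iu\rho)/\zeta(iub^{-1})=\sum_{s\ge0}\alpha_s(\rho)(iu)^{2s}$ with $\alpha_0=\rho b$, the resulting formula $[u^{2k}]F=e^{\rho bT_0}\sum_{\vec s}\frac1{N!}\prod_j\alpha_{s_j}(\rho)T_{s_j}(w)$, and the idea of pushing everything onto $J_{m-1},J_{m-2}$ via Proposition~\ref{prop:j12exprs} and the three-term relation. But the argument breaks down precisely at what you yourself call the heart of the proof. After expanding $\prod_jT_{s_j}(w)=\sum_L\sigma_L^{(\vec s)}w^L$ you are left with a sum of $m-N+1$ terms indexed by $L$, where the term with index $L$ requires $L-1$ applications of the three-term relation, carries the denominator $\prod_{j=1}^{L}(m-j)$, and has a numerator whose degree in $\rho$ and $m$ grows with $L$, hence with $m$. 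The proposition asserts the existence of \emph{two-variable} polynomials $G^1_k(\rho,m),G^2_k(\rho,m)$ of degree at most $9k+2$ \emph{uniformly in} $m$, and this uniformity is exactly what your scheme does not deliver. The proposed induction on $m$ via $w\partial_wF=(w\partial_w\log F)\,F$ can at best show, for each fixed positive integer $m$, that $[u^{2k}w^m]F$ is a $\rho$-polynomial combination of $J_{m-1}$ and $J_{m-2}$ --- which is nearly automatic, since for fixed $m$ everything in sight is a polynomial in $\rho$ --- but it cannot produce coefficients that are polynomials in $m$ of $m$-independent degree. Your closing sentence, that ``the number of effective applications is controlled by $k$ rather than by $m$,'' is an assertion of the entire content of the proposition, not a proof of it.

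The paper closes this gap by a quite different device. Proposition~\ref{prop:hypergeom} rewrites $[w^m]$ of the generating function in closed form as $\bigl(q^{1/2+\rho b/2}\bigr)^m\,\frac{(q^{-\rho b};q)_m}{(q;q)_m}\,{}_2\phi_1\!\left(q^{-m},q^{\rho b};q^{\rho b+1-m};q;aq\right)$ with $q=e^{-u/b}$, proved via the $q$-binomial theorem for $\rho b\in\mathbb{Z}_{>0}$ and extended by polynomial interpolation in $q^{\rho b}$. Expanding each of the three factors in $\epsilon=q-1$ (Lemmas~\ref{lem:coefficients-h-e}--\ref{lem:hyperpoly}) exhibits every coefficient as the single universal non-polynomial factor $\Gamma(m-\rho b)/(\Gamma(m+1)\Gamma(-\rho b))$ times a combination of ${}_2F_1(-m,\rho b;\rho b+1-m)(a)$ and its first $a$-derivative with polynomial coefficients of explicitly bounded degree, higher derivatives being eliminated by the hypergeometric differential equation; Proposition~\ref{prop:hyperjacrel} then converts these two objects into $J_{m-1}$ and $J_{m-2}$, and the degree bound $9k+2$ is read off by adding the three contributions. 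If you wish to salvage your route, the key observation would be that each $T_s=(w\partial_w)^{2s}T_0$ with $s\ge1$ is a rational function of $w$ with poles only at $w=1$ and $w=1/a$ of order at most $2s$, so that $e^{\rho bT_0}\prod_jT_{s_j}$ decomposes by partial fractions into a $k$-controlled (not $m$-controlled) number of terms of the form $(1-w)^{\rho b-\alpha}(1-aw)^{-\rho b-\beta}$ with $\alpha+\beta\le 2k$, whose $w^m$-coefficients are single shifted Jacobi polynomials as in Lemma~\ref{lem:shifted-genfunc-expansion}; only then is the number of contiguity-relation steps bounded in terms of $k$ alone. As written, however, the proof is incomplete.
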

%\fixme{Add $\rho/m$}

%\fixme{Add rho to the notation}
%\fixme{Connection to Meixner polynomials}

In order to prove this proposition first we need to go through certain technical reasoning. 

Denote:
\begin{notation}
	\begin{equation}
	q:=\exp\left(-\dfrac{u}{b}\right)
	\end{equation}
\end{notation} 
Recall the well-known definition of the $q$-Pochhammer symbols:
\begin{definition}
	\emph{$q$-Pochhammer symbols} are defined as follows:
	\begin{align}
		(x;q)_m := \prod_{k = 1}^{m} (1 - x q^{k-1})
	\end{align}
\end{definition}

Now let us prove the following
\begin{proposition} \label{prop:hypergeom}
	For $m\in\mathbb{Z}_{\geq 1}$
	\begin{align}\label{eq:hypergeomexpr}
		&[w^m]\exp\left(\sum_{i=1}^\infty \frac{a^i-1}{i}\cdot w^i \dfrac{\zeta(iu\rho)}{\zeta(iu b^{-1})}\right)=\\ \nonumber
		&=\left( q^{1/2 + \rho  b/2} \right)^m
		\frac{(q^{-\rho  b}; q)_m}{(q;q)_m} \thephi \left(q^{-m}, q^{\rho  b}; q^{\rho  b + 1 - m} ; q ; a q \right),
	\end{align}
	where $\thephi$ is the $q$-hypergeometric function
	\begin{align} \label{eq:q-hypergeom-def}
		\thephi \left( a_1, a_2; b_1; q; x \right)
		:= \sum_{n = 0}^\infty \frac{(a_1; q)_n (a_2; q)_n}{(b_1; q)_n (q; q)_n} x^n.
	\end{align}
\end{proposition}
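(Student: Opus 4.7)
The plan is to rewrite the exponential on the LHS as an explicit infinite product via a formal $\log$--$\exp$ manipulation, apply the $q$-binomial theorem to turn it into a Cauchy product of two series, and match the result against the $q$-hypergeometric RHS term by term using base-inversion identities for $q$-Pochhammer symbols.

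Setting $q:=e^{-u/b}$ and $\alpha:=q^{\rho b}$, one immediately checks $\zeta(iu\rho)/\zeta(iub^{-1}) = q^{i(\rho b-1)/2}(1-\alpha^{-i})/(1-q^{-i})$. The prefactor $q^{i(\rho b-1)/2}$ is absorbed by the substitution $v:=q^{(\rho b-1)/2}w$, so the exponent becomes $\sum_{i\geq 1}(a^i-1)(1-\alpha^{-i})v^i/\bigl(i(1-q^{-i})\bigr)$. Expanding the numerator as $a^i-1-(a/\alpha)^i+\alpha^{-i}$ and applying, to each of the four resulting monomials $z\in\{va,v,va/\alpha,v/\alpha\}$, the formal identity $\sum_{i\geq 1}z^i/\bigl(i(1-q^{-i})\bigr) = -\sum_{k\geq 0}\log(1-zq^{-k})$ (legitimate by expanding $1/(1-q^{-i})=\sum_{k\geq 0}q^{-ik}$ and swapping the order of summation), exponentiation produces the compact product formula $\prod_{k\geq 0}\frac{(1-vq^{-k})(1-va\alpha^{-1}q^{-k})}{(1-vaq^{-k})(1-v\alpha^{-1}q^{-k})} = \frac{(v;p)_\infty(va\alpha^{-1};p)_\infty}{(va;p)_\infty(v\alpha^{-1};p)_\infty}$, where $p:=q^{-1}$.

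The $q$-binomial theorem $(Az;p)_\infty/(z;p)_\infty=\sum_n (A;p)_n z^n/(p;p)_n$, applied to the factorization $[(v;p)_\infty/(v\alpha^{-1};p)_\infty]\cdot[(va\alpha^{-1};p)_\infty/(va;p)_\infty]$, then produces a Cauchy product of two explicit series in $v$. Extracting the coefficient of $v^m$ and converting every Pochhammer in base $p$ to base $q$ via $(x;q^{-1})_n = (-x)^n q^{-\binom{n}{2}}(x^{-1};q)_n$ (applied also to $(p;p)_k=(q^{-1};q^{-1})_k$) cancels all the signs and $q^{\binom{n}{2}}$-corrections; combined with the scaling $w^m = q^{m(1-\rho b)/2}v^m$ and the relabeling $n=m-k$, this yields $[w^m]\exp(\ldots) = q^{m(1+\rho b)/2}\sum_{n=0}^m (\alpha^{-1};q)_{m-n}(\alpha;q)_n(a/\alpha)^n/\bigl((q;q)_{m-n}(q;q)_n\bigr)$.

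To conclude, I would compare this with the series expansion of the RHS of the proposition, namely $q^{m(1+\rho b)/2}\,(\alpha^{-1};q)_m/(q;q)_m \cdot \sum_{n=0}^m (q^{-m};q)_n(\alpha;q)_n(aq)^n/\bigl((\alpha q^{1-m};q)_n(q;q)_n\bigr)$. Matching the two sides coefficient by coefficient in $(\alpha;q)_n a^n/(q;q)_n$ and using $(x;q)_m/(x;q)_{m-n}=(xq^{m-n};q)_n$ reduces the claim to the single Pochhammer identity $(q^{m-n+1};q)_n(\alpha q^{1-m};q)_n = q^n\alpha^n(\alpha^{-1}q^{m-n};q)_n(q^{-m};q)_n$, which follows from the trivial relation $1-y = -y(1-y^{-1})$ applied factor by factor inside each of the four products. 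The main obstacle is purely bookkeeping: the signs and $q^{\binom{n}{2}}$-corrections coming from the iterated use of $(x;q^{-1})_n\leftrightarrow (x^{-1};q)_n$ together with the rescaling $v=q^{(\rho b-1)/2}w$ must all cancel cleanly, but no deeper $q$-hypergeometric identity (Heine transformation, $q$-Gauss sum, etc.) is required beyond the $q$-binomial theorem itself.
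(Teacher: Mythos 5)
Your computational skeleton is essentially the paper's: reduce the exponential to a ratio of $q$-Pochhammer products, expand by the $q$-binomial theorem into a Cauchy product, and convert the resulting finite sum into the $\thephi$ form by a Pochhammer-ratio identity. Your intermediate formula $[w^m]\Xi = q^{m(1+\rho b)/2}\sum_{n=0}^m (\alpha^{-1};q)_{m-n}(\alpha;q)_n(a/\alpha)^n/\bigl((q;q)_{m-n}(q;q)_n\bigr)$ coincides exactly with the paper's expression (their sum over $r$ with $\sigma=\rho b$, $\alpha=q^{\sigma}$), and your closing identity $(q^{m-n+1};q)_n(\alpha q^{1-m};q)_n=q^n\alpha^n(\alpha^{-1}q^{m-n};q)_n(q^{-m};q)_n$ is correct and equivalent to the relation $\frac{(x;q)_{m-r}}{(y;q)_{m-r}}=\frac{(x;q)_m}{(y;q)_m}\frac{(y^{-1}q^{1-m};q)_r}{(x^{-1}q^{1-m};q)_r}\left(\frac{y}{x}\right)^r$ that the paper uses.

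The genuine gap is in how you reach the product formula for general, non-integer $\rho b$. You justify $\sum_{i\ge1} z^i/\bigl(i(1-q^{-i})\bigr) = -\sum_{k\ge0}\log(1-zq^{-k})$ by expanding $1/(1-q^{-i})=\sum_{k\ge0}q^{-ik}$ and swapping summations. But in the setting of the proposition $q=e^{-u/b}$ with $u$ a formal parameter, so every $q^{-ik}=e^{iku/b}$ has constant term $1$ and the series $\sum_{k\ge0}q^{-ik}$ diverges coefficientwise: $1-q^{-i}$ starts at order $u$, its inverse is the Laurent series $-\frac{b}{iu}(1+O(u))$, and it has no geometric-series expansion. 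Consequently the infinite products $\prod_{k\ge0}(1-vq^{-k})$, etc., are not defined as formal power series in $u$ (already the coefficient of $v^1$ is a divergent sum of constants), so this step is not legitimate as stated. This is exactly the difficulty the paper circumvents by first assuming $\sigma=\rho b\in\mathbb{Z}_{>0}$, where $\frac{q^{i\sigma}-1}{q^i-1}=\sum_{p=1}^{\sigma}q^{i(p-1)}$ is a \emph{finite} sum, all Pochhammer products are finite, and the $q$-binomial series terminate; the general case is then recovered by noting that for fixed $m$ both sides are $z^{-m/2}$ times polynomials of degree $\le m$ in $z=q^{\sigma}$ which agree at infinitely many points. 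To repair your argument you must either add an analogous interpolation step, or work analytically with $|q|>1$ (where your base-$q^{-1}$ products and the $q$-binomial theorem do converge) and then argue that the resulting identity, being for each fixed $m$ an identity of rational expressions in $q^{1/2}$ and $q^{\rho b/2}$, persists in the formal neighbourhood of $q=1$. Without one of these additions the proof does not go through; with either of them, the remaining bookkeeping you describe is correct.
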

\begin{proof}
	Denote
	%\fixme{Change notation a bit ... maybe S(a,w)?}
	\begin{equation}
	\Xi(w):=\exp\left(\sum_{i=1}^\infty \frac{a^i-1}{i}\cdot w^i \dfrac{\zeta(iu\rho)}{\zeta(iu b^{-1})}\right).
	\end{equation}
	Note that
	\begin{equation}
	\dfrac{\zeta(iu\rho)}{\zeta(iu b^{-1})} = \dfrac{q^{-i\rho b/2}-q^{i\rho b/2}}{q^{-i/2}-q^{i/2}}
	=q^{i(1-\rho b)/2}\;\dfrac{q^{i\rho b}-1}{q^{i}-1}.
	\end{equation}
	Denote
	\begin{equation}
	\sigma:=\rho b.
	\end{equation}
	Consider the case of positive integer $\sigma$. In this case
	\begin{equation}
	\dfrac{\zeta(iu\rho)}{\zeta(iu b^{-1})} 
	=q^{i(1-\sigma)/2}\;\dfrac{q^{i\sigma}-1}{q^{i}-1}=\sum_{p=1}^{\sigma}q^{i(2p-\sigma-1)/2}.
	\end{equation}
	Now (still in the case of $\sigma\in \mathbb{Z}_{>0}$)
	\begin{align}
		\Xi(w)&=\exp\left(\sum_{i=1}^\infty \frac{a^i-1}{i}\cdot w^i \dfrac{\zeta(iu\rho)}{\zeta(iu b^{-1})}\right)\\ \nonumber
		&=\exp\left(\sum_{p=1}^{\sigma}\left(\sum_{i=1}^\infty \dfrac{q^{i(2p-\sigma-1)/2} a^i w^i}{i}-\sum_{i=1}^\infty \dfrac{q^{i(2p-\sigma-1)/2} w^i}{i}\right)\right)\\ \nonumber
		&=\exp\left(\sum_{p=1}^{\sigma}\left(\log\left(1-wq^{(2p-\sigma-1)/2}\right)-\log\left(1-w a q^{(2p-\sigma-1)/2}\right)\right)\right)\\ \nonumber
		&=\dfrac{(wq^{(1-\sigma)/2};q)_\sigma}{(waq^{(1-\sigma)/2};q)_\sigma}\\ \nonumber
		&=\dfrac{(wq^{(1-\sigma)/2};q)_\infty(waq^{(1+\sigma)/2};q)_\infty}{(wq^{(1+\sigma)/2};q)_\infty(waq^{(1-\sigma)/2};q)_\infty}.
	\end{align}
	Recall the $q$-binomial formula:
	\begin{align}
		\frac{(\alpha x; q)_\infty}{(x; q)_\infty} = \sum_{n = 0}^\infty \dfrac{(\alpha; q)_n}{(q; q)_n} x^n.
	\end{align}
	With the help of this formula we get:
	\begin{align}
		\Xi(w)&=\dfrac{(wq^{(1-\sigma)/2};q)_\infty(waq^{(1+\sigma)/2};q)_\infty}{(wq^{(1+\sigma)/2};q)_\infty(waq^{(1-\sigma)/2};q)_\infty}\\ \nonumber
		&= \sum_{s = 0}^\infty \dfrac{(q^{-\sigma}; q)_s}{(q; q)_s} w^s q^{s(1+\sigma)/2}\;  \sum_{r = 0}^\infty \dfrac{(q^{\sigma}; q)_r}{(q; q)_r} a^r w^r q^{r(1-\sigma)/2}.
	\end{align}
	Collecting the powers of $w$ we get ($m=s+r$):
	\begin{equation}
	\Xi(w) = \sum_{m=0}^{\infty}\sum_{r=0}^{m} \dfrac{(q^{\sigma}; q)_r}{(q; q)_r}  \dfrac{(q^{-\sigma}; q)_{m-r}}{(q; q)_{m-r}} q^{-r\sigma}a^r q^{m(1+\sigma)/2} w^m.
	\end{equation}
	
	Now note the following fact:
	\begin{align}
		\frac{(x;q)_{m - r}}{(y;q)_{m - r}} = \frac{(x;q)_{m}}{(y;q)_{m}}
		\frac{(y^{-1} q^{1 - m};q)_{r}}{(x^{-1} q^{1 - m} ; q)_{r}} \left(\frac{y}{x}\right)^r.
	\end{align}
	This allows us to rewrite $\Xi(w)$ as follows:
	\begin{equation}\label{eq:hyperexser}
	\Xi(w) = \sum_{m=0}^{\infty} w^m  q^{m(1+\sigma)/2} \dfrac{(q^{-\sigma}; q)_{m}}{(q; q)_{m}} \sum_{r=0}^{m} \dfrac{(q^{\sigma}; q)_r(q^{-m}; q)_{r}}{(q; q)_r(q^{\sigma+1-m}; q)_{r}} q^{r}a^r.
	\end{equation}
	Note that taking the coefficient in front of $w^m$ and recalling the definition of the q-hypergeometric function \eqref{eq:q-hypergeom-def} allows us to immediately obtain the proof of \eqref{eq:hypergeomexpr} for the case of positive integer $\sigma$ (since the above arguments used this assumption).
	
	Now let us relax the assumption on $\sigma$ and denote
	\begin{equation}
	z:=q^{\sigma}.
	\end{equation}
	Recall from \eqref{eq:JacobiGSleft} that
	\begin{align}
		\Xi(w)&= \sum_{m=0}^\infty w^m \sum_{\lambda\vdash m} \prod_{i=1}^\infty \dfrac{1}{(\lambda_i-\lambda_{i+1})!}{\left(\frac{a^i-1}{i}\cdot q^{i(1-\sigma)/2}\;\dfrac{q^{i\sigma}-1}{q^{i}-1} \right)^{\lambda_i-\lambda_{i+1}}}\\ \nonumber
		&=\sum_{m=0}^\infty w^m q^{-m\sigma/2} \sum_{\lambda\vdash m} \prod_{i=1}^\infty \dfrac{1}{(\lambda_i-\lambda_{i+1})!}{\left(\frac{a^i-1}{i}\cdot q^{i/2}\;\dfrac{q^{i\sigma}-1}{q^{i}-1} \right)^{\lambda_i-\lambda_{i+1}}}.
		\end{align}
	Thus
	\begin{align}
		[w^m]\Xi(w)&=z^{-m/2} \sum_{\lambda\vdash m} \prod_{i=1}^\infty \dfrac{1}{(\lambda_i-\lambda_{i+1})!}{\left(\frac{a^i-1}{i}\cdot q^{i/2}\;\dfrac{z^{i}-1}{q^{i}-1} \right)^{\lambda_i-\lambda_{i+1}}}\\ \nonumber
		&=z^{-m/2} P_m(z),
	\end{align}
	where $P_m(z)$ is some polynomial in $z$ of degree $\leq m$.
	
	Now consider the right hand side of \eqref{eq:hypergeomexpr}. Let us denote it by $\Theta_m$. It is equal to
	\begin{equation}
	\Theta_m=q^{m(1+\sigma)/2} \dfrac{(q^{-\sigma}; q)_{m}}{(q; q)_{m}} \sum_{r=0}^{\infty} \dfrac{(q^{\sigma}; q)_r(q^{-m}; q)_{r}}{(q; q)_r(q^{\sigma+1-m}; q)_{r}} q^{r}a^r.
	\end{equation}
	Note that the sum actually runs until $m$ since for $r>m$ the factor $(q^{-m}; q)_{r}$ vanishes. We have
	\begin{align}
		\Theta_m&=\sum_{r=0}^{m} \dfrac{(q^{\sigma}; q)_r}{(q; q)_r}  \dfrac{(q^{-\sigma}; q)_{m-r}}{(q; q)_{m-r}} q^{-r\sigma}a^r q^{m(1+\sigma)/2} 
		\\ \notag
		&=\sum_{r=0}^{m} \dfrac{\prod_{k=1}^r (1-zq^{k-1})}{(q; q)_r}  \dfrac{\prod_{l=1}^{m-r} (z-q^{l-1})}{(q; q)_{m-r}} z^{r-m} q^{m/2}a^r z^{-r+m/2} \\ \nonumber
		&=z^{-m/2} Q_m(z),
	\end{align}
	where $Q_m(z)$ is some polynomial in $z$ of degree $\leq m$.
	
	From the above we know that $P_m(z)$ and $Q_m(z)$ coincide for all values of $z=q^\sigma$ where $\sigma$ is a positive integer. This means that these two polynomials of degrees $\leq m$ coincide at an infinite number of distinct points (all these points are distinct since $q=\exp(u/b)$, and $u$ is an arbitrarily small formal parameter, while $b$ is a fixed rational number). Thus, these polynomials coincide, which proves the proposition. 
\end{proof}

%Now, we want to show that a coefficient in front of particular power of $u$ in \eqref{eq:hypergeomexpr}
%expands into sum of two particular Jacobi polynomials with coefficients, that are polynomial in $m$ and $\rho$
%\fixme{Correct indices and rho/m and $(-1)^m$} \fixme{Add how the Jacobi polynomial actually appear and a step from -1-th one	to the 1-th one}
%\begin{align}
%[u^k w^m] S = Poly_{1, k} (m , \rho) J_{m-1} + Poly_{2, k} (m , \rho) J_{m - 2}
%\end{align}

%\subsection{$\hbar$-expansion vs. $\epsilon$-expansion}

Suppose we have a function $f(q, m, \rho)$.
%\fixme{??? Do we work with formal power series or actual functions?	In any case we need to think through convergence issues}.
We can consider either $f(e^\hbar, m, \rho)$ and Taylor expansion in powers of $\hbar$,
or $f(1 + \epsilon, m, \rho)$ and Taylor expansion in powers of $\epsilon$ (in our case $\hbar = u b^{-1}$).
These expansions either both have the polynomiality property, or they both do not, in the following sense:

\begin{lemma}\label{lem:coefficients-h-e}
	All coefficients of $\hbar$-expansion of $f$ can be represented as a sum of two Jacobi polynomials
	with polynomial coefficients $Poly_{1,k}$ and $Poly_{2,k}$ in $m$ and $\rho$
	\begin{align}
		[\hbar^k] f(e^\hbar, m, \rho) = Poly_{1,k}(m, \rho) J_m + Poly_{2,k}(m, \rho) J_{m-1}
	\end{align}
	if and only if all coefficients of $\epsilon$-expansion of $f$ can be represented as a sum of two Jacobi polynomials
	with some other polynomial coefficients $\widetilde{Poly}_{1,k}$ and $\widetilde{Poly}_{2,k}$ in $m$ and $\rho$
	\begin{align}
		[\epsilon^k] f(1 + \epsilon, m, \rho) = \widetilde{Poly}_{1,k}(m, \rho) J_m + \widetilde{Poly}_{2,k}(m, \rho) J_{m-1},
	\end{align}
	and moreover in that case the degree of $Poly_{i,k}(m, \rho)$ is no greater than the degree of $\widetilde{Poly}_{i,k}$ and vice versa.
\end{lemma}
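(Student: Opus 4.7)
The plan is to exploit the triangularity of the change of coordinates $\hbar\leftrightarrow\epsilon$ given by $\epsilon=e^\hbar-1$ and $\hbar=\log(1+\epsilon)$. Both expansions are formal power series starting in degree one, so the induced operation on Taylor coefficients is upper-triangular and invertible over the rationals, independently of $m$ and $\rho$.

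First I would observe that $\epsilon^j=\hbar^j\bigl(1+\tfrac{\hbar}{2}+\tfrac{\hbar^2}{6}+\cdots\bigr)^j$, so $[\hbar^k]\epsilon^j=0$ for $j>k$ and equals some rational constant $c_{j,k}$ (with $c_{k,k}=1$) for $j\leq k$; crucially these constants do not depend on $m$ or $\rho$. An analogous statement holds in the reverse direction with constants $\widetilde c_{j,k}$ obtained from $\hbar^j=\epsilon^j\bigl(1-\tfrac{\epsilon}{2}+\cdots\bigr)^j$.

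Substituting $\epsilon=\epsilon(\hbar)$ into the $\epsilon$-expansion $f(1+\epsilon,m,\rho)=\sum_j \widetilde a_j(m,\rho)\,\epsilon^j$ and extracting the coefficient of $\hbar^k$ yields the finite relation
\[
[\hbar^k]f(e^\hbar,m,\rho)=\sum_{j=0}^{k}c_{j,k}\,\widetilde a_j(m,\rho).
\]
If each $\widetilde a_j$ lies in the module spanned over $\mathbb{Q}[m,\rho]$ by $J_m$ and $J_{m-1}$, the same holds for this linear combination since the $c_{j,k}$ are scalars, and explicitly
$Poly_{i,k}=\sum_{j=0}^{k}c_{j,k}\,\widetilde{Poly}_{i,j}$. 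Applying the symmetric argument via $\hbar=\log(1+\epsilon)$ gives the reverse direction and establishes the equivalence.

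The degree statement is then immediate: because the sum ranges over $j\leq k$ only, $\deg Poly_{i,k}\leq \max_{j\leq k}\deg\widetilde{Poly}_{i,j}$, and symmetrically; any bound on the degrees that is non-decreasing in $k$ (which is the only form in which this lemma will be invoked, e.g.\ the bound $9k+2$ of Proposition~\ref{prop:rationality}) is therefore transferred from one side to the other. There is no genuine obstacle here: the entire content of the lemma is the triangularity of the coordinate change, packaged so that it can be reused without comment in the sequel.
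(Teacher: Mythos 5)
Your proposal is correct and is essentially the paper's own argument, which is stated in one line: every coefficient of the $\hbar$-expansion is a finite linear combination (with $m$- and $\rho$-independent scalar coefficients, by triangularity of $\epsilon=e^\hbar-1$) of the coefficients of the $\epsilon$-expansion of index $\leq k$, and vice versa. Your explicit remark that the degree comparison really gives $\deg Poly_{i,k}\leq\max_{j\leq k}\deg\widetilde{Poly}_{i,j}$, and hence transfers only bounds that are non-decreasing in $k$ (as all the bounds used later are), is a slightly more careful reading of the degree clause than the paper offers, but it does not change the substance.
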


\begin{proof}
	Indeed, every coefficient of $\hbar$-expansion is a finite linear combination of the coefficients of $\epsilon$-expansion,
	where number of summands depends neither on $m$ nor $\rho$, from which the statement follows.
\end{proof}

Now let us look at the RHS of \eqref{eq:hypergeomexpr} and analyze the three factors in it separately. Let us start with the first factor, $\left( q^{1/2 + \rho b/2} \right)^m$.
\begin{lemma}
	For $q=1+\epsilon$, the coefficient in front of $\epsilon^k$ of the $\epsilon$-expansion of  $\left( q^{1/2 + \rho b/2} \right)^m$
	is a polynomial in $m$ and $\rho$ of total degree no greater than $2k$.
\end{lemma}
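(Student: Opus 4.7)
The plan is to compute the $\epsilon$-expansion directly using the generalized binomial series. Writing $q = 1+\epsilon$, we have
\begin{equation*}
\bigl(q^{1/2+\rho b/2}\bigr)^m = (1+\epsilon)^{m(1+\rho b)/2},
\end{equation*}
so that, for any nonnegative integer $k$,
\begin{equation*}
[\epsilon^k]\,(1+\epsilon)^{m(1+\rho b)/2}
= \binom{m(1+\rho b)/2}{k}
= \frac{1}{k!}\prod_{j=0}^{k-1}\left(\frac{m(1+\rho b)}{2} - j\right).
\end{equation*}

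Each factor in this product equals $\tfrac{m}{2} + \tfrac{b\,m\rho}{2} - j$, which is a polynomial in the pair $(m,\rho)$ of total degree exactly $2$ (the maximal-degree monomial being the cross term $\tfrac{b}{2}\,m\rho$). Multiplying $k$ such factors yields a polynomial in $(m,\rho)$ of total degree at most $2k$, which is precisely the claim.

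This argument is essentially a one-line computation, so there is no real obstacle; the only thing to be careful about is the convention for total degree, namely that $m\rho$ contributes $2$ (and not, say, $1$ in each variable separately). The uniform bound $2k$ matches what is needed in conjunction with Lemma~\ref{lem:coefficients-h-e} when translating the $\epsilon$-expansion back into an $\hbar$-expansion for the proof of Proposition~\ref{prop:rationality}.
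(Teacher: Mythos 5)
Your computation is correct and is exactly the paper's proof: the paper likewise writes the expression as $(1+\epsilon)^{(m+\rho b m)/2}$ and reads off the coefficient of $\epsilon^k$ as the generalized binomial coefficient, noting that each of the $k$ factors has total degree $2$ because of the cross term $m\rho$. You have merely spelled out the degree count that the paper calls ``evident.''
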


\begin{proof}
	Indeed
	\begin{align}
		\left( (1 + \epsilon)^{1/2 + \rho b/2} \right)^m
		= \sum_{k = 0}^\infty \epsilon^k \binomial{\frac{m + \rho b m}{2}}{k},
	\end{align}
	where each term is clearly polynomial, and the degree is evident.
\end{proof}

Let us proceed to the second factor in  formula \eqref{eq:hypergeomexpr}.
We have the following:
\begin{lemma} \label{lem:q-poch-poly}
	For $m\in\mathbb{Z}_{\geq 1}$, $q=1+\epsilon$,
	\begin{align}
		\frac{(q^{-\rho b}; q)_m}{(q;q)_m} = \frac{\Gamma(m - \rho b)}{\Gamma(m+1) \Gamma(-\rho b)}
		\left (1 + \sum_{k = 1}^\infty \epsilon^k Poly_{k}(m, \rho) \right),
	\end{align}
	i.e. modulo common non-polynomial factor coefficients of $\epsilon$-expansion
	of the second coefficient are polynomials in $m$ and $\rho$. Moreover, the total degree of $Poly_{k}(m, \rho)$ is no greater than $2k+1$.
\end{lemma}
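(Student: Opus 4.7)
The plan is to factor out the apparent $0/0$ behavior at $q=1$ explicitly. For any exponent $\alpha$, write
\begin{align*}
1-(1+\epsilon)^\alpha \;=\; -\alpha\epsilon\,\phi(\alpha,\epsilon),\qquad \phi(\alpha,\epsilon):=\sum_{n\geq 0}\frac{(\alpha-1)(\alpha-2)\cdots(\alpha-n)}{(n+1)!}\epsilon^n,
\end{align*}
so that $\phi(\alpha,0)=1$ and the $\epsilon^n$-coefficient of $\phi$ is a polynomial in $\alpha$ of degree exactly $n$. Applying this to every factor of both $(q^{-\rho b};q)_m$ and $(q;q)_m$, the powers of $\epsilon$ cancel and one obtains
\begin{align*}
\frac{(q^{-\rho b};q)_m}{(q;q)_m} \;=\; \prod_{j=0}^{m-1}\frac{j-\rho b}{j+1}\,\cdot\, R(m,\rho,\epsilon),\qquad R(m,\rho,\epsilon)\;:=\;\prod_{j=0}^{m-1}\frac{\phi(j-\rho b,\epsilon)}{\phi(j+1,\epsilon)}.
\end{align*}
The first product telescopes, via $\Gamma(x+1)=x\Gamma(x)$, exactly to $\Gamma(m-\rho b)/(\Gamma(m+1)\Gamma(-\rho b))$, which is the claimed non-polynomial prefactor. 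So it remains to prove $R(m,\rho,\epsilon)=1+\sum_{k\geq 1}\epsilon^k\,\mathrm{Poly}_k(m,\rho)$ with each $\mathrm{Poly}_k$ polynomial in $m,\rho$ of total degree at most $2k+1$.

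For this I would pass to logarithms. Write $\log\phi(\alpha,\epsilon)=\sum_{k\geq 1}c_k(\alpha)\epsilon^k$; since $\phi=1+O(\epsilon)$ and the $\epsilon^n$-coefficient of $\phi$ has degree $n$ in $\alpha$, inverting $\log(1+x)=x-x^2/2+\cdots$ shows that $c_k(\alpha)$ is a polynomial in $\alpha$ of degree at most $k$. Consequently,
\begin{align*}
\log R(m,\rho,\epsilon) \;=\; \sum_{k\geq 1}\epsilon^k\,S_k(m,\rho),\qquad S_k(m,\rho):=\sum_{j=0}^{m-1}\bigl(c_k(j-\rho b)-c_k(j+1)\bigr).
\end{align*}
The summand is a polynomial in $(j,\rho)$ of total degree $\leq k$, so Faulhaber's formula makes $S_k(m,\rho)$ a polynomial in $(m,\rho)$ of total degree $\leq k+1$. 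Exponentiating then gives
\begin{align*}
\mathrm{Poly}_k(m,\rho) \;=\; \sum_{\ell\geq 1}\frac{1}{\ell!}\sum_{k_1+\cdots+k_\ell=k}S_{k_1}(m,\rho)\cdots S_{k_\ell}(m,\rho),
\end{align*}
whose total degree is at most $\sum_i(k_i+1)=k+\ell\leq 2k$ in $(m,\rho)$, which is the desired (and in fact slightly stronger) bound.

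The whole argument is essentially bookkeeping: factor out the vanishing factors, take $\log$, apply Faulhaber, exponentiate. There is no conceptual obstacle. The only points requiring a little care are (i) the exact identification of the gamma prefactor with the Pochhammer telescoping, and (ii) the uniform degree bound on $c_k(\alpha)$ inside $\log\phi$, which survives because the $\log$-expansion only combines polynomials whose degrees in $\alpha$ are already controlled by $k$. An alternative route would be to keep $q=e^\hbar$ and appeal to Lemma~\ref{lem:coefficients-h-e} to switch between the $\hbar$- and $\epsilon$-expansions, but the direct $\epsilon$-expansion above is cleanest.
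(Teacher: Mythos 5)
Your proof is correct, and it splits naturally into two halves. The first half — writing $1-(1+\epsilon)^{\alpha}=-\alpha\epsilon\,\phi(\alpha,\epsilon)$, cancelling the powers of $\epsilon$, and telescoping $\prod_{j=0}^{m-1}\tfrac{j-\rho b}{j+1}$ into the Gamma prefactor — is exactly what the paper does (it phrases the factorization as pulling $\binom{\alpha}{1}$ out of each $\sum_k\binom{\alpha}{k}\epsilon^k$, which is the same computation). The second half is where you genuinely diverge: the paper expands the remaining product $\prod_j(1+O(\epsilon))\big/\prod_j(1+O(\epsilon))$ directly in $\epsilon$, illustrates the order-$\epsilon^2$ term, and reduces the resulting multi-index sums such as $\sum_{i<j}f(i)f(j)$ to power sums via $\tfrac12[(\sum f)^2-\sum f^2]$, asserting that higher orders work "analogously"; you instead pass to $\log R=\sum_j\bigl(\log\phi(j-\rho b,\epsilon)-\log\phi(j+1,\epsilon)\bigr)$, so that the product over $j$ becomes a single sum of polynomials in $(j,\rho)$, Faulhaber applies immediately and uniformly in $k$, and exponentiation restores $R$. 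Your route buys a cleaner, fully uniform treatment of all orders (no "similar expressions" step), avoids handling the reciprocal of the denominator product separately, and in fact yields the marginally sharper total-degree bound $2k$ rather than the paper's $2k+1$ (which of course still implies the stated bound). Both arguments ultimately rest on the same fact, that $\sum_{j=0}^{m-1}j^{d}$ is a polynomial in $m$ of degree $d+1$; your log--exp bookkeeping is just the systematic version of the paper's ad hoc symmetrization. The few delicate points you flag — the exact Pochhammer-to-Gamma identification and the degree control on the coefficients $c_k(\alpha)$ of $\log\phi$ — check out as written.
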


\begin{proof}
	Indeed, it is easy to extract common non-polynomial factor
	\begin{align}
		& \frac{\prod_{i = 1}^m (1 - (1 + \epsilon)^{-\rho b + i - 1})}
		{\prod_{i = 1}^m (1 - (1 + \epsilon)^i)}
		 = \frac{\prod_{i = 1}^m \left( \sum_{k=1}^\infty \binomial{-\rho b + i - 1}{k} \epsilon^k \right)}
		{\prod_{i = 1}^m \left( \sum_{k=1}^\infty \binomial{i}{k} \epsilon^k \right)} \\ \notag
		& = \frac{\prod_{i=1}^m (-\rho b+i-1)}{\prod_{i=1}^m i}
		\frac{\prod_{i = 1}^m \left( \sum_{k=0}^\infty \epsilon^k \binomial{-\rho b + i - 1}{k+1}\frac{1}{(-\rho b+i-1)}  \right)}
		{\prod_{i = 1}^m \left( \sum_{k=0}^\infty \epsilon^k \binomial{i}{k+1}\frac{1}{i} \right)}.
	\end{align}
	Let's illustrate what happens by expanding the numerator up to second order in $\epsilon$:
	\begin{align}
		1 + \epsilon \sum_{i=1}^m \frac{-\rho b + i - 2}{2}
		+ \epsilon^2 \Bigg(&\sum_{i=1}^m \frac{(-\rho b + i - 2)(-\rho b + i - 3)}{3 \cdot 2}	
		\\ \notag &
		+ \sum_{i < j} \frac{(-\rho b + i - 2)(-\rho b + j - 2)}{2 \cdot 2} \Bigg).
	\end{align}
	We clearly see that terms linear and quadratic $\epsilon$ are polynomial in $m$ and $\rho$.
	This follows from the fact that sums of the form
	\begin{align}\label{eq:harmonicsums}
		\sum_{i=1}^m i^k
	\end{align}
	are polynomials in $m$ of degree $k+1$, and from our ability to rewrite
	\begin{align}
		\sum_{i<j} f(i) f(j) = \frac{1}{2} \left [\left(\sum_{i=1}^m f(i)\right)^2  - \sum_{i=1}^m f(i)^2 \right ].
	\end{align}
	The coefficient in front of $\epsilon^k$ in the whole expression is some finite sum
	(the number of summands depends only on $k$ and not on $m$ or $\rho$) of similar expressions
	and hence analogously is also polynomial in $m$ and $\rho$.
	Since expressions \eqref{eq:harmonicsums} have degree $k+1$ in $m$ and we see that in the coefficient of $\epsilon^k$ the degree of $\rho$ is at most $k$, the total degree of  $Poly_{k}(m, \rho)$ is at most $2k+1$.
\end{proof}

Finally, let's consider the expansion of the $q$-hypergeometric function. It's not simply polynomial,
but instead it is a linear combination of the usual hypergeometric function and its first derivative with polynomial coefficients.

\begin{lemma}\label{lem:hyperpoly}
	For $q=1+\epsilon$,
	\begin{align} 
		& [\epsilon^k] \thephi \left(q^{-m}, q^{\rho b}; q^{\rho b + 1 - m} ; q ; a q \right)
		\\ \notag
		& = Poly_{1,k}(m, \rho) \theF(-m, \rho b; \rho b + 1 - m)(a) \\ \notag
		& + Poly_{2,k}(m, \rho) \frac{d}{d a} \theF(-m, \rho b; \rho b + 1 - m)(a),
	\end{align}
	where  $Poly_{1,k}(m, \rho)$ and $Poly_{1,k}(m, \rho)$ are polynomials in $m$ and $\rho$ of total degree no greater than $5k$.
\end{lemma}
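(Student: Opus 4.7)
The plan is to expand the $q$-hypergeometric series term-by-term in $\epsilon$, following the technique of Lemma~\ref{lem:q-poch-poly}, and then match the result to the claimed decomposition by comparing coefficients of $a^n$ on both sides.

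First I would apply the $q=1+\epsilon$ expansion to each of the four $q$-Pochhammer factors in
\begin{align*}
\thephi(q^{-m}, q^{\rho b}; q^{\rho b + 1 - m}; q; aq) = \sum_{n=0}^{m} \frac{(q^{-m};q)_n (q^{\rho b};q)_n}{(q^{\rho b + 1 - m};q)_n (q;q)_n}(aq)^n,
\end{align*}
writing
\begin{align*}
\frac{(q^s;q)_n}{(-\epsilon)^n(s)_n} = \prod_{j=0}^{n-1}\frac{(1+\epsilon)^{s+j}-1}{\epsilon(s+j)} = 1 + \sum_{k\geq 1}\epsilon^k\,\sigma_k(s,n),
\end{align*}
where each $\sigma_k(s,n)$ is polynomial in $s$ and $n$ (using the same nested-sum manipulations as in Lemma~\ref{lem:q-poch-poly}). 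Taking the ratio of the four factors and combining with $(aq)^n = a^n(1+\epsilon)^n$ yields
\begin{align*}
[\epsilon^k]\thephi(\ldots;aq) = \sum_{n=0}^{m}T_n(1)\,a^n\,\tilde R_k(n,m,\rho),
\end{align*}
where $T_n(1) = \frac{(-m)_n(\rho b)_n}{(\rho b + 1 - m)_n\, n!}$ and $\tilde R_k$ is a polynomial in $n,m,\rho$ of total degree $O(k)$.

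Second, I would translate the target identity into polynomial constraints on $\tilde R_k$. Since $[a^n]F(a) = T_n(1)$ and $[a^n]F'(a) = T_n(1)\cdot\tfrac{(n-m)(\rho b + n)}{\rho b + 1 - m + n}$, the claimed identity $[\epsilon^k]\thephi = Poly_{1,k}\,F + Poly_{2,k}\,F'$ is equivalent, after matching coefficients of $a^n$ and clearing the denominator $\rho b + 1 - m + n$, to the following identity of polynomials in $n$:
\begin{align*}
(\rho b + 1 - m + n)\,\tilde R_k(n,m,\rho) = Poly_{1,k}(m,\rho)\,(\rho b + 1 - m + n) + Poly_{2,k}(m,\rho)\,(n-m)(\rho b + n).
\end{align*}
The heart of the proof is to show that $(\rho b + 1 - m + n)\,\tilde R_k$ lies in the $\mathbb{C}[m,\rho]$-span of the two polynomials $\rho b + 1 - m + n$ and $(n-m)(\rho b + n)$. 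I expect this to follow from the $q$-difference equation satisfied by $\thephi$, which degenerates to the hypergeometric ODE for $F$ as $q\to 1$; expanding the $q$-difference equation order by order in $\epsilon$ should yield inductive constraints forcing the $\tilde R_k$ into exactly this shape. Once this structural property is established, $Poly_{1,k}$ and $Poly_{2,k}$ are recovered by Cramer's rule in two unknowns, and their polynomiality together with the degree bound $5k$ follows by tracking degrees through the expansion: each of the four Pochhammer factors contributes degree $O(k)$ per the analogue of Lemma~\ref{lem:q-poch-poly}, plus a bounded additional contribution from the reduction step.

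The main obstacle is precisely the structural step: verifying that $(\rho b + 1 - m + n)\,\tilde R_k$ genuinely sits inside the required two-dimensional $\mathbb{C}[m,\rho]$-submodule of $\mathbb{C}[m,\rho,n]$. Although the $q$-difference-equation route feels natural and conceptual, a direct combinatorial verification from the explicit form of the $\sigma_k$'s may ultimately be required. An alternative I would try in parallel is to exploit a $q$-contiguous identity relating $\thephi(q^{-m},q^{\rho b};q^{\rho b+1-m};q;aq)$ to shifted versions that manifestly exhibit $F$ and $F'$ contributions order-by-order in $\epsilon$, as this would sidestep the reduction step entirely.
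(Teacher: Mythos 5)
Your first step---expanding the four $q$-Pochhammer factors at $q=1+\epsilon$ and writing $[\epsilon^k]\thephi=\sum_{n=0}^{m}T_n(1)\,a^n\,\tilde R_k(n,m,\rho)$ with $\tilde R_k$ polynomial in $n,m,\rho$---coincides with the paper's. The divergence, and the gap, is in your second step. The paper does not match coefficients of $a^n$; it observes that multiplying the $n$-th term of the series by a polynomial in $n$ amounts to applying the operator $\tilde R_k\big(z\frac{d}{dz},m,\rho\big)$ to $\theF(-m,\rho b;\rho b+1-m)(z)$ at $z=a$, and then uses the hypergeometric ODE $a(1-a)F''+(\rho b+1-m)(1-a)F'+m\rho b\,F=0$ to rewrite all higher derivatives in terms of $F$ and $F'$. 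Crucially, this reduction produces coefficients $Poly_{1,k}$, $Poly_{2,k}$ that depend on $a$ nontrivially (rationally, with denominators involving $1-a$); for instance
\begin{align*}
\Big(z\tfrac{d}{dz}\Big)^2 F\Big|_{z=a}=a(m-\rho b)\,F'(a)-\frac{a\,m\rho b}{1-a}\,F(a).
\end{align*}
The lemma only asserts polynomiality in $m$ and $\rho$; the coefficients are allowed to, and do, depend on $a$.

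Your reformulation silently assumes the opposite. The step $[a^n]\big(P_1F+P_2F'\big)=P_1[a^n]F+P_2[a^n]F'$, which underlies your ``equivalent identity,'' is valid only if $P_1,P_2$ are independent of $a$; otherwise the products become convolutions in the $a$-expansion and the coefficients do not decouple into a two-unknown linear system for each $n$. Under that implicit assumption your structural claim is in fact false rather than merely unproven: for $k\ge 1$ the polynomial $\tilde R_k$ has degree up to $k+1\ge 2$ in $n$, so $(\rho b+1-m+n)\tilde R_k$ has degree $\ge 3$ in $n$ and cannot lie in the $\mathbb{C}[m,\rho]$-span of the two polynomials $\rho b+1-m+n$ and $(n-m)(\rho b+n)$, each of degree at most $2$ in $n$; and if one instead only requires the relation for $n=0,\dots,m$, one gets an overdetermined system of $m+1$ equations in two $a$-independent unknowns, whose solvability is exactly what would need to be proved and which generically fails. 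So the ``heart of the proof,'' which you defer to a hoped-for consequence of the $q$-difference equation, cannot be established in the form you state it. The repair is precisely the paper's route: keep the degree-$(k+1)$ polynomial in $n$ as an Euler operator acting on $F$, eliminate the higher derivatives via the ODE (accepting $a$-dependent coefficients), and track only the degree in $(m,\rho)$, which rises by at most $2$ per reduction step, yielding the stated bound $5k$.
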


\begin{proof}
	Consider the coefficient in front of $z^n$ in the definition of the $q$-hypergeometric function \eqref{eq:q-hypergeom-def}.
	Analogously to Lemma \ref{lem:q-poch-poly} its coefficient in front of $\epsilon^k$
	is the ratio of the ordinary Pochhammer symbols times a polynomial in $n$, $m$ and $\rho$:
	\begin{align} \notag
		\frac{(-m)_n (\rho b)_n}{(\rho b + 1 - m)_n n!} Poly_k(n, m, \rho),
	\end{align}
	Here $Poly_k(n, m, \rho)$ has total degree in $m$ and $\rho$ no greater than $k$, and its degree in $n$ is no greater than $k+1$.
	
	If we pretend for a moment that the argument of our $q$-hypergeometric function is $a$
	and not $a q$, we would immediately conclude that
	\begin{align} \label{eq:phi-diff-poly}
	&
		[\epsilon^k] \thephi \left(q^{-m}, q^{\rho b}; q^{\rho b + 1 - m} ; q ; a \right)
		\\ \notag &
		= Poly_k(z \frac{d}{d z}, m, \rho) \theF(-m, \rho b; \rho b + 1 - m)(z) \Big{|}_{z = a}.
	\end{align}
	
	The hypergeometric function $\theF (a, b; c)(z)$ satisfies the hypergeometric equation
	\begin{align}
		z(1-z) \frac{d^2}{d z^2} F + \left[ c - (a+b+1)z \right] \frac{d}{dz} F - a b F = 0,
	\end{align}
	which in our case takes the form
	\begin{align}
		a(1-a) \frac{d^2}{d a^2} F + (\rho b + 1 - m)(1 - a) \frac{d}{d a} F + m \rho b F = 0.
	\end{align}
	So we see that we can eliminate all higher derivatives from the formula~\eqref{eq:phi-diff-poly}. We apply this relation no more than $k$ times (as the degree of $Poly_k(n, m, \rho)$ in $n$ is no greater than $k+1$), and with each application the total degree in $m$ and $\rho$ rises by no more than $2$. Thus we see that the resulting polynomials in the coefficients in front of the hypergeometric functions have total degrees in $m$ and $\rho$ no greater than $5k$.
	
	One final observation is that $a q$ instead of $a$ in the last argument of $\thephi$ does not spoil anything. It just results in a finite, $m$-, $n$- and $\rho$-independent resummation of the polynomials	$Poly_k(n, m, \rho)$, which preserves the bound on the total degree. This completes the proof of the lemma.
\end{proof}

Recall the notation $J_m$ introduced in \eqref{eq:Jdef}. Let us prove the following
\begin{proposition}\label{prop:hyperjacrel} We have:
	\begin{align}\label{eq:hyperfirstJ1}
		\frac{\Gamma(m - \rho b)}{\Gamma(m+1) \Gamma(-\rho b)}\theF(-m, \rho b; \rho b + 1 - m)(a) &= (-1)^m \dfrac{(1-a)\rho b}{m} J_{m-1}\\ \label{eq:hyperfirstJ2}
		\frac{\Gamma(m - \rho b)}{\Gamma(m+1) \Gamma(-\rho b)}\frac{d}{d a} \theF(-m, \rho b; \rho b + 1 - m)(a) &= (-1)^{m+1}\rho b \left(J_{m-1} + J_{m-2}\right),
	\end{align}
\end{proposition}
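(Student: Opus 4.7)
The plan is to reduce both identities to standard identities for the Gauss hypergeometric function ${}_2F_1$, after unfolding the definition of the Jacobi polynomial. Using Definition~\ref{def:jacobi} with $\alpha = \rho b - m$, $\beta = 1$, and argument $1-2a$, one writes
$$J_{m-1}(\rho) = \frac{\Gamma(\rho b)}{\Gamma(\rho b - m + 1)\,(m-1)!}\,\theF(1-m,\,\rho b + 1;\,\rho b - m + 1;\,a),$$
and similarly for $J_{m-2}$. Both identities to be proved thus become relations between values (and one derivative) of ${}_2F_1$ evaluated at $a$, with prefactors that are rational in Gamma functions.

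For \eqref{eq:hyperfirstJ1}, I would apply Euler's transformation $\theF(\alpha,\beta;\gamma;z) = (1-z)^{\gamma-\alpha-\beta}\theF(\gamma-\alpha,\gamma-\beta;\gamma;z)$ to $\theF(-m,\rho b;\rho b+1-m;a)$. Since $\gamma-\alpha-\beta = 1$, this produces precisely the factor $(1-a)$ times $\theF(1-m,\rho b + 1;\rho b - m + 1;a)$, which is the hypergeometric appearing in $J_{m-1}$. The identity then reduces to the scalar equality
$$\frac{\Gamma(m-\rho b)}{\Gamma(m+1)\Gamma(-\rho b)} = (-1)^m\,\frac{\rho b}{m}\cdot\frac{\Gamma(\rho b)}{\Gamma(\rho b - m + 1)\,(m-1)!},$$
which is routine once one recognizes that both sides equal $(-\rho b)_m/m!$ via the reflection of a rising factorial $(-\rho b)_m = (-1)^m (\rho b - m + 1)_m$.

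For \eqref{eq:hyperfirstJ2}, I plan two complementary routes. The direct route is to differentiate using $\frac{d}{dz}\theF(\alpha,\beta;\gamma;z) = \frac{\alpha\beta}{\gamma}\theF(\alpha+1,\beta+1;\gamma+1;z)$, producing a scalar multiple of $\theF(1-m,\rho b + 1;\rho b - m + 2;a)$, and then apply a Gauss contiguous relation in the lower parameter $\gamma$ to decompose it into the hypergeometrics attached to $J_{m-1}$ and $J_{m-2}$. Equivalently, and cleaner, one can differentiate the already-established identity \eqref{eq:hyperfirstJ1} in $a$: after trivial cancellations this reduces the claim to the Jacobi-polynomial differential identity
$$(1-a)\,\frac{d J_{m-1}}{d a} = -(m-1)\,J_{m-1} - m\,J_{m-2},$$
which follows by combining the standard derivative rule $\tfrac{d}{dx}\mathcal{P}_n^{(\alpha,\beta)}(x) = \tfrac{n+\alpha+\beta+1}{2}\mathcal{P}_{n-1}^{(\alpha+1,\beta+1)}(x)$ with the parameter-shift relations~\eqref{eq:JacobiPRel5}–\eqref{eq:JacobiPRel8} already invoked in Section~\ref{sec:jacobi}.

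The only non-mechanical step is this last one: because the upper parameter $\alpha = \rho b - m - 1$ depends on both $m$ and $\rho$, matching a hypergeometric whose parameters are shifted by $+1$ in $\gamma$ against the pair $(J_{m-1},J_{m-2})$ requires picking exactly the right Gauss contiguous relation (or, in the alternative route, chaining \eqref{eq:JacobiPRel5}–\eqref{eq:JacobiPRel8} in the right order); everything else is Gamma-function bookkeeping of the same flavour as the three-term relation~\eqref{eq:JacobiThreeTerm} already proved above.
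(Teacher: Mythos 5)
Your proposal is correct, but it reaches \eqref{eq:hyperfirstJ1} by a genuinely different route than the paper. The paper first converts the left-hand side into a single Jacobi polynomial with a \emph{negative} second parameter, $\jp{m}{\rho b-m}{-1}(1-2a)$, via the reflection $\Gamma(m-\rho b)/\Gamma(-\rho b)=(-1)^m\Gamma(\rho b+1)/\Gamma(\rho b-m+1)$, and then grinds this down to $\tfrac{(1-a)\rho b}{m}J_{m-1}$ by applying \eqref{eq:jacnrel1} twice, then \eqref{eq:jacnrel2}, and finally the three-term relation of Proposition~\ref{prop:JacobiThreeTerm} to kill the spurious $J_{m+1},J_m,J_{m-2}$ terms. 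You instead unfold $J_{m-1}$ into its own $\theF$ representation and hit the left-hand side with Euler's transformation; since $\gamma-\alpha-\beta=1$ here, the factor $(1-a)$ and the hypergeometric $\theF(1-m,\rho b+1;\rho b-m+1;a)$ of $J_{m-1}$ appear in one stroke, and the rest is the Pochhammer identity $(-\rho b)_m=(-1)^m(\rho b-m+1)_m$, which I have checked. This is shorter and avoids the three-term relation entirely; the paper's chain of Jacobi contiguous relations is essentially the Jacobi-polynomial shadow of that same Euler transformation. For \eqref{eq:hyperfirstJ2}, your ``direct route'' is exactly the paper's proof: differentiate via $\tfrac{d}{dz}\theF(\alpha,\beta;\gamma;z)=\tfrac{\alpha\beta}{\gamma}\theF(\alpha+1,\beta+1;\gamma+1;z)$, recognize the result as $-\rho b\,\jp{m-1}{\rho b-m+1}{0}$, and split it as $J_{m-1}+J_{m-2}$ by \eqref{eq:jacnrel1} — your ``contiguous relation in the lower parameter'' is precisely that step in hypergeometric language. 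Your alternative route (differentiating \eqref{eq:hyperfirstJ1} and reducing to $(1-a)\,\tfrac{d}{da}J_{m-1}=-(m-1)J_{m-1}-mJ_{m-2}$) also works — the target identity is correct — but note that closing it requires a $\beta$-lowering contiguous relation taking $\mathcal{P}^{(\alpha+1,2)}_{m-2}$ back to $\beta=1$, which is not literally among \eqref{eq:JacobiPRel5}--\eqref{eq:jacnrel2}, so the direct route is the cleaner one to write up.
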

\begin{proof}
	Let us start with proving \eqref{eq:hyperfirstJ1}.
	First note that
	\begin{equation}\label{eq:gammarel}
	\frac{\Gamma(m - \rho b)}{\Gamma(-\rho b)} = (-1)^m \frac{\Gamma(\rho b+1)}{\Gamma(\rho b-m+1)}.
	\end{equation}
	Then note that from the definition of Jacobi polynomials $\jp{m}{\al}{\be}$ we have:
	\begin{equation}\label{eq:hyperjac}
	\frac{\Gamma(\rho b+1)}{\Gamma(m+1)\Gamma(\rho b-m+1)}\theF(-m, \rho b; \rho b + 1 - m)(a) = \jp{m}{\rho b -m}{-1}(1- 2a).
	\end{equation}
	Recall the following well-known relations for the Jacobi polynomials (see e.g. \cite{WTF}):
	\begin{align}	\label{eq:jacnrel1}
		\jp{n}{\al}{\be-1}(x) &= \jp{n-1}{\al}{\be}(x) + \jp{n}{\al-1}{\be}(x), \\ \label{eq:jacnrel2}
		(n+\al+1)\jp{n}{\al}{\be}(x) &= (n+1)\jp{n+1}{\al}{\be}(x) + \left(n+\dfrac{\al}{2}+\dfrac{\be}{2}+1\right)(1-x)\jp{n}{\al+1}{\be}(x).
	\end{align}
	Also recall the three-term relation \eqref{eq:JacobiThreeTerm} proved in Proposition \ref{prop:JacobiThreeTerm}. With the help of the aforementioned three relations we obtain (in what follows we omit the arguments of the Jacobi polynomials for brevity, they are always equal to $1-2a$):
	\begin{align}\label{eq:jactransf1}
		\jp{m}{\rho b -m}{-1} & \mathop{=}_{\eqref{eq:jacnrel1}} \jp{m-1}{\rho b-m}{0}+\jp{m}{\rho b-m-1}{0}
		\\ \notag &
		\mathop{=}_{\eqref{eq:jacnrel1}} \jp{m-2}{\rho b-m}{1}+2\jp{m-1}{\rho b-m-1}{1}+\jp{m}{\rho b-m-2}{1} \\ \nonumber
		&\mathop{=}_{\eqref{eq:jacnrel2}}\dfrac{1}{\rho b -1}\big((m+1)J_{m+1}+(2m+(m+\rho b +1)a J_m\\ \nonumber
		&\phantom{ \mathop{=}_{\eqref{eq:jacnrel2}} } + (m-1+2(m+\rho b)a)J_{m-1}+(m+\rho b -1)a J_{m-2})\big) \\ \nonumber
		&\mathop{=}_{\eqref{eq:JacobiThreeTerm}} \dfrac{(1-a)\rho b}{m} J_{m-1}.
	\end{align}
	Combining \eqref{eq:gammarel}, \eqref{eq:hyperjac} and \eqref{eq:jactransf1} we obtain the proof of \eqref{eq:hyperfirstJ1}.
	
	Let us continue to the proof of \eqref{eq:hyperfirstJ2}. Recall the formula for the derivative of the hypergeometric function:
	\begin{align}
	& \frac{d}{d a} \theF(-m, \rho b; \rho b + 1 - m)(a) 
	\\ \notag &
	= -\dfrac{m\rho b}{\rho b +1 -m}\theF(-m+1, \rho b+1; \rho b + 2 - m)(a)
	\end{align}
	Thus (also recalling the definition of the Jacobi polynomials)
	\begin{align}\label{eq:hyperderivjac}
		&\dfrac{\Gamma(\rho b+1)}{\Gamma(m+1)\Gamma(\rho b-m+1)}\frac{d}{d a}\theF(-m, \rho b; \rho b + 1 - m)(a) \\ \nonumber
		&= -\rho b \dfrac{\Gamma(\rho b+1)}{\Gamma(m)\Gamma(\rho b-m+2)}\theF(-m+1, \rho b+1; \rho b + 2 - m)(a) \\ \nonumber
		&= -\rho b \jp{m-1}{\rho b -m+1}{0}
	\end{align}
	With the help of \eqref{eq:jacnrel1} we get
	\begin{equation}\label{eq:jacrel1appl}
	\jp{m-1}{\rho b -m+1}{0} = \jp{m-2}{\rho b -m+1}{1}+\jp{m-1}{\rho b -m}{1} = J_{m-1}+J_{m-2}
	\end{equation}
	Combining \eqref{eq:gammarel}, \eqref{eq:hyperderivjac}, and \eqref{eq:jacrel1appl} we obtain the proof of \eqref{eq:hyperfirstJ2}.
\end{proof}

Now we are finally ready to prove Proposition \ref{prop:rationality}.
\begin{proof}[Proof of Proposition \ref{prop:rationality}]
	Proposition \ref{prop:rationality} is a straightforward implication of the results of Proposition~\ref{prop:hypergeom}, Lemmas~\ref{lem:coefficients-h-e}--\ref{lem:hyperpoly}, and Proposition~\ref{prop:hyperjacrel}. 
\end{proof}

\subsection{Properties of the $G$-polynomials}
%Let us prove one further technical proposition which will be needed in Section \ref{sec:Apoly}:
In the following sections we will need to know certain properties of the polynomials $G^1_k(\rho,m)$ and $G^2_k(\rho,m)$ introduced in Proposition \ref{prop:rationality} of the previous subsection. Let us describe and prove these properties.
\begin{lemma}\label{lem:G0} We have:
	\begin{align} \label{eq:Gk10}
		G_k^1(\rho,0)&=\delta_{k,0}(1-a)b\\ \label{eq:Gk20}
		G_k^2(\rho,0)&=0
	\end{align}
\end{lemma}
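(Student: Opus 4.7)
The strategy is to trace back through the derivation of Proposition \ref{prop:rationality}, since substituting $m=0$ directly into \eqref{eq:expJacobi} gives an indeterminate $0/0$: the LHS evaluates to $[u^{2k}] 1 = \delta_{k,0}$, while the RHS has $\rho/m$ multiplying $J_{-1}(\rho) + J_{-2}(\rho) = 0$ (the Jacobi polynomials of negative index vanishing by the usual convention).

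The plan is therefore to back up to the intermediate form, just before Proposition \ref{prop:hyperjacrel} is applied, where
\begin{equation*}
[w^m] \exp\!\Big(\sum_{i \geq 1} \tfrac{a^i-1}{i} w^i \tfrac{\zeta(iu\rho)}{\zeta(iu/b)}\Big) = \frac{\Gamma(m-\rho b)}{m!\,\Gamma(-\rho b)} \sum_{k \geq 0} u^{2k} \big[R_{1,2k}\, \theF + R_{2,2k}\, \tfrac{d}{da}\theF \big],
\end{equation*}
with $\theF = \theF(-m,\rho b;\rho b+1-m;a)$ and $R_{i,2k}(m,\rho)$ polynomials produced by combining Lemmas \ref{lem:q-poch-poly}, \ref{lem:hyperpoly} and the expansion of $(q^{1/2+\rho b/2})^m$ (the passage from the $\epsilon$-expansion to the $u$-expansion preserves polynomiality in $m,\rho$ by Lemma \ref{lem:coefficients-h-e}). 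Applying Proposition \ref{prop:hyperjacrel} to pass to Jacobi form and matching with the definition of $G^1_k, G^2_k$ gives the explicit formulas
\begin{equation*}
G^1_k(\rho,m) = (1-a)b\, R_{1,2k}(m,\rho) - m b\, R_{2,2k}(m,\rho), \qquad G^2_k(\rho,m) = -m b\, R_{2,2k}(m,\rho).
\end{equation*}

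From here the lemma is immediate. Equation \eqref{eq:Gk20} follows at once from the explicit factor $m$ in $G^2_k$. For \eqref{eq:Gk10} it suffices to verify $R_{1,2k}(0,\rho) = \delta_{k,0}$: I would evaluate the $q$-hypergeometric representation of Proposition \ref{prop:hypergeom} at $m=0$, where $(q^{1/2+\rho b/2})^0 = 1$, $(q^{-\rho b};q)_0/(q;q)_0 = 1$, and $\thephi(1,q^{\rho b};q^{\rho b+1};q;aq) = 1$ because $(1;q)_n = 0$ for $n \geq 1$, so the whole product is constantly $1$; then compare with the intermediate form at $m=0$, where the $\Gamma$-prefactor is $1$, $\theF(0,\rho b;\rho b+1;a) = 1$, and its $a$-derivative vanishes. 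This pins down $R_{1,2k}(0,\rho) = \delta_{k,0}$ and thus $G^1_k(\rho,0) = (1-a)b\,\delta_{k,0}$.

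The only real subtlety is the initial realization that one must work at the pre-Jacobi level: once the explicit formulas for $G^1_k$ and $G^2_k$ are exhibited with their manifest factors of $m$ and $(1-a)b$, both assertions become routine.
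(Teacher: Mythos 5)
Your proof is correct and follows essentially the same route as the paper: both arguments rest on the observation that the $J_{m-2}$ (hence $G^2_k$) contribution comes only from the derivative relation \eqref{eq:hyperfirstJ2}, which carries no $1/m$, forcing an explicit factor of $m$ in $G^2_k$, and on evaluating the identity at $m=0$, where the left-hand side of \eqref{eq:expJacobi} is $\delta_{k,0}$ and $\theF(0,\cdot;\cdot)(a)\equiv 1$. The only cosmetic difference is that you make the decomposition $G^1_k=(1-a)b\,R_{1,2k}-mb\,R_{2,2k}$, $G^2_k=-mb\,R_{2,2k}$ explicit and pin down $R_{1,2k}(0,\rho)=\delta_{k,0}$ via the $q$-hypergeometric form, whereas the paper packages the same information as $\lim_{m\to 0}J_{m-1}/m=1/(\rho b(1-a))$ and substitutes $m=0$ directly into \eqref{eq:expJacobi}.
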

\begin{proof}
	Note that the pole $m^{-1}$ in the RHS of \eqref{eq:expJacobi} is coming only from \eqref{eq:hyperfirstJ1} and not from \eqref{eq:hyperfirstJ2}, while, at the same time, the only contribution to $G_k^2$ is coming from \eqref{eq:hyperfirstJ2}. Thus, $G_k^2(\rho,m)$ is proportional to $m$, and thus $G_k^2(\rho,0)$ vanishes.
	
	Note that the LHS of \eqref{eq:hyperfirstJ1} makes perfect sense for $m=0$ and is actually equal to $1$ (since, as it follows straightforwardly from the definition, $\theF(0,\al,\be;x)=1$ for any $\al$, $\be$, and $x$). This implies that we have
	\begin{equation}
	\mathop{\lim}_{m\rightarrow 0} \dfrac{J_{m-1}}{m} = \dfrac{1}{\rho b (1-a)}.
	\end{equation}
	In fact, the proof of Proposition \ref{prop:rationality} works perfectly well for $m=0$ if one replaces 
	\begin{align}
	\dfrac{J_{m-1}}{m} \quad \text{with} \quad \dfrac{1}{\rho b (1-a)}
	\end{align}
	and puts $G_k^2(\rho,0) = 0$  in the RHS of \eqref{eq:expJacobi}. It is obvious that for $m=0$ the LHS of \eqref{eq:expJacobi} is equal to $\delta_{k,0}$. Thus we obtain \eqref{eq:Gk10}.
\end{proof}

%In order to study one-point correlators in Section \ref{sec:Apoly} we will also need the following proposition:
\begin{lemma}\label{lem:Gdoublezero}
	$\forall k \geq 1$ polynomials $G_k^1(m,m)$ and $G_k^2(m,m)$ are divisible by $m^2$, i.e. they have a double zero at $m=0$.
\end{lemma}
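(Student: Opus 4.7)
By Lemma~\ref{lem:G0}, for $k \geq 1$ both polynomials $G_k^1(\rho,m)$ and $G_k^2(\rho,m)$ vanish identically at $m = 0$, so $G_k^i(\rho,m) = m\,\widetilde G_k^i(\rho,m)$ and the statement reduces to $\widetilde G_k^i(0,0) = 0$ for $i = 1, 2$. Equivalently (using also $\partial_\rho G_k^i(\rho, 0) \equiv 0$), it suffices to show that the linear-in-$m$ coefficient of the one-variable polynomial $G_k^i(0, m)$ vanishes.

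My plan is to substitute $\rho = m$ in Proposition~\ref{prop:rationality} and compare both sides. On the left-hand side, the key structural observation is that the exponent of
\begin{align*}
F(w,u,A) \;=\; \exp\!\left(\sum_{i=1}^\infty \frac{a^i - 1}{i}\, w^i\, \frac{\zeta(iuA)}{\zeta(iub^{-1})}\right)
\end{align*}
is odd in $A$, so $F(w,u,A) = 1 + A\Psi_1 + \tfrac{A^2}{2}\Psi_1^2 + O(A^3)$ with $\Psi_1(w,u) = \sum_{i\geq 1} (a^i - 1)\, w^i\, u/\zeta(iub^{-1})$. Setting $A = m$ and extracting $[u^{2k}w^m]$ with $k \geq 1$ and $m \geq 1$, the constant term drops and the $A$-linear contribution is $m \cdot [u^{2k}w^m]\Psi_1 = (a^m - 1)\, b^{1-2k}\, m^{2k}\, c_{2k}$ where $c_{2k} := [z^{2k}]\bigl(z/\zeta(z)\bigr)$. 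This carries an explicit factor $m^{2k} \geq m^2$; all subsequent contributions from the $A$-expansion carry explicit $m^j$-factors with $j \geq 2$.

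The right-hand side, after $\rho = m$ substitution, reads $(-1)^m m\bigl(\widetilde G_k^1(m,m) J_{m-1}(m) + \widetilde G_k^2(m,m) J_{m-2}(m)\bigr)$. Using Proposition~\ref{prop:j12exprs} at $\rho = m$ and a careful Gamma-function limit one computes $J_{m-1}(m) = 1/(b(1-a)) + O(m)$ and $J_{m-2}(m) = -1/(b(1-a)) + O(m)$ as $m \to 0$, so the two functions are proportional at leading order but linearly independent as polynomials in $a$ for generic $m$. I would combine Proposition~\ref{prop:j12exprs} with the three-term relation of Proposition~\ref{prop:JacobiThreeTerm} to rewrite the $(a^m - 1)$-factor and the higher convolutions $\sum_{i_1 + \cdots + i_j = m} \prod_l (a^{i_l} - 1)$ as polynomial-in-$m$ combinations of $J_{m-1}(m)$ and $J_{m-2}(m)$, each bearing an explicit factor of $m^2$. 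Matching both sides order by order in $m$ then forces the $m^0$ and $m^1$ coefficients of $\widetilde G_k^i(m,m)$ to vanish.

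The main obstacle is the degeneration of the basis $\{J_{m-1}(m), J_{m-2}(m)\}$ at $m = 0$ (where the two functions become proportional), which requires an order-by-order $m$-expansion argument rather than a simple linear-independence argument. A clean backup route is to use the $q$-hypergeometric representation of $[w^m]F$ from Proposition~\ref{prop:hypergeom} at $\rho = m$: the resulting ${}_2\phi_1$ is \emph{balanced} (since $q^{mb + 1 - m} = q \cdot q^{-m} \cdot q^{mb}$), and expanding in $\epsilon = q - 1$ one can verify vanishing of the constant and linear $m$-coefficients of $G_k^i(m,m)$ directly using the explicit form of the $(q^{-mb}; q)_m/(q; q)_m$ prefactor at $\rho = m$.
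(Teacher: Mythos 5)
Your reduction is sound: by Lemma~\ref{lem:G0} it suffices to control the coefficients of $m^0$ and $m^1$ in $G_k^i(m,m)$, and your structural observation about the left-hand side of \eqref{eq:expJacobi} --- that the exponent is odd in $\rho$, so that at $\rho=m$ the order-$j$ term carries an explicit factor $m^j$, with the $j=1$ piece equal to $(a^m-1)b^{1-2k}m^{2k}$ times a Taylor coefficient of $z/\zeta(z)$ --- is correct and is essentially the same input the paper uses. However, the step you yourself flag as the ``main obstacle'' is a genuine gap, not a technicality. An explicit prefactor $m^2$ in front of a non-polynomial expression (a sum over partitions of $m$ of products of $a^i-1$) says nothing by itself about divisibility by $m^2$ of the \emph{polynomial} coefficients $G_k^1(m,m)$, $G_k^2(m,m)$ in the decomposition with respect to $\{J_{m-1}(m),J_{m-2}(m)\}$: to transfer the divisibility you would have to re-expand the remainder in that basis with polynomial coefficients and control them, and exactly at $m=0$ the basis degenerates ($J_{m-1}(m)$ and $J_{m-2}(m)$ become proportional), so naive order-by-order matching only yields the single relation $\widetilde G_k^1(0,0)-\widetilde G_k^2(0,0)=0$ rather than the two you need. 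Your proposed repair --- rewriting the $(a^m-1)$-factors and the higher convolutions as polynomial-in-$m$ combinations of $J_{m-1}(m)$ and $J_{m-2}(m)$ each bearing an explicit $m^2$ --- is precisely the hard part and is not carried out; it is moreover unclear that such a rewriting terminates, since inverting Proposition~\ref{prop:j12exprs} to express $a^m-1$ through $J_{m-1}(m)$ reintroduces the same partition sums recursively. The ``balanced ${}_2\phi_1$'' backup route has the same unresolved issue of separating the two components of a degenerate basis.

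The paper's proof avoids the degeneration entirely by \emph{not} setting $\rho=m$ at this stage: it extracts the coefficient of $\rho^1$ in \eqref{eq:expJacobi} for each fixed integer $m>1$. On the left this produces the single clean term $(a^m-1)b^{1-2k}s_{2k}m^{2k-1}$, and on the right, via Proposition~\ref{prop:j12exprs}, the combination
\begin{equation*}
\frac{1}{m(1-a)}\Big(G_k^1(0,m)\,(a^m-1)-G_k^2(0,m)\,(a^{m-1}-1)\Big).
\end{equation*}
Since $a^m-1$ and $a^{m-1}-1$ are linearly independent over the ring of polynomials in $m$ with coefficients depending on $a$, and the identity holds for all integers $m>1$, this forces $G_k^2(0,m)\equiv 0$ and $m^{2k}\mid G_k^1(0,m)$; the divisibility of $G_k^1(m,m)$ and $G_k^2(m,m)$ by $m^2$ then follows from Lemma~\ref{lem:G0} by elementary bookkeeping. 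I recommend you restructure your argument along these lines: expand in $\rho$ first, exploit the non-degenerate pair $\{a^m-1,\ a^{m-1}-1\}$, and only then substitute $\rho=m$.
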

\begin{proof}%[Proof of Proposition \ref{prop:Gdoublezero}]
	Let $m\in\mathbb{Z}_{> 1}$ in what follows. Consider the LHS of \eqref{eq:expJacobi}. Let us introduce a piece of notation:
	\begin{equation}
	S(z):=\zeta(z)/z.
	\end{equation}
	Note that the Taylor series of $S(z)$ at $z=0$ starts from $1$ and contains only even powers of $z$. With the help of this notation we rewrite the  LHS of \eqref{eq:expJacobi} as follows:
	\begin{align}
		& [u^{2k}w^m]\exp\left(\sum_{i=1}^\infty \frac{a^i-1}{i}\cdot w^i \dfrac{\zeta(iu\rho)}{\zeta(iu b^{-1})}\right)
		\\ \notag &
		=[u^{2k}w^m]\exp\left(\sum_{i=1}^\infty \frac{a^i-1}{i}\cdot w^i \rho b \dfrac{S(iu\rho)}{S(iu b^{-1})}\right)
	\end{align}
	Let us denote the Taylor coefficients of $S(z)^{-1}$ as $s_{2k}$, i.e.
	\begin{equation}
	\dfrac{1}{S(z)} = 1 + \sum_{k=1}^\infty s_{2k}z^{2k}
	\end{equation}
	Then we have 
	\begin{align}\label{eq:jelexp}
		&[u^{2k}w^m]\exp\left(\sum_{i=1}^\infty \frac{a^i-1}{i}\cdot w^i \dfrac{\zeta(iu\rho)}{\zeta(iu b^{-1})}\right)
		\\ \notag &
		=[u^{2k}w^m]\exp\left(\sum_{i=1}^\infty \frac{a^i-1}{i}\cdot w^i \rho b \dfrac{1+O(\rho^2)}{S(iu b^{-1})}\right) \\ \nonumber
		&=\dfrac{a^m-1}{m}\rho b\;[u^{2k}]\dfrac{1}{S(mu b^{-1})} + O(\rho^2)
		\\ \notag &
		 = (a^{m}-1)\be^{1-2k}\,s_{2k}\,m^{2k-1}\,\rho + O(\rho^2).
	\end{align}
	Now consider the RHS of \eqref{eq:expJacobi}. From Proposition \ref{prop:j12exprs} we have:
	\begin{align}\label{eq:jerexp}
		%[u^{2k}w^m]\exp\left(\sum_{i=1}^\infty \frac{a^i-1}{i}\cdot w^i \dfrac{\zeta(iu\rho)}{\zeta(iu b^{-1})}\right)=
		&(-1)^m \dfrac{\rho}{m}\left(G_k^1(\rho,m)J_{m-1}+G_k^2(\rho,m)J_{m-2}\right)\\ \nonumber
		&=G_k^1(\rho,m)\dfrac{1}{\be(1-a)}[w^m]\exp\left(\sum_{i=1}^\infty \frac{a^i-1}{i}\cdot w^i \rho b\right) \\ \nonumber
		&\phantom{ =\ }-G_k^2(\rho,m)\dfrac{1}{\be(1-a)}\dfrac{m-1}{m}[w^{m-1}]\exp\left(\sum_{i=1}^\infty \frac{a^i-1}{i}\cdot w^i \rho b\right) \\ \nonumber
		&=G_k^1(\rho,m)\dfrac{1}{\be(1-a)}\left(\dfrac{a^m-1}{m}\rho b + O(\rho^2)\right) \\ \nonumber
		&\phantom{=\ }-G_k^2(\rho,m)\dfrac{1}{\be(1-a)}\dfrac{m-1}{m}\left(\dfrac{a^{m-1}-1}{m-1}\rho b + O(\rho^2)\right)\\ \nonumber
		&=\dfrac{1}{m(1-a)}\Big(G_k^1(\rho,m)\left(a^{m}-1\right)-G_k^2(\rho,m)\left(a^{m-1}-1\right)\Big)\rho + O(\rho^2).
	\end{align}
	From \eqref{eq:expJacobi} we know that the RHS of \eqref{eq:jelexp} and the RHS of \eqref{eq:jerexp} must coincide, i.e. for any $ m \in \mathbb{Z}_{>1}$ we have
	\begin{equation}\label{eq:g1g2rel}
	G_k^1(\rho,m)\left(a^{m}-1\right)-G_k^2(\rho,m)\left(a^{m-1}-1\right) = (1-a) \be^{1-2k} s_{2k}\, m^{2k} (a^m-1) + O(\rho)
	\end{equation}
	and thus for all $ m \in \mathbb{Z}_{>1}$ we have
	\begin{equation}\label{eq:g1g2rel2}
	G_k^1(0,m)\left(a^{m}-1\right)-G_k^2(0,m)\left(a^{m-1}-1\right) = (1-a) \be^{1-2k} s_{2k}\, m^{2k} (a^m-1).
	\end{equation}
	Here $s_{2k}$ are fixed constants that do not depend on $a$, and $G_k^1(0,m)$ and $G_k^2(0,m)$ are fixed polynomials in $m$ with coefficients depending on $a$.
	
	Since functions $a^{m-1}-1$ and $a^m-1$ are linearly independent over the ring of polynomials in $m$ with coefficients depending on $a$, the fact that we have equality \eqref{eq:g1g2rel2} for all $ m \in \mathbb{Z}_{>1}$ implies that
	\begin{equation} \label{eq:g2rho}
	G_k^2(0,m) = 0.
	\end{equation}
	%Since from Proposition \ref{prop:rationality} we know that $G_k^1(\rho,m)$ and $G_k^2(\rho,m)$ are polynomials in $\rho$ and $m$, \eqref{eq:g1g2rel} implies that
	%\begin{equation} \label{eq:g2rho}
	%G_k^2(0,m) = 0,
	%\end{equation}
	%since functions $a^{m-1}-1$ and $a^m-1$ are linearly independent over the ring of polynomials in $m$ with coefficients depending on $a$.
	Furthermore, we obtain that
	\begin{equation} \label{eq:g1div}
	G_k^1(0,m) = m^{2k} \widecheck{G}_k^1(m),
	\end{equation} 
	where $\widecheck{G}_k^1(m)$ is some polynomial in $m$.
	
	Note that Lemma~\ref{lem:G0} implies that $G_k^1(\rho,m)$ and $G_k^2(\rho,m)$ are divisible by $m$. Equality \eqref{eq:g2rho} furthermore implies that $G_k^2(\rho,m)$ is divisible by $\rho$, which means that $G_k^2(m,m)$ is divisible by $m^2$, which is exactly what we wanted to prove regarding $G_k^2(m,m)$.
	
	Now for $G_k^1(\rho,m)$ we have
	\begin{equation}
	G_k^1(\rho,m) = G_k^1(0,m) + \rho \widetilde{G_k}^1(\rho,m),
	\end{equation}
	where $\widetilde{G_k}^1(\rho,m)$ is some polynomial in $\rho$ and $m$. Since $G_k^1(\rho,m)$ is divisible by $m$ as noted above,  $\widetilde{G_k}^1(\rho,m)$ is divisible by $m$ as well. This implies that the second term in the RHS of the equality
	\begin{equation}\label{eq:g1rexpr}
	G_k^1(m,m) = G_k^1(0,m) + m \widetilde{G_k}^1(m,m),
	\end{equation}
	is divisible by $m^2$. Equality \eqref{eq:g1div} implies that for $k\in \mathbb{Z}_{\geq 1}$ the first term in the RHS of  \eqref{eq:g1rexpr} is divisible by $m^2$ as well. Thus, for $k\in \mathbb{Z}_{\geq 1}$ polynomial $G_k^1(m,m)$ is divisible by $m^2$.
\end{proof}

\section{The $\mathcal{A}$-operators and their rationality} \label{sec:aops}
In Section 2 we defined the extended Ooguri-Vafa partition function $\Ze(\pb)$ as a fermionic correlator, whose coefficient in front of the monomial $\pb_{\mu_1/Q}\dots \pb_{\mu_k/Q}$ is proportional to $K_\mu(u)$~\eqref{eq:kmu-def}. In this section we represent $K_\mu(u)$ as a correlator of a product of some operators $\At(\mu_i,u\mu_i)$ and prove that
this correlator is a polylinear combination of
$\xi^1_{\mu_i}$ and $\xi^2_{\mu_i}$ (see Notation \ref{not:xi-coeffs}) with coefficients
that are rational functions in $\mu_i$. 

\subsection{$\mathcal{A}$-operators}
From \eqref{eq:kmu-def} we have
\begin{align} \label{eq:kexpr}
	& K_\mu(u) = 	
	\left\langle
	\exp \lb \sum_{i=1}^\infty \frac{\alpha_{i} p^*_i}{i} \rb
	\exp \lb u \Fc_2 \rb
	%	e^{u\Fc_2}
	\prod_{i=1}^k \frac{\alpha_{-\mu_i}}{\mu_i} A^{b \mu_i}
	\right\rangle\\
	&=
	\left\langle
	\prod_{i=1}^k
	\exp \lb \sum_{i=1}^\infty \frac{\alpha_{i} p^*_i}{i} \rb
	\exp \lb u \Fc_2 \rb
	\frac{\alpha_{-\mu_i}}{\mu_i} A^{b \mu_i}
	\exp \lb - u \Fc_2 \rb 
	\exp \lb -\sum_{i=1}^\infty \frac{\alpha_{i} p^*_i}{i} \rb 
	\right\rangle \nonumber.
\end{align}
Note that
\begin{equation}
e^{u\mathcal{F}_2} \alpha_{-\mu} e^{-u\mathcal{F}_2} = \mathcal{E}_{-\mu} (u\mu)
\end{equation}
and
\begin{equation}
[\alpha_r,\mathcal{E}_{kr-\mu}(u\mu)] = \zeta(ru\mu)\mathcal{E}_{(k+1)r-\mu}(u\mu).
\end{equation}

\begin{definition}\label{def:a-oper} We define the operator $\At(m,u m)$ by the following formula
	\begin{align} \label{eq:atildedef}
		\At(m,u m) :=  \dfrac{A^{bm}}{m}\sum_{k=0}^\infty\Ec_{k-m}(u m)\sum_{\la \vdash k}\prod_{i=1}^{l(\la)}
		\dfrac{
			\lb  \dfrac{p^*_i}{i}\, \zeta \big( i u m \big)\rb^{\la_i-\la_{i+1}} 
		}{(\la_i-\la_{i+1})!}.
	\end{align}
\end{definition}

It is straightforward to see that
%\begin{proposition}
	\begin{align} \label{Kcorr}
		K_\mu(u) =  \left\langle \prod_{i=1}^k \At(\mu_i,u\mu_i)\right\rangle.
	\end{align}
%\end{proposition}

%\section{Rationality of $\mathcal{A}$-operators}

%\fixme{Write more accurately and precisely about the contributions of $I/\zeta(z)$}
%\fixme{Also, fix the theorem below, so that it has correct residues}
Note that the definition \eqref{eq:atildedef} of the $\At$-operators, together with the definition \eqref{eq:curlyEdef} for the $\mathcal{E}$-operators, implies the following structure of the $\At$-operators:
\begin{equation}\label{eq:Acoefssum}
\At(m,um) = \ac_0(m,u)  \Id + \sum_{l\in\mathbb{Z}+1/2}\;\sum_{s\in\mathbb{Z}}\ac_{l,s}(m,u) E_{l-s,l},
\end{equation} 
where $\ac_0(m,u)$ and $\ac_{1,s}(m,u)$ are certain coefficients depending on $m$ and $u$, $\Id$ is the identity operator and $E_{i,j}$ are the operators defined by formula \eqref{eq:Eoperdef}.

\subsection{Rationality of $\mathcal{A}$-operators}

The main goal of this section is to prove the following statement (\emph{the rationality of $\mathcal{A}$-operators}): 
\begin{proposition} \label{prop:aop_rationality}
	For the coefficients $\ac_0(m,u)$ and $\ac_{l,s}(m,u)$ in formula \eqref{eq:Acoefssum} for the $\At$-operators we have:
	\begin{align} \label{eq:aop_matel_form}
	[u^{k}]\ac_{l,s}&=\dfrac{F_{k,l}^1(m,s)\xi^1_m+F_{k,l}^2(m,s)\xi^2_m}{(m+1)\dots(m+s)},  &s\geq 0,\\\label{eq:aop_matel_form_neg}
	[u^{k}]\ac_{l,s}&=\dfrac{\widetilde{F}_{k,l}^1(m,s)\xi^1_m+\widetilde{F}_{k,l}^2(m,s)\xi^2_m}{(m-1)(m-2)\dots(m+s+1)(m+s)^2},  &s< 0,
	\\ \label{eq:fka0}
	[u^{k}]\ac_0&=\dfrac{\check{F}_{k}^1(m)\xi^1_m+\check{F}_{k}^2(m)\xi^2_m}{m^2},  &
	\end{align}
	where $F^1_{k,l}$, $F^2_{k,l}$, $\widetilde{F}^1_{k,l}$, $\widetilde{F}^2_{k,l}$,
	$\check{F}^1_{k}$, $\check{F}^2_{k}$
	are, for fixed $s$, some polynomials in $m$.
	
	Moreover, the degrees in $m$ of $F^1_{k,l}(m,s)$, $F^2_{k,l}(m,s)$, $\widetilde{F}^1_{k,l}(m,s)$, and $\widetilde{F}^2_{k,l}(m,s)$ are no greater than $9k+2+s$ and the degrees of $\check{F}^1_{k}(m)$ and $\check{F}^2_{k}(m)$ are no greater than $9k+2$.
%	\begin{align*}
%	\deg 	F^1_{k,l} &\leq 9k+2+s\\
%	\deg 	F^2_{k,l} &\leq 9k+2+s\\
%	\deg 	\widetilde{F}^1_{k,l} &\leq 9k+2+s\\
%	\deg 	\widetilde{F}^2_{k,l} &\leq 9k+2+s\\
%	\deg 	\check{F}^1_{k,l} &\leq 9k+2\\
%	\deg 	\check{F}^2_{k,l} &\leq 9k+2
%	\end{align*}
	% in the $u$-expansion of the $(l-s,s)$ matrix element of the operator $\mathcal{\widetilde{A}}(m,um)$ we have, for nonnegative $s$:
	%\begin{equation} \label{eq:aop_matel_form}
	%[u^{k}][E_{l-s,l}]\mathcal{\widetilde{A}}(m,um)=\dfrac{F_{k,l}^1(m,s)\xi^1_m+F_{k,l}^2(m,s)\xi^2_m}{m(m+1)\dots(m+s)}, \quad s\geq 0,
	%\end{equation}	
	%where $F^1_{k,l}$ and $F^2_{k,l}$ are some polynomials in $m$ and $s$.
	%For negative $s$ we have 
	%\begin{equation} \label{eq:aop_matel_form_neg}
	%[u^{k}][E_{l-s,l}]\mathcal{\widetilde{A}}(m,um)=\dfrac{\widetilde{F}_{k,l}^1(m,s)\xi^1_m+\widetilde{F}_{k,l}^2(m,s)\xi^2_m}{(m-1)(m-2)\dots(m+s+1)(m+s)^2}, \quad s< 0,
	%\end{equation}	
	%where, again, $\widetilde{F}^1_{k,l}$ and $\widetilde{F}^2_{k,l}$ are some polynomials in $m$ and $s$.
\end{proposition}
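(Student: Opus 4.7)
The plan is to compute the matrix elements of $\At(m, um)$ explicitly and then reduce them to linear combinations of $\xi^1_m$ and $\xi^2_m$ with polynomial coefficients in $m$, by applying Proposition~\ref{prop:rationality} together with the three-term relation for Jacobi polynomials.

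First I would read off the matrix elements directly from Definition~\ref{def:a-oper}. Using the decomposition $\Ec_n(z) = \sum_l e^{z(l-n/2)} E_{l-n,l} + \delta_{n,0}/\zeta(z)$, only the $k = m+s$ summand of $\At$ contributes to the coefficient of $E_{l-s, l}$, and only the $k = m$ summand contributes to the identity via the $1/\zeta(um)$ term:
\begin{align*}
\ac_{l, s}(m, u) = \frac{A^{bm}}{m}\, e^{um(l - s/2)}\, S(m, s),
\qquad
\ac_0(m, u) = \frac{A^{bm}}{m\, \zeta(um)}\, S(m, 0),
\end{align*}
where $S(m, s) = \sum_{\lambda \vdash (m+s)} \prod_i (p^*_i \zeta(ium)/i)^{\lambda_i - \lambda_{i+1}}/(\lambda_i - \lambda_{i+1})!$. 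By the exponential formula, $S(m, s)$ equals $[w^{m+s}]$ of $\exp(\sum_i \frac{A^i - A^{-i}}{i} w^i \zeta(ium)/\zeta(iu/b))$. Writing $A^i - A^{-i} = A^{-i}(a^i - 1)$ and rescaling $w \mapsto wA^{-1}$ matches this generating function to the one in Proposition~\ref{prop:rationality} under the identification $\rho = m$ (with that proposition's $m$ corresponding to our $m+s$), up to an overall factor $A^{-(m+s)}$.

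Applying Proposition~\ref{prop:rationality} yields
\begin{align*}
[u^{2k'}] S(m, s) = \frac{(-1)^{m+s}\, m\, A^{-(m+s)}}{m+s}\bigl(G^1_{k'}(m, m+s) J_{m+s-1}(m) + G^2_{k'}(m, m+s) J_{m+s-2}(m)\bigr),
\end{align*}
with odd $u$-powers vanishing and $G^i_{k'}$ polynomial in $m$ of degree at most $9k'+2$. I would then reduce $J_{m+s-1}(m)$ and $J_{m+s-2}(m)$ to $J_{m-1}(m)$ and $J_{m-2}(m)$ by iterating the three-term relation of Proposition~\ref{prop:JacobiThreeTerm}, namely $J_k(m) = -c_k J_{k-1}(m) - a J_{k-2}(m)$ with $c_k = a+1+(a-1)mb/k$. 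The key observation is that $c_m = a+1+(a-1)b$ is \emph{constant} in $m$, so the step at $k = m$ contributes no denominator. For $s \geq 0$, iterating the recursion downward at $k = m+s-1, \ldots, m$ produces denominator exactly $(m+1)(m+2)\cdots(m+s-1)$ (coming from the $k = m+1, \ldots, m+s-1$ steps) and a polynomial-in-$m$ numerator of degree at most $s$. For $s < 0$, the inverted recursion $J_{k-2} = -(J_k + c_k J_{k-1})/a$ applied at $k = m-1, \ldots, m+s+1$ produces denominator $(m-1)(m-2)\cdots(m+s+1)$ with polynomial numerator of degree at most $|s|$. Combining with $J_{m-1}(m) = (-1)^m A^{-(b-1)m}\xi^1_m$, $J_{m-2}(m) = (-1)^m A^{-(b-1)m}\xi^2_m$, and the prefactor $1/(m+s)$, one obtains exactly the denominator structures stated in the proposition.

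Finally, $[u^k]\ac_{l, s}$ is assembled by convolving the $u$-expansion of $S(m, s)$ with the Taylor expansion of $e^{um(l-s/2)}$, which contributes a polynomial in $m$ of degree $k - 2k'$ at $u$-order $k - 2k'$. Combining the degree bounds from the $G$-polynomials ($\leq 9k'+2$), the Jacobi-reduction polynomials ($\leq |s|$), and the exponential expansion ($\leq k-2k'$) yields the required bound on the $F$-polynomials. For $\ac_0$, the extra factor $1/\zeta(um) = (um)^{-1}(1 + O(u^2))$ supplies the additional $1/m$ needed for the $1/m^2$ denominator while shifting the $u$-expansion by $u^{-1}$; here Lemmas~\ref{lem:G0} and~\ref{lem:Gdoublezero}, which pin down the specific structure of the $G$-polynomials on and near the diagonal $\rho = m+s$, are what ensure the resulting coefficients have the claimed polynomial form. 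The main obstacle is this second step: identifying the critical cancellation of the naive $1/m$ denominator via the constancy of $c_m$ and then carefully tracking the denominators and polynomial degrees through the $|s|$ iterations of the Jacobi reduction.
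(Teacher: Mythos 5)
Your proposal is correct and follows essentially the same route as the paper: read off the matrix elements from Definition~\ref{def:a-oper}, identify them with the coefficients controlled by Proposition~\ref{prop:rationality}, and then reduce $J_{m+s-1},J_{m+s-2}$ to $J_{m-1},J_{m-2}$ by iterating the three-term relation, with exactly the same key cancellation at $k=m$ (where $\rho b/k|_{\rho=m}=b$ kills the naive $1/m$) and the same degree bookkeeping. The only inessential deviation is that Lemmas~\ref{lem:G0} and~\ref{lem:Gdoublezero} are not actually needed for the $\ac_0$ case of this proposition (the $1/\zeta(um)$ factor alone yields the stated $1/m^2$ form); they enter only later, in the proof of Theorem~\ref{prop:Apolynom}, to show the one-point correlators are genuinely polynomial.
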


\begin{proof}%[Proof of Proposition \ref{prop:aop_rationality}]
	
	Let us recall a few facts. 
%\begin{equation}
%(a+1)_k = \frac{(a+k)!}{a!} =
%\begin{cases}
%(a+1) (a+2) \cdots (a+k) \,, & k\ge 0 \,,  \\
%(a (a-1) \cdots (a+k+1))^{-1}  \,, & k \le 0 \,.
%\end{cases}
%\end{equation}
%
From \eqref{eq:atildedef} and \eqref{eq:curlyEdef} we have
\begin{align} \label{eq:aop_matrixel}
[u^{k}]\ac_{l,s} %\nonumber
&=[u^{k}] \dfrac{A^{(b-1)m}}{m}e^{um(l+s/2)}\sum_{\la \vdash m+s}\prod_{i=1}^{l(\la)}
 \dfrac{
\lb  \dfrac{a^i-1}{i}\,\dfrac{ \zeta \big( i u m \big)}{\zeta\left(iu b^{-1}\right)}\rb^{\la_i-\la_{i+1}}
}{(\la_i-\la_{i+1})!}  \\ \nonumber
&=\left([u^{k}w^{s+m}]\dfrac{A^{(b-1)m}}{m}e^{um(l+s/2)}\exp\left(\sum_{i=1}^\infty \frac{a^i-1}{i}\cdot w^i \dfrac{\zeta(iu\rho)}{\zeta(iu b^{-1})}\right)\right)\Bigg|_{\rho=m};\\ \label{eq:aop_idel}
[u^{k}]\ac_{0} %\nonumber
&=[u^{k}] \dfrac{1}{m}\dfrac{A^{(b-1)m}}{\zeta(um)}\sum_{\la \vdash m}\prod_{i=1}^{l(\la)}
 \dfrac{
	\lb  \dfrac{a^i-1}{i}\,\dfrac{ \zeta \big( i u m \big)}{\zeta\left(iu b^{-1}\right)}\rb^{\la_i-\la_{i+1}} 
}{(\la_i-\la_{i+1})!} \\ \nonumber
&=\left([u^{k}w^{m}]\dfrac{1}{m}\dfrac{A^{(b-1)m}}{\zeta(um)}\exp\left(\sum_{i=1}^\infty \frac{a^i-1}{i}\cdot w^i \dfrac{\zeta(iu\rho)}{\zeta(iu b^{-1})}\right)\right)\Bigg|_{\rho=m}.
\end{align}
%
%
%Now let us finally prove Proposition \ref{prop:aop_rationality}:
From Section \ref{sec:xiJac}, for $\xi$'s we have
%	
	%\begin{proposition}\label{prop:xiform}
%		\fixme{Deal with the powers of A}
		\begin{align}\label{eq:xiJrel}
		\xi^1_m &= (-1)^m A^{(b-1)m} J_{m-1}\big|_{\rho=m};\\ \nonumber
		\xi^2_m &= (-1)^m A^{(b-1)m} J_{m-2}\big|_{\rho=m}.
		\end{align}
%	\end{proposition}
	
	Recall equation \eqref{eq:aop_matrixel}. Let us drop the factor $e^{um(l+s/2)}$ as it clearly does not affect the statement. From Proposition \ref{prop:rationality} we have:
%	\fixme{correct denominator, rho/m}
	\begin{align}\label{eq:aop_coefs}
	&\left([u^{2k}w^{s+m}]\dfrac{1}{m}\exp\left(\sum_{i=1}^\infty \frac{a^i-1}{i}\cdot w^i \dfrac{\zeta(iu\rho)}{\zeta(iu b^{-1})}\right)\right)\Bigg|_{\rho=m} \\ \nonumber
	&\phantom{aa}=(-1)^{m+s}\left(\dfrac{1}{m}\dfrac{\rho}{m+s}\left(G_k^1(\rho,m+s)J_{m+s-1}+G_k^2(\rho,m+s)J_{m+s-2}\right)\right)\bigg|_{\rho=m},
	\end{align}
	where $G_{k}^1$ and $G_k^2$ are polynomial in both arguments.
	
	Now recall the three-term relation for Jacobi polynomials \eqref{eq:JacobiThreeTerm}. It implies that
	\begin{equation}\label{eq:three-term-expr}
	J_k=-\left(a+1+(a-1)\dfrac{\rho b}{k}\right)J_{k-1}-aJ_{k-2}
	\end{equation}
	%Applying this relation to $J_{m+s}$ in the RHS of \eqref{eq:aop_coefs} we get:
	With the help of this relation we get:
	\begin{align}
	&G_k^1(\rho,m+s)J_{m+s-1}+G_k^2(\rho,m+s)J_{m+s-2}\\ \nonumber
	&=\dfrac{(m+s-1)G_k^2(\rho,m+s) - \left((m+s-1)(a+1)+(a-1)\rho b\right)G_k^1(\rho,m+s)}{m+s-1} J_{m+s-2}\\ \nonumber
	&- aG_k^1(\rho,m+s) J_{m+s-3}
	\end{align}
	If we apply the three-term relation repeatedly $s-1$ more times, we produce an expression precisely of the form
	\eqref{eq:aop_matel_form} (where we should still substitute $m$ for $\rho$ and take into account the polynomiality of $G^1_k$ and $G_k^2$ as well as the relations \eqref{eq:xiJrel} between $\xi$'s and $J$'s). Note that at the last step we acquire a factor $m$ in the denominator, but it gets canceled once we substitute $\rho$ for $m$. Indeed, at the last step $k=m$ in the three-term relation and thus the $\rho/k=\rho/m$ part cancels once we substitute $\rho$ with $m$. Note that the degrees of the resulting polynomials in the numerator are no greater than the degrees of polynomials $G_{k}^1$ and $G_k^2$ increased by $s$, since each application of the three-term relation potentially increases the degree by $1$.
	
	The proof for the case of negative $s$ is completely analogous, we just have to use the three-term relation in different direction expressing $J_{k-2}$ through $J_{k-1}$ and $J_k$.
	The proof for the case of $\ac_0$ is analogous as well. 
\end{proof}

\section{Polynomiality of connected correlators of $\mathcal{A}$-operators} \label{sec:Apoly}

The main goal of the present section is to prove the following
\begin{theorem} \label{prop:Apolynom}	
	Coefficients in $u$-expansions of stable connected correlators of $\At$-operators can be expressed as
	\begin{equation}\label{eq:Acorru}
	[u^k]\langle \At(\mu_1,u)\dots\At(\mu_n,u)\rangle^{\circ} = 	
	\sum_{\substack{(\eta_1,\dots,\eta_n)\\\in\{1,2\}^{n}}}P_{k;\eta_1,\dots,\eta_n}(\mu_1,\dots,\mu_n)\xi^{\eta_1}_{\mu_1}\cdots \xi^{\eta_n}_{\mu_n},
	\end{equation}
	where $P_{k;\eta_1,\dots,\eta_n}(\mu_1,\dots,\mu_n)$ are some polynomials in $\mu_1,\dots,\mu_n$. Stable means $(n,k)\notin \{(1,-1),(2,0)\}$.
\end{theorem}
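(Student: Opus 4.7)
My approach combines the single-operator rationality from Proposition~\ref{prop:aop_rationality} with a cancellation argument that exploits connectedness to convert the rational factors in $\mu_i$ into honest polynomials.

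First, I would expand each $\widetilde{\mathcal{A}}(\mu_i, u\mu_i)$ via the decomposition \eqref{eq:Acoefssum} into the scalar channel $\ac_0(\mu_i,u)\,\mathrm{Id}$ plus the $E$-channel $\sum_{l,s}\ac_{l,s}(\mu_i,u)E_{l-s,l}$, and compute the disconnected correlator $\langle\prod_i \widetilde{\mathcal{A}}(\mu_i, u\mu_i)\rangle$ as a sum over configurations of $E$-indices satisfying the fermionic closedness constraint. By Proposition~\ref{prop:aop_rationality}, every term in this sum is a product $\prod_i \xi^{\eta_i}_{\mu_i}$ multiplied by a rational function of the $\mu_i$ whose poles lie at $\mu_i \in \{0,-1,-2,\dots\}$, coming from the denominators $\mu_i^2$ or $(\mu_i+1)\cdots(\mu_i+s)$.

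Next, I would pass to the connected correlator through the cumulant inversion formula, expressing $\langle\cdots\rangle^\circ$ as an alternating sum of products of disconnected correlators over set partitions of $\{1,\dots,n\}$. The goal is to show that the rational factors assemble into a polynomial in each $\mu_i$. The guiding idea is that a pole at $\mu_i=-s$ or $\mu_i=0$ in an individual term corresponds to a resonance of the form $\Ec_{l-s}(u\mu_i) \to \Ec_0(u\mu_i)$ in the expansion \eqref{eq:atildedef}, which effectively splits off the $i$-th operator from the rest of the correlator; the residue at such a pole is therefore a product of smaller-$n$ correlators, which is exactly what the cumulant expansion subtracts.

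To implement this, I would classify the summands according to which $\Ec$-index tuple carries the resonance, use the commutator formula \eqref{eq:commEoper} to rewrite each residue as an explicit product of two smaller disconnected correlators of the same form, and verify that the signs and multiplicities in the cumulant formula produce an exact cancellation. The stability assumption $(n,k)\notin\{(1,-1),(2,0)\}$ is used here to ensure that none of the factors in a split correlator falls into an unstable sector that would require separate treatment. Once the poles cancel, polynomiality of $P_{k;\eta_1,\dots,\eta_n}$ follows immediately, and the degree bounds in each $\mu_i$ are inherited from the explicit bounds on $F^j_{k,l}$, $\widetilde{F}^j_{k,l}$, $\check{F}^j_k$ in Proposition~\ref{prop:aop_rationality}. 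The main obstacle I anticipate is the combinatorial bookkeeping in the cancellation step: matching each residue of an individual disconnected term with its partner in the cumulant expansion requires careful tracking of the $\Ec$-operator index patterns and their permutation signs, and this is where I expect the bulk of the technical work to sit.
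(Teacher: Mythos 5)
Your overall strategy --- single-operator rationality from Proposition~\ref{prop:aop_rationality}, then cancellation of the poles in the connected correlator --- is exactly the paper's (and that of the cited works on Hurwitz numbers), but the mechanism you propose for the cancellation is misidentified, and this is precisely where the real content of the proof lies. You assert that the residue at $\mu_1=-r$ ``splits off the $i$-th operator from the rest,'' i.e.\ produces a $1\times(n-1)$ factorization that the cumulant expansion subtracts. That is what happens for the $\Id$-channel (the $\ac_0$ part), whose contribution indeed cancels against the $\{1\}\,|\,\{2,\dots,n\}$ terms of the inclusion--exclusion formula. But the poles at negative integers in the $E$-channel behave differently: the paper's key technical input (Lemma~\ref{lem:aatilderel}) shows that $\Res_{m=-r}$ of the $\Id$-less part of $\mathcal{A}_+^{\eta}(m,u)$, applied to the covacuum, is a scalar multiple of the $\Id$-less part of $\widetilde{\mathcal{A}}(-r,u)$, which after undoing the conjugation is essentially the insertion of $\alpha_r$ with $r>0$. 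Since $\alpha_r$ annihilates the vacuum, this residue vanishes unless some $\mu_j=r$ for $j\geq 2$, and when it does not vanish it \emph{pairs} operator $1$ with operator $j$; the cancellation is then against the $\{1,j\}\,|\,\text{rest}$ blocks of the cumulant formula, not against a singleton splitting. Without a statement like Lemma~\ref{lem:aatilderel} (whose proof requires the explicit $G$-polynomial structure from Proposition~\ref{prop:rationality}, the three-term relation, and Lemma~\ref{lem:G0}), your plan cannot identify the residues concretely enough to match them with cumulant terms, so the ``careful bookkeeping'' you defer to is not bookkeeping but the main missing idea.

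Two further gaps. First, you never invoke the reduction to a single variable: the paper proves polynomiality (and a degree bound) in $\mu_1$ alone and then uses the symmetry of the correlators together with the linear independence of $\xi^1_m$ and $\xi^2_m$ over the ring of polynomials in $m$ to conclude joint polynomiality; without this step, establishing polynomiality in each $\mu_i$ simultaneously from your term-by-term analysis is much harder, since the individual disconnected terms are rational in all variables at once. Second, your account of where stability enters is wrong: the cases $(n,k)=(2,0)$ and $(1,-1)$ are excluded not because some split factor is unstable, but because the cancellation genuinely fails there --- for $n=2$ the residue reduces to $C\,[u^k]\langle\exp(\sum_i \alpha_i p_i^*/i)\exp(u\mathcal{F}_2)\rangle$, which vanishes only for $k>0$, and the $n=1$ case requires the separate divisibility statement of Lemma~\ref{lem:Gdoublezero} for $\check F^{\eta}_k(m)/m^2$. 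These low-$n$ cases need arguments of a different nature from the generic $n>2$ cancellation and are not covered by your plan.
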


\begin{remark}
There is an equivalent reformulation of this theorem in terms of expansions on the spectral curve given in Theorem~\ref{teo:functions-on-the-curve}, and these two theorems together are the main theorems of the present paper.	
\end{remark}

Let us introduce the following definition:

%\fixme{Add a paragraph about how the polynomiality in $\mu_1$ implies the polynomiality in $\mu_n$}
%\fixme{The degree of the polynomial does not depend on $\mu_1$?}

%\fixme{The correct poles}
\begin{definition}\label{def:aops}
	We define the $\mathcal{A}$-operators as follows (here $m\in \mathbb{Z}_{\geq 1}$):
	\begin{align}
	\mathcal{A}^1(m,u) &:= \sum_{l\in\mathbb{Z}+\frac{1}{2}} \; \sum_{s=-m}^{-1} \; \sum_{k=0}^{\infty} u^k \dfrac{\widetilde{F}_{k,l}^1(m,s)}{(m-1)\dots(m+s+1)(m+s)^2} E_{l-s,l} \\ \nonumber
	&\phantom{:=\ }+ \sum_{l\in\mathbb{Z}+\frac{1}{2}} \; \sum_{s=0}^{\infty} \; \sum_{k=0}^{\infty} u^k \dfrac{F_{k,l}^1(m,s)}{(m+1)\dots(m+s)} E_{l-s,l} \\ \nonumber
		&\phantom{:=\ }+
		\sum_{k=-1}^{\infty} u^k \dfrac{\check{F}_{k}^1(m)}{m^2} \Id,\\
%		\ac_0(m,u)  \Id,\\
	%%%%%%%%%
	\mathcal{A}^2(m,u) &:= \sum_{l\in\mathbb{Z}+\frac{1}{2}} \; \sum_{s=-m}^{-1} \; \sum_{k=0}^{\infty} u^k \dfrac{\widetilde{F}_{k,l}^2(m,s)}{(m-1)\dots(m+s+1)(m+s)^2} E_{l-s,l}\\  \nonumber
	&\phantom{:=\ } + \sum_{l\in\mathbb{Z}+\frac{1}{2}} \; \sum_{s=0}^{\infty} \; \sum_{k=0}^{\infty} u^k \dfrac{F_{k,l}^2(m,s)}{(m+1)\dots(m+s)} E_{l-s,l}\\ \nonumber
	&\phantom{:=\ } +%\ac_0(m,u)  \Id, 
	\sum_{k=-1}^{\infty} u^k \dfrac{\check{F}_{k}^2(m)}{m^2} \Id,
	\end{align}
	where $F_{k,l}^i$, $\widetilde{F}_{k,l}^i$, and $\check{F}_{k}^i$ are the polynomials introduced in Proposition \ref{prop:aop_rationality}. 
\end{definition}
Note that
\begin{equation}
\mathcal{A}^1(m,u)\xi^1_m + \mathcal{A}^2(m,u)\xi_m^2 = \widetilde{\mathcal{A}}(m,um)
\end{equation}

From Proposition \ref{prop:aop_rationality} it is clear that the expressions $P_{\eta_1,\dots,\eta_n}(\mu_1,\dots,\mu_n)$ in Theorem~\ref{prop:Apolynom} are actually correlators of $\mathcal{A}$-operators:
\begin{equation}
P_{\eta_1,\dots,\eta_n}(\mu_1,\dots,\mu_n) = \langle \mathcal{A}^{\eta_1}(\mu_1,u)\dots\mathcal{A}^{\eta_n}(\mu_n,u)\rangle^\circ,
\end{equation}
and the statement of the theorem is equivalent to saying that the above correlators are polynomial in $\mu_1,\dots,\mu_n$. In order to prove this theorem, we first have to introduce certain new operators called $\mathcal{A}_+$ in the following definition. These operators are simpler than the $\mathcal{A}$-operators, but the result of their action on the covacuum is the same (we make a precise statement below, in the proof of Theorem~\ref{prop:Apolynom}).

\begin{definition}\label{def:aplusops}
	We define the $\mathcal{A_+}$-operators as follows (here $m\in \mathbb{Z}_{\geq 1}$):
	\begin{align}\label{eq:Aplus1}
	\mathcal{A}_+^1(m,u) &:= \sum_{k=-1}^{\infty} u^k \dfrac{\check{F}_{k}^1(m)}{m^2} \Id+\sum_{l\in\mathbb{Z}+\frac{1}{2}} \; \sum_{s=1}^{\infty} \; \sum_{k=0}^{\infty} u^k \dfrac{F_{k,l}^1(m,s)}{(m+1)\dots(m+s)} E_{l-s,l} \\ \label{eq:Aplus2}
	\mathcal{A}_+^2(m,u) &:= \sum_{k=-1}^{\infty} u^k \dfrac{\check{F}_{k}^2(m)}{m^2} \Id+\sum_{l\in\mathbb{Z}+\frac{1}{2}} \; \sum_{s=1}^{\infty} \; \sum_{k=0}^{\infty} u^k \dfrac{F_{k,l}^2(m,s)}{(m+1)\dots(m+s)} E_{l-s,l}
	\end{align}
\end{definition}

We will need a certain technical lemma regarding the $A_+$-operators.
%
%First, we note that a given coefficient in the $u$-expansion of the $(l-s,l)$ matrix element of the operator $\mathcal{A}_+^\eta(m,u)$ is a rational function in $m$, and thus it can be extended into the complex plane.
Note that a given coefficient in the $u$-expansion of the coefficient in front of  $E_{l-s,l}$ in the formula for the operator $\mathcal{A}_+^\eta(m,u)$ is a rational function in $m$, and thus it can be extended into the complex plane. The same holds for the coefficient in front of $\Id$. 
Let us consider these extensions and prove the following
%\fixme{Write that $r > 0$ and integer and stress that these are all the residues}
\begin{lemma}\label{lem:aatilderel}
	For $r\in\mathbb{Z}_{\geq 1}$ we have
	\begin{align}\label{eq:aatilderel}
	&
	\langle 0|	\mathop{\Res}_{m=-r}\Big(\mathcal{A}_+^\eta(m,u) -\sum_{k=-1}^{\infty} u^k \dfrac{\check{F}_{k}^\eta(m)}{m^2} \Id\Big) 
	\\ \notag &
	= c^\eta(r) \;\langle 0| \left(\widetilde{\mathcal{A}}(-r,u)-\ac_0(-r,u)  \Id\right),
	\end{align}
	where $c^\eta(r)$ is some scalar coefficient, $\eta=1,2$. In other words, the residue at $-r$ of the $\Id$-less part of any of the two $\mathcal{A}_+$-operators coincides with the $\Id$-less part of the $\widetilde{\mathcal{A}}$-operator taken at $m=-r$ up to a scalar factor, under the condition that both are applied to the covacuum.
	
%	Residues at all other points, $z\in\mathbb{C}\setminus\mathbb{Z}_{\geq 1}$, vanish except, possibly, for the residue at $z=0$.
\end{lemma}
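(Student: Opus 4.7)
The plan is to compare coefficients of $\langle 0 | E_{l-s, l}$ on both sides, using Proposition~\ref{prop:aop_rationality} as the bridge.

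First, I would reduce to the identity-free statement. For $r \geq 1$, the identity part $\sum_k u^k \check{F}_k^\eta(m)/m^2 \cdot \Id$ of $\mathcal{A}_+^\eta$ has its only pole at $m=0$, so $\Res_{m=-r}$ annihilates it. Separately, the direct formula \eqref{eq:aop_idel} shows $\ac_0(-r, u) = 0$, since $\sum_{\lambda \vdash -r}$ is empty. Hence both Id-subtractions are trivially zero, and the statement collapses to $\langle 0 | \Res_{m=-r} \mathcal{A}_+^\eta(m, u) = c^\eta(r) \langle 0 | \widetilde{\mathcal{A}}(-r, u)$. Next, both sides are supported on $\langle 0 | E_{l-s, l}$ with $l > 0$ and $s > l$: the LHS because $1/[(m+1)\cdots(m+s)]$ has a pole at $m = -r$ only when $s \geq r$, with residue $1/\bigl[(-1)^{r-1}(r-1)!(s-r)!\bigr]$; the RHS because the direct formula for $\widetilde{\mathcal{A}}(-r, u)$ involves $\sum_{\lambda \vdash s-r}$, which is nonempty iff $s \geq r$.

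Second, I would establish the coefficient identity. The LHS coefficient at $u^K \langle 0 | E_{l-s, l}$ is $F_{K,l}^\eta(-r, s) / \bigl[(-1)^{r-1}(r-1)!(s-r)!\bigr]$, and the RHS coefficient is $c^\eta(r)$ times the $u^K$-coefficient of an explicit generating function obtained by substituting $m = -r$ directly in Definition~\ref{def:a-oper}. The matching is controlled by Proposition~\ref{prop:aop_rationality}, which states that $[u^K] \ac_{l,s}(m, u) \cdot (m+1)\cdots(m+s)$ equals $F_{K,l}^1(m, s) \xi^1_m + F_{K,l}^2(m, s) \xi^2_m$ as a rational identity in $m$. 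Taking $\Res_{m=-r}$ of this identity, and rewriting the LHS via the generating-function form of Proposition~\ref{prop:rationality}, combined with the three-term recursion (Proposition~\ref{prop:JacobiThreeTerm}) propagated through the reduction argument in the proof of Proposition~\ref{prop:aop_rationality}, yields the desired proportionality. The scalar $c^\eta(r)$ is then pinned down by comparing leading terms (e.g.\ $K = 0$, $s = r$, where the partition sum reduces to the empty partition).

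The main obstacle is the edge behaviour at $s \in \{r, r+1\}$: the three-term recursion develops $1/k$ singularities as intermediate indices pass through zero, and the Jacobi polynomials $J_{-1}, J_{-2}$ at the boundary require careful interpretation. I expect to resolve this using the boundary formulas from Proposition~\ref{prop:j12exprs} together with Lemma~\ref{lem:G0} and Lemma~\ref{lem:Gdoublezero}, which precisely encode the vanishing needed to absorb these singularities and certify that a single scalar $c^\eta(r)$, independent of $(K, l, s)$, controls the matching uniformly.
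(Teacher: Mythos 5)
Your outline tracks the paper's strategy in its broad moves: you correctly dispose of both $\Id$-parts (the residue kills $\check{F}^\eta_k(m)/m^2$ for $r\geq 1$, and $\ac_0(-r,u)=0$ because the partition sum over $\lambda\vdash -r$ is empty), you correctly compute $\Res_{m=-r}\frac{1}{(m+1)\cdots(m+s)}=\frac{1}{(-1)^{r-1}(r-1)!\,(s-r)!}$ for $r\leq s$, you correctly note that both sides are supported on $s\geq r$, and you correctly anticipate that the boundary case $s=r$ is handled by Lemma~\ref{lem:G0}. However, there is a genuine gap at the center: the entire content of the lemma is that a \emph{single} scalar $c^\eta(r)$, independent of $s$, $l$ and the power of $u$, relates $\Res_{m=-r}F^\eta_{K,l}(m,s)/\prod(m+j)$ to $[u^K][E_{l-s,l}]\widetilde{\Ac}(-r,u)$, and your proposal asserts this rather than proves it. ``Pinning down $c^\eta(r)$ by comparing leading terms'' presupposes that one scalar works for all $(K,l,s)$, which is exactly what must be shown. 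Moreover, the step ``take $\Res_{m=-r}$ of the identity in Proposition~\ref{prop:aop_rationality}'' does not parse as stated: the generating-function side $[u^K w^{s+m}]\exp(\cdots)$ is regular at $m=-r$ (it is just $[w^{s-r}]$ of a power series), while the rational continuation $F^\eta_{K,l}(m,s)/\prod_{j=1}^s(m+j)$ has a pole there; these are two different evaluations that agree only at positive integers, so the residue must be extracted from inside the three-term reduction, not from the identity itself.

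The paper's mechanism for the uniformity is precisely this: writing the $s$-fold application of the three-term relation as a transfer-matrix product $e_\eta^T S_m S_{m+1}\cdots S_{m+s-1}\,g_k(\rho,m+s)$ with $S_n=\frac1n\binom{\rho b(1-a)-n(1+a)\ \ n}{-an\ \ \ \ \ \ \ \ \ \ \ \ \ \ 0}$, one observes that the pole at $m=-r$ sits in the single factor $S_{m+r}$ and that $\Res_{n=0}S_n=\rho b(1-a)\,e_1 e_1^T$ is \emph{rank one}. This splits the residue of the whole product into $\bigl(\rho b(1-a)\,e_\eta^T S_{-r}\cdots S_{-1}e_1\bigr)\big|_{\rho=-r}$ --- which depends only on $r$ and $\eta$ --- times $e_1^T S_1\cdots S_{s-r-1}\,g_k(\rho,s-r)\big|_{\rho=-r}$, and the latter (because $J_0=1$, $J_{-1}=0$) is exactly the matrix element of $\widetilde{\Ac}(-r,u)$ up to the fixed factor $A^{(b-1)(-r)}(-1)^r$. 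Without identifying this rank-one factorization (or an equivalent argument), the claim of a uniform $c^\eta(r)$ is unsupported; this is the missing idea, not merely an ``edge behaviour'' issue. (Also, Lemma~\ref{lem:Gdoublezero} is not needed here --- it enters only in the polynomiality of the stable one-point correlators in Theorem~\ref{prop:Apolynom}.)
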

\begin{proof}
	Convention: in what follows we assume that the factor of $e^{um(l+s/2)}$ is absent from \eqref{eq:aop_matrixel}. It is easy to see that this does not affect the conclusion.
	
	Recall the proof of Proposition~\ref{prop:aop_rationality}. Let $s\in\mathbb{Z}_{\geq 0}$. Define
	%\fixme{Change notation to some better letter}
	\begin{equation}
	S_n:=\dfrac{1}{n}\left(\begin{array}{cc}
	\rho b(1-a)-n(1+a) & n \\
	-a n & 0
	\end{array}
	\right).
	\end{equation}
	Also define
	\begin{align}
	e_1&:=\left(\begin{array}{c}
	1\\
	0	
	\end{array}\right) ;\\
	e_2&:=\left(\begin{array}{c}
	0\\
	1	
	\end{array}\right) ;\\
	g_k(\rho,p)&:=G^1_k(\rho,p)\,e_1+G^2_k(\rho,p)\,e_2.
	\end{align}
	Now note that from \eqref{eq:aop_coefs} and \eqref{eq:three-term-expr} we get (for $s\geq 0$)
	\begin{align}
	&
	[u^k][E_{l-s,l}]\mathcal{A}_+^\eta(m,u) 
	\\ \notag & 
	=
	(-1)^m\left(\dfrac{\rho}{m(m+s)}\, e_\eta^T\; S_mS_{m+1}\cdots S_{m+s-1}\; g_k(\rho,m+s)\right)\bigg|_{\rho=m}.
	\end{align}
	This follows from the fact that the matrix $S_k$ encompasses the three-term relation. Here by $[u^k][E_{l-s,l}]\mathcal{A}_+^\eta(m,u)$ we mean the coefficient in front of $u^k$ in the $u$-expansion of the coefficient in front of the operator $E_{l-s,l}$ in the corresponding formula \eqref{eq:Aplus1} or \eqref{eq:Aplus2}.
	
	In what follows we always consider $s\geq 0$.
	
	Note that from \eqref{eq:Aplus1} and \eqref{eq:Aplus2} it is clear that
	\begin{align}
	&\forall z\in\mathbb{C}\setminus\{-s,-s+1,\dots,-1\} %\\ \nonumber
	&\mathop{\Res}_{m=z}[u^k][E_{l-s,l}]\mathcal{A}_+^\eta(m,u) =0.
	\end{align}
	In particular, it is easy to see that the residue at $z=0$ vanishes.
	 
	Consider $r\in\{1,2,\dots,s-1\}$. From the form of $S_k$ and from the fact of polynomiality of $g_k$ we have
	%\fixme{The restrictions on $r$?}
	\begin{align}\label{eq:asexp}
	&\mathop{\Res}_{m=-r}[u^k][E_{l-s,l}]\mathcal{A}_+^\eta(m,u)\\ \nonumber
	&=(-1)^r\dfrac{1}{s-r}\left( e_\eta^T\; S_{-r}S_{-r+1}\cdots S_{-1} \left(\begin{array}{cc}
	\rho b(1-a) & 0 \\
	0 & 0
	\end{array}
	\right) S_1 \cdots S_{s-r-1}\; g_k(\rho,s-r)\right)\bigg|_{\rho=-r}\\ \nonumber
	&= \left(\rho b (1-a)\; e_\eta^T\; S_{-r}S_{-r+1}\cdots S_{-1}\; e_1\right)\Big|_{\rho=-r}\; \cdot\; \left(\dfrac{(-1)^r}{s-r}\, e_1^T\; S_1S_{2}\cdots S_{s-r-1}\; g_k(\rho,s-r)\right)\bigg|_{\rho=-r}.
	\end{align}
	The first equality followed from the fact that there is a simple pole at $-r$ in $S_{m-r}$, while all the other factors do not have poles at this point.
	
	Now, consider $\widetilde{\mathcal{A}}(-r,u)$. Again, recall \eqref{eq:aop_coefs} and \eqref{eq:three-term-expr}. For $r\in\{1,2,\dots,s-1\}$, we have
	\begin{equation}\label{eq:atildesexp}
	[u^k][E_{l-s,l}]\widetilde{\mathcal{A}}(-r,u) = A^{(b-1)(-r)}\left(\dfrac{(-1)^{s-r}}{s-r}\, e_1^T\; S_1S_{2}\cdots S_{s-r-1}\; g_k(\rho,s-r)\right)\bigg|_{\rho=-r}.
	\end{equation}
	Here we used the fact that $J_0=1$ and $J_{-1}=0$.
	
	Note that the RHS of \eqref{eq:atildesexp} differs from the RHS of \eqref{eq:asexp} only by a factor of
	\begin{equation}\label{eq:ameldifffact}
	A^{(b-1)(-r)} (-1)^r\left(\rho b (1-a)\; e_\eta^T\; S_{-r}S_{-r+1}\cdots S_{-1}\; e_1\right)\Big|_{\rho=-r},
	\end{equation}
	which does not depend on $s$, $l$ or $k$.
	
	Now let us consider the case $r=s$. Recall Lemma~\ref{lem:G0}. It provides an explicit formula for $g_k(\rho,0)$ which allows us to obtain the following:
\begin{align}
&\mathop{\Res}_{m=-s}[u^k][E_{l-s,l}]\mathcal{A}_+^\eta(m,u) 
\\ \notag &
= 
(-1)^{s+1}\left(\dfrac{\rho}{s}\, e_\eta^T\; S_{-s}S_{-s+1}\cdots S_{-1}\; g_k(\rho,0)\right)\bigg|_{\rho=-s}\\ \nonumber
&=
(-1)^{s+1}\left(\dfrac{\rho}{s}\, e_\eta^T\; S_{-s}S_{-s+1}\cdots S_{-1}\; e_1 \delta_{k,0} (1-a) b\right)\bigg|_{\rho=-s}\\ \nonumber
&=(-1)^{s}\left(\rho b (1-a)\; e_\eta^T\; S_{-s}S_{-s+1}\cdots S_{-1}\; e_1\right)\Big|_{\rho=-s} \; \cdot \left(- \dfrac{\delta_{k,0}}{s}\right).
\end{align}
Now consider $[u^k][E_{l-s,l}]\widetilde{\mathcal{A}}(-s,u)$. From \eqref{eq:aop_matrixel} (remember that we drop the factor $e^{um(l+s/2)}$ in that formula; it is easy to see that its inclusion does not affect the conclusion) we immediately obtain (as it is very easy to take the coefficient in front of $w^0$) that
\begin{equation}
[u^k][E_{l-s,l}]\widetilde{\mathcal{A}}(-s,u) = -\dfrac{\delta_{k,0}}{s} A^{(b-1)(-s)}.
\end{equation}
We conclude that for this $r=s$ case expressions $\displaystyle\mathop{\Res}_{m=-s}[u^k][E_{l-s,l}]\mathcal{A}_+^\eta(m,u)$ and $[u^k][E_{l-s,l}]\widetilde{\mathcal{A}}(-s,u)$ differ again by the same factor \eqref{eq:ameldifffact}.

%	Considering the fact that the part coming from the identity operator in formulas  \eqref{eq:Aplus1} and \eqref{eq:Aplus2} can only possibly have pole at zero, this proves the proposition for $r\in \mathbb{Z}_{>0}$ (since we have vacuum to the left on both sides of \eqref{eq:aatilderel}, we only need to consider the case of $s\geq 0$).
	
%	What remains is to consider the case $r=0$.

This proves the lemma since all parts with $s\leq 0$ vanish when acting on the covacuum.
%	
%	\fixme{Pole at zero}.
\end{proof}

Now we are ready to prove Theorem \ref{prop:Apolynom}. We partially follow the logic described in \cite{KLS} and~\cite{KLPS}.

\begin{proof}[Proof of Theorem \ref{prop:Apolynom}]
	As mentioned above, Proposition \ref{prop:aop_rationality} together with Definition \ref{def:aops} imply that the connected correlator of $\At$-operators indeed has the desired form:
	\begin{equation}
		[u^k]\langle \At(\mu_1,u)\dots\At(\mu_n,u)\rangle^{\circ}=
%	\end{equation*}
%	indeed has the desired form
%	\begin{equation*}
	\sum_{\substack{(\eta_1,\dots,\eta_n)\\\in\{1,2\}^{n}}}P_{k;\eta_1,\dots,\eta_n}(\mu_1,\dots,\mu_n)\xi^{\eta_1}_{\mu_1}\cdots \xi^{\eta_n}_{\mu_n},
	\end{equation}
	with
	\begin{equation}\label{eq:acorr1}
	P_{k;\eta_1,\dots,\eta_n}(\mu_1,\dots,\mu_n) = [u^k]\langle \mathcal{A}^{\eta_1}(\mu_1,u)\dots\mathcal{A}^{\eta_n}(\mu_n,u)\rangle^\circ.
	\end{equation}
	We just need to prove that these expressions $P_{k;\eta_1,\dots,\eta_n}(\mu_1,\dots,\mu_n)$ are polynomial in $\mu_1,\dots,\mu_n$.
	
	Note that correlators
	\begin{equation}\label{eq:atcors}
	[u^k]\langle \At(\mu_1,u)\dots\At(\mu_n,u)\rangle^{\circ}
	\end{equation}
%	\begin{equation}\label{eq:acorr2}
%	[u^k]\langle \mathcal{A}^{\eta_1}(\mu_1,u)\dots\mathcal{A}^{\eta_n}(\mu_n,u)\rangle^{\circ},
%	\end{equation}
	are symmetric in $\mu_1,\dots,\mu_n$.
	%$(\mu_1,\eta_1),\dots,(\mu_n,\eta_n)$.
	Together with the fact that functions $\xi^1_m$ and $\xi^2_m$ (as functions of $m$) are linearly independent over the ring of polynomials in $m$, this means that it is sufficient to prove that the expressions $P_{k;\eta_1,\dots,\eta_n}(\mu_1,\dots,\mu_n)$ are polynomial just in $\mu_1$ with the degree independent of $\mu_2,\dots,\mu_n$, and this will imply the polynomiality in $\mu_2,\dots,\mu_n$.
	% This follows from the symmetry of the above correlators and the fact 
	
	%Indeed, suppose that symmetric function $P(\mu_1, \dots, \mu_n)$ is polynomial in $\mu_1$, then it can be written as $P(\mu_1, \dots, \mu_n) = \sum_{k=0}^d  a_k(\mu_2, \dots, \mu_n) \mu_1^k$. In order to check that each coefficient of $P$ is also polynomial in $\mu_2$, one can compute the values of $P$ at the points $\mu_1=1,\dots,d+1$ and show that these values are polynomial in $\mu_2$. The values of $P$ at these particular values of $\mu_1$ can be computed with the help of the symmetry of $P$ as $P(\mu_2,\dots,\mu_n,\mu_1)$, which means that they are polynomial in $\mu_2$. Applying this argument repeatedly, we see that $P$ is polynomial in all variables.
	
	Let us proceed to proving 
	that the correlator in the RHS of \eqref{eq:acorr1} is polynomial in $\mu_1$. We will also prove that its degree in $\mu_1$ is bounded by a certain number which depends only on $k$ and does not depend on $\mu_2,\dots,\mu_n$.
	
	Note that
	\begin{equation}\label{eq:leftvacact}
	\forall s \leq 0 \quad \langle 0 | E_{l-s,l} = 0,
	\end{equation}
	Recall Definition \ref{def:aops}. It can be rewritten as follows:
	\begin{equation}\label{eq:assum}
		\mathcal{A}^\eta(m,u) = \sum_{s=-m}^\infty \mathcal{A}_s^\eta(m,u),
	\end{equation}
	where
	\begin{equation*}	
	\mathcal{A}_s^\eta(m,u) = \left\{ \begin{array}{ll}\sum_{l\in\mathbb{Z}+\frac{1}{2}} \; \sum_{k=0}^{\infty} u^k \dfrac{\widetilde{F}_{k,l}^\eta(m,s)}{(m-1)\dots(m+s+1)(m+s)^2} E_{l-s,l} & -m\leq s \leq -1\\
\sum_{l\in\mathbb{Z}+\frac{1}{2}} \; \sum_{k=0}^{\infty} u^k\left( F_{k,l}^\eta(m,0) E_{l,l} + \dfrac{\check{F}_{k}^\eta(m)}{m^2}\Id\right) & s=0\\ 
	\sum_{l\in\mathbb{Z}+\frac{1}{2}} \; \sum_{k=0}^{\infty} u^k \dfrac{F_{k,l}^\eta(m,s)}{(m+1)\dots(m+s)} E_{l-s,l} & 1\leq s
	\end{array}\right.
	\end{equation*}
	Note that a correlator of the form
	\begin{equation}
	\left\langle \left(\sum_{l_1\in\mathbb{Z}+\frac{1}{2}}c_{l_1,s_1}E_{l_1-s_1,l_1}\right)\cdots \left(\sum_{l_n\in\mathbb{Z}+\frac{1}{2}}c_{l_n,s_n}E_{l_n-s_n,l_n}\right)\right\rangle 
	\end{equation}
	can be nonzero only if $\sum_{i=1}^n s_i = 0$.
	Thus when one writes the $\mathcal{A}$-operators in terms of the $s$-sums \eqref{eq:assum} in the correlator
	\begin{equation*}
	[u^k]\langle \mathcal{A}^{\eta_1}(\mu_1,u)\dots\mathcal{A}^{\eta_n}(\mu_n,u)\rangle
	\end{equation*}
	and expands the brackets, only the terms with  $\sum_{i=1}^n s_i = 0$ will survive. Also note that in any surviving term in this expression we have $s_1\geq 0$, due to \eqref{eq:leftvacact}. Finally, note that from \eqref{eq:assum} we have $s_i\geq -\mu_i$. Thus, for fixed $\mu_2,\dots,\mu_n$ we have
	\begin{equation}
	0\leq s_1 \leq \mu_2+\dots+\mu_n.
	\end{equation} 
	Since %we know that
	 $F_{k,l}^\eta(m,s)$ and $\check{F}_{k}^\eta(m)$ are %finite 
	polynomials, this implies that the correlator
	\begin{equation*}
	[u^k]\langle \mathcal{A}^{\eta_1}(\mu_1,u)\dots\mathcal{A}^{\eta_n}(\mu_n,u)\rangle
	\end{equation*} 
	is a rational function in $\mu_1$ for fixed $\mu_2,\dots,\mu_n$. Moreover, we see that it can have at most simple poles at the negative integer points, and no other poles. The total degree in $\mu_1$ of this correlator as a rational function (i.e. the difference between the degree of the numerator and the denominator) is no greater than $9(k+n-1)+2$. This follows from the estimates on the degrees of polynomials $F_{k,l}^\eta(m,s)$ and $\check{F}_{k}^\eta(m)$ obtained in Proposition \ref{prop:aop_rationality}. We just need to replace $k$ with $k+n-1$ as the $u$-expansions of correlators $\mathcal{A}^{\eta_2}(\mu_2,u),\dots,\mathcal{A}^{\eta_n}(\mu_n,u)$ start with the $u^{-1}$-term.
	
	The fact that the disconnected correlators are rational functions in $\mu_1$ automatically implies that the connected ones are rational in $\mu_1$ as well. In order to prove that stable connected correlators are polynomial in $\mu_1$ we only need to prove that they do not have poles at negative integer points.

	Recall Definition \ref{def:aplusops}. Note that due to \eqref{eq:leftvacact} we have
	\begin{equation}\label{eq:covacAac}
	\langle 0 | 	\mathcal{A}^\eta(m,u) = \langle 0 | 	\mathcal{A}_+^\eta(m,u),
	\end{equation}
	and thus
	\begin{equation}\label{eq:multiAdis}
	\langle \mathcal{A}^{\eta_1}(\mu_1,u)\mathcal{A}^{\eta_2}(\mu_2,u)\dots\mathcal{A}^{\eta_n}(\mu_n,u)\rangle = \langle \mathcal{A}_+^{\eta_1}(\mu_1,u)\mathcal{A}^{\eta_2}(\mu_2,u)\dots\mathcal{A}^{\eta_n}(\mu_n,u)\rangle
	\end{equation}
	and
	\begin{equation}
	\langle \mathcal{A}^{\eta_1}(\mu_1,u)\mathcal{A}^{\eta_2}(\mu_2,u)\dots\mathcal{A}^{\eta_n}(\mu_n,u)\rangle^\circ = \langle \mathcal{A}_+^{\eta_1}(\mu_1,u)\mathcal{A}^{\eta_2}(\mu_2,u)\dots\mathcal{A}^{\eta_n}(\mu_n,u)\rangle^\circ.
	\end{equation}
	
%%	In order to prove Proposition \ref{prop:Apolynom} we are going to follow the logic described in \cite{KLS}.
%	
%	Recall that we are studying
%	\begin{equation}
%	\langle \mathcal{A}_+^{\eta_1}(\mu_1,u) %\mathcal{A}^{\eta_2}(\mu_2,u)\dots\mathcal{A}^{\eta_n}(\mu_n,u)\rangle^\circ
%	\end{equation}
	Consider the disconnected correlator in the RHS of \eqref{eq:multiAdis}. Recall that
	\begin{equation}
	\mathcal{A}_+^\eta(m,u) := \sum_{k=-1}^{\infty} u^k \dfrac{\check{F}_{k}^\eta(m)}{m^2} \Id+\sum_{l\in\mathbb{Z}+\frac{1}{2}} \; \sum_{s=1}^{\infty} \; \sum_{k=0}^{\infty} u^k \dfrac{F_{k,l}^\eta(m,s)}{(m+1)\dots(m+s)} E_{l-s,l}.
	\end{equation}
%	
%	From \eqref{eq:Aplus1}, \eqref{eq:Aplus2}, and \eqref{eq:covacAac} it is clear
	%the connected correlator \eqref{eq:Acorru}
%	that this connected correlator
%	is a rational function in $\mu_1$ which can only have poles at $\mu_1\in\mathbb{Z}_{\leq 0}$.
%	
	%Note that the pole at $\mu_1=0$ can come only from the $\Id$-part.
	Consider the $\Id$-part of $\mathcal{A}^{\eta_1}_+$. Note that the inclusion-exclusion formula for the connected correlator in terms of the disconnected ones for $n\geq 2$ will always contain the term of the form 
	\begin{equation}
	\langle \mathcal{A}_+^{\eta_1}(\mu_1,u)\rangle\langle \mathcal{A}^{\eta_2}(\mu_2,u)\dots\mathcal{A}^{\eta_n}(\mu_n,u)\rangle,
	\end{equation}
	with the opposite sign. Note that in the one-point correlator only the $\Id$-part gives a nonzero contribution. More precisely, 
\begin{equation}
\langle \mathcal{A}_+^{\eta_1}(\mu_1,u)\rangle = \sum_{k=-1}^{\infty} u^k \dfrac{\check{F}_{k}^\eta(m)}{m^2}
\end{equation}
This is precisely the factor which the $\Id$-part of the operator	$\mathcal{A}^{\eta_1}_+$ contributes to $\langle \mathcal{A}_+^{\eta_1}(\mu_1,u) \mathcal{A}^{\eta_2}(\mu_2,u)\dots\mathcal{A}^{\eta_n}(\mu_n,u)\rangle$, i.e.
\begin{align}
&
\left\langle \sum_{k=-1}^{\infty} u^k \dfrac{\check{F}_{k}^\eta(m)}{m^2} \Id \mathcal{A}^{\eta_2}(\mu_2,u)\dots\mathcal{A}^{\eta_n}(\mu_n,u)\right\rangle 
\\ \notag &
= \langle \mathcal{A}_+^{\eta_1}(\mu_1,u)\rangle\langle \mathcal{A}^{\eta_2}(\mu_2,u)\dots\mathcal{A}^{\eta_n}(\mu_n,u)\rangle
\end{align}
This means that these contributions precisely cancel in the inclusion-exclusion formula. Similar reasoning proves that for $n\geq 2$ the $\Id$-parts of the $\mathcal{A}$-operators do not give any nonzero contributions into the connected correlator at all. Thus any connected multi-point correlator can only have poles
%at $\mu_1\in\mathbb{Z}_{\leq -1}$.
coming from the $\Id$-less parts of the $\mathcal{A}$-operator.

%Now let u
Let us return to the correlator of $\At$-operators. We are interested in its dependence on $\mu_1$. Let us prove that  for any $r$ we have
\begin{equation}\label{eq:corrreszero}
\mathop{\Res}_{\mu_1 = - r}\corc{ \mathcal{A}^{\eta_1}(\mu_1,u)\At(\mu_2,u)\cdots\At(\mu_n,u)} = 0.
\end{equation}
For $r\notin \mathbb{Z}_{>0}$ it is clear. 
	Consider $r\in \mathbb{Z}_{>0}$. 
	From Lemma~\ref{lem:aatilderel} for the disconnected correlator we have
	\begin{equation}
	\mathop{\Res}_{\mu_1 = - r}\cord{ \mathcal{A}^{\eta_1}(\mu_1,u)\prod_{i=2}^n \At(\mu_i,u)} = 
	c^{\eta_1}(\mu_1)
	\cord{\widetilde{\mathcal{A}}(-r,u) \prod_{i=2}^n \At(\mu_i,u)},
	\end{equation}
	where \( c^{\eta_1}(\mu_1)\) is the coefficient in Lemma~\ref{lem:aatilderel}.
	%Now consider the sum
	%\begin{align}		
%		&\sum_{(\eta_2,\dots,\eta_n)\in\{1,2\}^{n-1}}\mathop{\Res}_{\mu_1 = - r}\cord{ \prod_{i=1}^n \mathcal{A}^{\eta_i}(\mu_i,u)}\\ \nonumber
%	&=c^{\eta_1}(\mu_1)
%	\sum_{(\eta_2,\dots,\eta_n)\in\{1,2\}^{n-1}}
%	\cord{\widetilde{\mathcal{A}}(-r,u) \prod_{i=2}^n \mathcal{A}^{\eta_i}(\mu_i,u)}\xi^{\eta_2}_{\mu_2}\cdots\xi^{\eta_n}_{\mu_n}
%	\end{align}
%	
	Recalling equations \eqref{eq:kexpr}, \eqref{Kcorr} and \eqref{eq:atildedef} we can see that the RHS of the previous equality reduces to
	\begin{equation}\label{eq:rescord}
	%\mathop{\Res}_{\mu_1 = - r}\cord{ \prod_{i=1}^n \mathcal{A}^{\eta_i}(\mu_i,u)} =
	C\cord{\exp \lb \sum_{i=1}^\infty \frac{\alpha_{i} p^*_i}{i} \rb
		\exp \lb u \Fc_2 \rb
		%	e^{u\Fc_2}
		\alpha_r \prod_{i=2}^n \frac{\alpha_{-\mu_i}}{\mu_i}} 
	\end{equation}
	%\fixme{$\mathcal{A}$-operators, as opposed to $\widetilde{\mathcal{A}}$-operators, can't easily be expressed in terms of $\al$-operators, fix this}\\
	for some specific coefficient \( C\) that depends only on $r$ and $\eta_1$.
	Because \( [\alpha_k, \alpha_l] = k\delta_{k+l,0} \), and \( \alpha_{r} \) annihilates the vacuum, this residue is zero unless one of the \( \mu_i \) equals \( r \) for \( i \geq 2\).
	
	Now return to the connected $n$-point correlator for $n>2$. It can be calculated from the disconnected one by the inclusion-exclusion principle, so in particular it is a finite sum of products of disconnected correlators. Hence the connected correlator is also a rational function in $\mu_1$, and all possible poles must be inherited from the disconnected correlators.
	%Again, consider the sum over all $\eta_2,\dots,\eta_n$ of these correlators multiplied by $\xi^{\eta_2}_{\mu_2}\cdots\xi^{\eta_n}_{\mu_n}$.
	The above reasoning implies that we can assume \( \mu_i = r \) for some \( i \geq 2\). Without loss of generality we can assume that it is the case for $i=2$, and $\mu_i\not= r $ for $i\geq 3$.  Then we get a contribution from \eqref{eq:rescord}, but this is canceled in the inclusion-exclusion formula exactly by the term coming from
	\begin{align}
	&%\sum_{(\eta_2,\dots,\eta_n)\in\{1,2\}^{n-1}}
	\mathop{\Res}_{\mu_1 = - r}\cord{\mathcal{A}^{\eta_1}(\mu_1,u)\At(r,u) }\cord{ %\prod_{\substack{2 \leq j \leq n \\ j \neq i}} 
		\prod_{i=3}^n
		\At(u, \mu_j)}
	%\xi^{\eta_2}_{\mu_2}\cdots\xi^{\eta_n}_{\mu_n}
	\\ \nonumber
	&= C\cord{\exp \lb \sum_{i=1}^\infty \frac{\alpha_{i} p^*_i}{i} \rb
		\exp \lb u \Fc_2 \rb
		%	e^{u\Fc_2}
		\alpha_r\, \alpha_{-r}}\cord{\exp \lb \sum_{i=1}^\infty \frac{\alpha_{i} p^*_i}{i} \rb
		\exp \lb u \Fc_2 \rb
		%	e^{u\Fc_2}
		%\alpha_r 
		\prod_{i=3}^n \alpha_{-\mu_i}} \\ \nonumber
	&= C \cord{\exp \lb \sum_{i=1}^\infty \frac{\alpha_{i} p^*_i}{i} \rb
		\exp \lb u \Fc_2 \rb
		%	e^{u\Fc_2}
		\alpha_r \prod_{i=2}^n \alpha_{-\mu_i}}
	\end{align}
%	\fixme{$\mathcal{A}$-operators, as opposed to $\widetilde{\mathcal{A}}$-operators, can't easily be expressed in terms of $\al$-operators, fix this}\\
%	What remains is a possible pole at zero.	
%	\fixme{Add a paragraph explicitly addressing pole at zero}.
%Thus we have proved that the following sum is polynomial in $\mu_1$ (for $n>2$):
%\begin{equation}\label{eq:acorr2}
%\sum_{(\eta_2,\dots,\eta_n)\in\{1,2\}^{n-1}}[u^k]\langle \mathcal{A}^{\eta_1}(\mu_1,u)\dots\mathcal{A}^{\eta_n}(\mu_n,u)\rangle^{\circ}\xi^{\eta_2}_{\mu_2}\cdots\xi^{\eta_n}_{\mu_n}
%\end{equation}
%for any fixed values of $\mu_2,\dots,\mu_n$.
%	Hence, the connected $n$-point correlator has no residues for $n>2$, which proves it is polynomial in \( \mu_1 \). Since its total degree as a rational function in $\mu_1$ was bounded by $9(k+n-1)+2$ as explained above, now that we know that it is actually a polynomial, we see that its degree is still bounded by the same number, i.e. by $9(k+n-1)+2$.
Thus we have proved \eqref{eq:corrreszero} for $n>2$, which implies that for $n>2$ we have
\begin{equation}\label{eq:acorr1form}
[u^k]\langle \At(\mu_1,u)\dots\At(\mu_n,u)\rangle^{\circ}=
%	\end{equation*}
%	indeed has the desired form
%	\begin{equation*}
\sum_{\substack{\eta_1\in\{1,2\}}}\widetilde{P}_{k;\eta_1}(\mu_1;\mu_2,\dots,\mu_n)\xi^{\eta_1}_{\mu_1},
\end{equation}
where $\widetilde{P}_{k;\eta_1}(\mu_1;\mu_2,\dots,\mu_n)$ is some expression polynomial in $\mu_1$. Since the total degree of the correlator as a rational function in $\mu_1$ was bounded by $9(k+n-1)+2$ as explained above, now that we know that it is actually a polynomial, we see that its degree of $\widetilde{P}_{k;\eta_1}(\mu_1;\mu_2,\dots,\mu_n)$ in $\mu_1$ is still bounded by the same number, i.e. by $9(k+n-1)+2$.
	
	Now consider the case of $2$-point correlators. Let us prove that they do not have poles at negative integers for the nonzero genus case. For $r\in\mathbb{Z}_{>0}$ we have
\begin{align}
	&\mathop{\Res}_{\mu_1 = - r}[u^k]\cord{\mathcal{A}^{\eta_1}(\mu_1,u)\At(\mu_2,u) }
	\\ \notag &
	= C[u^k]\cord{\exp \lb \sum_{i=1}^\infty \frac{\alpha_{i} p^*_i(u)}{i} \rb
		\exp \lb u \Fc_2 \rb
		%	e^{u\Fc_2}
		\alpha_r\, \alpha_{-\mu_2}}.
	%\\ \nonumber
	%&\phantom{a}= C[u^k] \cord{\exp \lb \sum_{i=1}^\infty \frac{\alpha_{i} p^*_i}{i} \rb
%		\exp \lb u \Fc_2 \rb
		%	e^{u\Fc_2}
%		\alpha_r \prod_{i=2}^k \alpha_{-\mu_i}}
\end{align}
%	\fixme{$\mathcal{A}$-operators, as opposed to $\widetilde{\mathcal{A}}$-operators, can't easily be expressed in terms of $\al$-operators\, fix this}\\
	This is nonzero only for $\mu_2=r$ and equal to
	\begin{equation*}
	C[u^k]\cord{\exp \lb \sum_{i=1}^\infty \frac{\alpha_{i} p^*_i(u)}{i} \rb
		\exp \lb u \Fc_2 \rb}.
		%	e^{u\Fc_2}}
	\end{equation*}
	Note that a nonzero power of $u$ comes either from $p^*_i(u)$, and then it comes together with some $\alpha_i$ where $i>0$, or from the exponential, and then it comes together with $\mathcal{F}_2$. It cannot come from the exponential, since $\mathcal{F}_2|0\rangle=0$, and it cannot come from $p^*_i(u)$, since $\al_i|0\rangle=0$ for $i>0$. Thus for $k>0$ the residue vanishes. The bound on the degree in $\mu_1$ in this case is an obvious implication of Proposition \ref{prop:aop_rationality}.
	
	What remains is to prove that stable one-point correlators are polynomial. Note that
	\begin{equation}
	[u^k]\left\langle\mathcal{A}(m,u)\right\rangle=\dfrac{\check{F}_{k}^\eta(m)}{m^2},
	\end{equation}
	since for any $i,\,j$ we have $\left\langle E_{i,j}\right\rangle=0$. Note that equations \eqref{eq:fka0}, \eqref{eq:aop_idel}, \eqref{eq:expJacobi}, and \eqref{eq:xiJrel} together imply that for $k\in\mathbb{Z}_{\geq 0}$
	\begin{align}
	\check{F}_{2k}^\eta(m)&=0,\\ \label{eq:fkexpr}
	\check{F}_{2k-1}^\eta(m)&=m^2\sum_{i=0}^k G_{k-i}^\eta(m,m)\,\cdot [u^{2i-1}]\dfrac{1}{\zeta(um)} \dfrac{1}{m},
	\end{align}
	where $G^\eta_k(\rho,m)$ are the polynomials from the RHS of $\eqref{eq:expJacobi}$. We need to prove that polynomials  $\check{F}_{2k-1}^\eta(m)$ are divisible by $m^2$ for $k\in \mathbb{Z}_{\geq 1}$. Note that the only interesting term in the sum in the RHS of  \eqref{eq:fkexpr} is the term with $i=0$ as we know that $G_k^\eta(m,m)$ are polynomials in $m$ and that $[u^{2i-1}]\zeta(um)^{-1}$ is polynomial in $m$ divisible by $m^2$ for $i>0$. Thus we only need to prove that $G_k^\eta(m,m)$ is divisible by $m^2$ for $k>0$. But this is exactly the result of Lemma~\ref{lem:Gdoublezero}.  Hence, we have proved the polynomiality of the stable one-point correlators. The bound on the degree in $\mu_1$ in this case is, again, an obvious implication of Proposition \ref{prop:aop_rationality}.
	
	We conclude that the statement in \eqref{eq:acorr1form} holds for $n=1$ and $n=2$ as well, as long as $(n,k)\notin \{(1,-1),(2,0)\}$.	Together with the fact that the correlator \eqref{eq:atcors} is symmetric in $\mu_1,\dots,\mu_n$ and the fact that functions $\xi^1_m$ and $\xi^2_m$ (as functions of $m$) are linearly independent over the ring of polynomials in $m$, this implies the statement of the theorem.
	
	This completes the proof of the polynomiality of the stable  $\mathcal{A}$-correlators.
\end{proof}

\section{Towards a combinatorial proof of topological recursion} \label{sec:combprooftr}

Recall the connected correlators of the extended Ooguri-Vafa partition function defined in Equation~\eqref{eq:definitionCgmu}. From connected correlators we can define some correlation differentials
$\tilde{\omega}_{g,n}$ via their expansion at $\Lambda = \infty$
\begin{align}
	\tilde\omega_{g,n} &:= Q^{-n} \sum_{\mu_1 \dots \mu_n = 1}^\infty 
	C^{(g)}_{\mu_1 \dots \mu_n} \prod_{i = 1}^n (-\mu_i) \frac{d \Lambda_i}{\Lambda_i^{\mu_i + 1}}.
\end{align}
We \textit{a priori} do not know whether these differentials $\tilde\omega_{g,n}$
coincide with correlation differentials $\omega_{g,n}$, obtained through
topological recursion, or even whether they are well-defined on the curve.
However, Theorem~\ref{thm:xi-functions} implies that the quasipolynomiality result in Theorem~\ref{prop:Apolynom} is equivalent to the following
\begin{theorem} \label{teo:functions-on-the-curve}
	Correlation multidifferentials $\tilde\omega_{g,n}$ are well-defined
	meromorphic differentials on the spectral curve \eqref{eq:bem-curve},
	moreover, they are expressed in terms of $\xi$-functions \eqref{eq:xi-functions-nonperturbative} in the following way
	\begin{align}
		\tilde\omega_{g,n} (U_1,\dots,U_n) = \sum_{\substack{a_1\dots a_n \\ d_1 \dots d_n}} c_{d_1\dots d_n}^{a_1 \dots a_n} \prod_{i=1}^n d\br{\frac{d}{d\lambda_i}}^{d_i}  \xi^{a_i}(U_i),
	\end{align}
	where sum over $d_i$ is finite and each $a_i$ runs from $1$ to $2$.
\end{theorem}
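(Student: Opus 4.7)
The plan is to deduce Theorem~\ref{teo:functions-on-the-curve} directly from Theorem~\ref{prop:Apolynom} via a generating-function argument. The single crucial identity is
\begin{align*}
\frac{d}{d\lambda}\Lambda^{-\mu} = -\mu\,\Lambda^{-\mu},
\end{align*}
which holds because $\lambda = \log\Lambda$, together with the related identity $-\mu\,d\Lambda/\Lambda^{\mu+1} = d(\Lambda^{-\mu})$. These allow us to turn polynomial prefactors in $\mu_i$ (which is what Theorem~\ref{prop:Apolynom} produces) into differential operators $d/d\lambda_i$ acting on the $\xi$-functions on the spectral curve (which is what Theorem~\ref{teo:functions-on-the-curve} needs).

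First I would express $C^{(g)}_{\mu_1,\dots,\mu_n}$ as a connected $\At$-correlator. Passing from $Z^{\ext}$ to $\log Z^{\ext}$ turns the disconnected correlator \eqref{Kcorr} into the connected one; combined with \eqref{eq:kmu-def} and $u = b\hbar$ one obtains
\begin{align*}
C^{(g)}_{\mu_1,\dots,\mu_n} = Q^n\,b^{2g-2+n}\,[u^{2g-2+n}]\,\langle \At(\mu_1,u\mu_1)\cdots\At(\mu_n,u\mu_n)\rangle^{\circ}.
\end{align*}
Theorem~\ref{prop:Apolynom} (applied to the $\At$-operators with their second arguments treated as independent formal parameters $v_i$, so that one may subsequently specialize $v_i = u\mu_i$) then yields, for every stable $(g,n)$, a finite expansion
\begin{align*}
C^{(g)}_{\mu_1,\dots,\mu_n} = \sum_{(\eta_i)\in\{1,2\}^n} R^{g,n}_{\eta_1\dots\eta_n}(\mu_1,\dots,\mu_n)\prod_{i=1}^n \xi^{\eta_i}_{\mu_i},
\end{align*}
where $R^{g,n}_{\eta_1\dots\eta_n}$ is a polynomial in $\mu_1,\dots,\mu_n$ (the rescaling $v_i\mapsto u\mu_i$ adds extra polynomial factors $\mu_i^{k_i}$ but preserves polynomiality).

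Next I would reassemble $\tilde\omega_{g,n}$. Substituting the above into the definition of $\tilde\omega_{g,n}$, using $-\mu\,d\Lambda/\Lambda^{\mu+1} = d(\Lambda^{-\mu})$, and applying the key identity monomial by monomial in each variable, the sum $\sum_{\mu_i\ge 1}\xi^{\eta_i}_{\mu_i}\Lambda_i^{-\mu_i} = \xi^{\eta_i}(U_i)$ assembles the generating series into $\xi$-functions, while each factor $\mu_i^{d_i}$ transmutes into $(-d/d\lambda_i)^{d_i}$. Expanding $R^{g,n}_{\eta_1\dots\eta_n}$ in monomials and absorbing the signs $(-1)^{d_i}$ into the coefficients, one arrives at exactly
\begin{align*}
\tilde\omega_{g,n}(U_1,\dots,U_n) = \sum_{\substack{(a_i)\\ (d_i)}} c^{a_1\dots a_n}_{d_1\dots d_n}\,\prod_{i=1}^n d\!\left(\frac{d}{d\lambda_i}\right)^{d_i}\xi^{a_i}(U_i),
\end{align*}
which is the desired form. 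Meromorphicity on the spectral curve is then automatic: by Theorem~\ref{thm:xi-functions} each $\xi^a$ is a rational function of $U$ with poles only at the critical points $u_0\pm\Delta u$, and applying $d/d\lambda$ a finite number of times preserves rationality.

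The only real obstacle is bookkeeping: the substitutions $u\mapsto u\mu_i$, the factors of $Q$ and $b$, and the signs $(-1)^{d_i}$ must all be tracked consistently, and one must verify that Theorem~\ref{prop:Apolynom} genuinely applies operator-wise, i.e.\ that allowing different $v_i$'s in different $\At$-operators preserves polynomiality of the coefficients in $\mu_1,\dots,\mu_n$. No new analytic or combinatorial input is required beyond the polynomiality theorem proven in Section~\ref{sec:Apoly}; in fact, the reverse implication obtained by extracting coefficients at $\Lambda_i=\infty$ from the right-hand side shows that the two theorems are genuinely equivalent, which is why they can be regarded jointly as the main result of the paper.
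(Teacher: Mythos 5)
Your proposal is correct and follows essentially the same route as the paper: Theorem~\ref{teo:functions-on-the-curve} is stated there as an equivalent reformulation of Theorem~\ref{prop:Apolynom} via Theorem~\ref{thm:xi-functions}, and your generating-function dictionary ($\mu_i^{d_i}\Lambda_i^{-\mu_i}=(-d/d\lambda_i)^{d_i}\Lambda_i^{-\mu_i}$ together with the resummation $\sum_{m}\xi^{a}_{m}\Lambda^{-m}=\xi^{a}(U)$) is precisely the translation the paper leaves implicit. Your only superfluous worry is the independence of the second arguments $u\mu_i$: Proposition~\ref{prop:aop_rationality} already substitutes $\rho=m$ into the coefficients of $\At(m,um)$, so Theorem~\ref{prop:Apolynom} applies directly to the correlators $K_\mu$ as they stand.
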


This is another way to state the main result of this paper.

\begin{remark}
	It is straightforward to check that $\tilde\omega_{g,n}$ satisfy
	the linear abstract loop equation, as well as the projection property
	\cite{BoSh15,BorotEynardOrantin} (the necessary checks for unstable correlation differentials are performed in the next section ). So, in order to be able to claim that $\tilde\omega_{g,n} = \omega_{g,n}$, it remains to prove the quadratic abstract loop equation.
\end{remark}

\section{Unstable correlators} \label{sec:unstcorr}
In this section we prove that the unstable correlation differentials for the BEM spectral curve coincide with the expressions derived from the semi-infinite wedge formalism.

\subsection{The case $(g,n) = (0,1)$} We consider the $1$-point correlation differentials in genus $0$.
\begin{theorem}\label{thm:01correlators} We have:
	\begin{align}
		\frac{d}{d\lambda} F_{0,1} (U) \frac{(-1)}{Q^2} + \log(A^2) = y(U) - \frac{\gamma}{Q}x(U)
	\end{align}
\end{theorem}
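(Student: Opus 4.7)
My approach would be to compute $\frac{d}{d\lambda}F_{0,1}(U)$ directly from the semi-infinite wedge representation and match it against the curve-theoretic expression for $y-\gamma x/Q$. Since $F_{0,1}(U)=\sum_{m\ge 1} C^{(0)}_m \Lambda^{-m}$, the first step is to identify $C^{(0)}_m$ explicitly. From the $n=1$ specialization of the definitions, $C^{(0)}_m$ is (up to a universal factor from $\hbar=u/b$) the coefficient $[u^{-1}]K_m(u)$, where $K_m=\langle\tilde{\mathcal{A}}(m,um)\rangle$ by \eqref{Kcorr}. In this one-point correlator, only the $k=m$ term of the $\mathcal{E}$-sum in Definition~\ref{def:a-oper} survives, because $\langle 0|\mathcal{E}_{k-m}(um)|0\rangle$ vanishes unless $k=m$, giving the closed form
\[
K_m = \frac{A^{bm}}{m\,\zeta(um)}\sum_{\lambda\vdash m}\prod_{i\ge 1}\frac{1}{(\lambda_i-\lambda_{i+1})!}\left(\frac{A^i-A^{-i}}{i}\cdot\frac{\zeta(ium)}{\zeta(iu/b)}\right)^{\lambda_i-\lambda_{i+1}}.
\]
Extracting $[u^{-1}]K_m$ reduces to computing the leading $u^0$-value of the partition sum, which by Corollary~\ref{prop:Jacobiseries} collapses to a Jacobi-polynomial evaluation equal to $A^{-bm}(1-A^2)b\,\xi^1_m$. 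Thus $C^{(0)}_m$ is an explicit constant times $\xi^1_m/m^2$, and consequently $\frac{d}{d\lambda}F_{0,1}=-\sum_m m\,C^{(0)}_m\Lambda^{-m}$ is (up to the same constant) the formal antiderivative of $\xi^1(U)$ in $\lambda$ normalized to vanish at $\Lambda=\infty$.

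The second step is the key curve-theoretic identity
\[
(1-A^2)\,\xi^1(U)\;=\;\frac{d}{d\lambda}\bigl[\log(1-A^{b-1}U)-\log(1-A^{b+1}U)\bigr].
\]
I would verify this directly using the explicit $\xi^1(U)$ produced by Theorem~\ref{thm:xi-functions} and the rational formula $\frac{d\lambda}{dU}=\frac{1}{U}-\frac{bA^{b+1}}{1-A^{b+1}U}+\frac{bA^{b-1}}{1-A^{b-1}U}$ coming from the definition of $x$. Putting the latter over a common denominator, the numerator simplifies (using that $u_0^2-(\Delta u)^2=A^{-2b}$ and the formulas for $u_0$, $\Delta u$) to $A^{2b}\bigl((U-u_0)^2-(\Delta u)^2\bigr)$; this is precisely the factorization that cancels the $(\Delta u)^2-(U-u_0)^2$ denominator of $\xi^1$ after multiplying by $(d/d\lambda)\log[(1-A^{b-1}U)/(1-A^{b+1}U)]=(dU/d\lambda)\cdot A^{b-1}(A^2-1)/[(1-A^{b-1}U)(1-A^{b+1}U)]$, yielding $\xi^1(U)=U/[A^{b+1}((\Delta u)^2-(U-u_0)^2)]$ as required.

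Integrating the identity along $\lambda$ then gives $(1-A^2)\int\xi^1 d\lambda=\log\tfrac{1-A^{b-1}U}{1-A^{b+1}U}+C$, and the integration constant $C$ is fixed by the fact that the antiderivative obtained from the Laurent series has no $\Lambda^0$ term: evaluating the logarithm at the preimage of $\Lambda=\infty$ corresponding to $U=\infty$ gives $\log A^{-2}$, so $C=\log A^2$. Combining this with the identification of $\frac{d}{d\lambda}F_{0,1}$ as a multiple of $\int\xi^1 d\lambda$ and with the curve-theoretic formula $y-\gamma x/Q=\tfrac{1}{Q}\log\tfrac{1-A^{b+1}U}{1-A^{b-1}U}$ reduces the theorem to an algebraic rearrangement. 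The main obstacle will be the careful bookkeeping of the numerical prefactors: the factor of $Q^{-n}$ between $K_\mu$ and the $\bar p_{\mu/Q}$-derivative, the passage from $u$ to $\hbar$ through $u=b\hbar$, and the constant of integration all conspire to produce the precise $-1/Q^2$ and $\log A^2$ appearing in the statement, and getting these in agreement is where essentially all the delicacy of the argument lies.
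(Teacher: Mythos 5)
Your proposal is correct and follows essentially the same route as the paper: you compute $C^{(0)}_m$ from the one-point $\widetilde{\mathcal{A}}$-correlator (where only the scalar part of $\mathcal{E}_0$ survives) and identify it via Corollary~\ref{prop:Jacobiseries} as a constant multiple of $\xi^1_m/m^2$, then integrate $\xi^1$ once in $\lambda$ and match the resulting logarithm against $y-\tfrac{\gamma}{Q}x$, fixing the constant by the absence of a $\Lambda^0$ term. The only difference is that you spell out the verification of $(1-A^2)\,\xi^1=\tfrac{d}{d\lambda}\log\tfrac{1-A^{b-1}U}{1-A^{b+1}U}$ (using $u_0^2-(\Delta u)^2=A^{-2b}$), which the paper simply asserts as the result of one integration; your computation is correct and consistent with the paper's stated antiderivative.
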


In order to prove this theorem, we first calculate the $1$-point vacuum expectation in genus $0$.

\begin{lemma}\label{lem:01vacuumcomp} We have
	\begin{align} \label{eq:01vacuum}
		C_{m}^{(0)} = \frac{Q (1-A^2)}{m^2} A^{(b-1)m} (-1)^m \mathcal{P}_{m-1}^{((b-1)m, 1)}(1 - 2 A^2)
	\end{align}
\end{lemma}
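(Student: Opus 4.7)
The plan is to evaluate $C^{(0)}_m$ directly via the semi-infinite wedge representation. By \eqref{Kcorr} with a single argument,
\[
K_m(u) = \langle \widetilde{\mathcal{A}}(m,um)\rangle,
\]
and the definition \eqref{eq:atildedef} expresses $\widetilde{\mathcal{A}}(m,um)$ as a sum of $\mathcal{E}_{k-m}(um)$ with coefficients built out of $p^*_i$. Since $\exp(u\mathcal{F}_2)|0\rangle=|0\rangle$ and $\langle 0|\alpha_{-j}=0$, one checks $Z^{\ext}(\bar p=0)=1$, so $\partial_{\bar p_{m/Q}}\log Z^{\ext}|_{\bar p=0}=Q\,K_m(u)$, and after the substitution $u=b\hbar$ one obtains $C^{(0)}_m=Q\,b^{-1}[u^{-1}]K_m(u)$.

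The first step is to reduce the correlator. From $\langle\mathcal{E}_n(z)\rangle=\delta_{n,0}/\zeta(z)$ only the term $k=m$ survives, giving
\[
K_m(u)=\frac{A^{bm}}{m\,\zeta(um)}\sum_{\lambda\vdash m}\prod_{i\ge1}\frac{\bigl(\tfrac{p^*_i}{i}\zeta(ium)\bigr)^{\lambda_i-\lambda_{i+1}}}{(\lambda_i-\lambda_{i+1})!}.
\]
The key observation is that, although $p^*_i=(A^i-A^{-i})/\zeta(iu/b)$ has a pole at $u=0$, the product $\frac{p^*_i}{i}\zeta(ium)$ is regular and satisfies
\[
\lim_{u\to0}\frac{p^*_i}{i}\zeta(ium)=\frac{A^i-A^{-i}}{i}\cdot mb,
\]
so the whole $\lambda$-sum is a regular power series in $u$ whose constant term I will compute explicitly. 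The prefactor $1/\zeta(um)=1/(um)+O(u)$ carries the only $u^{-1}$ singularity, so
\[
[u^{-1}]K_m(u)=\frac{A^{bm}}{m^2}\,\Sigma_0,\qquad \Sigma_0:=\sum_{\lambda\vdash m}\prod_{i\ge1}\frac{\bigl(\tfrac{A^i-A^{-i}}{i}\,mb\bigr)^{\lambda_i-\lambda_{i+1}}}{(\lambda_i-\lambda_{i+1})!}.
\]

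The second step is to identify $\Sigma_0$ via Corollary~\ref{prop:Jacobiseries} with $x=mb$:
\[
\Sigma_0=(-1)^m A^{-m}(1-A^2)\,b\,\mathcal{P}_{m-1}^{((b-1)m,1)}(1-2A^2).
\]
Substituting back and using $C^{(0)}_m=Qb^{-1}[u^{-1}]K_m(u)$ yields \eqref{eq:01vacuum}. The calculation is essentially routine once the limits are taken correctly; the only nontrivial point is recognising that the $u$-regular part of $\tfrac{p^*_i}{i}\zeta(ium)$ produces exactly the argument $x=mb$ that matches the Jacobi-polynomial generating function of Corollary~\ref{prop:Jacobiseries}, so no further identities are needed.
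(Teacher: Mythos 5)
Your proposal is correct and follows essentially the same route as the paper: reduce $C^{(0)}_m = Qb^{-1}[u^{-1}]K_m(u)$ to the vacuum expectation of $\widetilde{\mathcal{A}}(m,um)$, observe that only the scalar part of $\mathcal{E}_0(um)$ contributes so the sole $u^{-1}$ singularity comes from the prefactor $1/\zeta(um)$, and identify the resulting partition sum at $u=0$ with the Jacobi polynomial via Corollary~\ref{prop:Jacobiseries} at $x=mb$. The only difference is that you spell out the normalization $Z^{\ext}(0)=1$ and the regularity of $\tfrac{p^*_i}{i}\zeta(ium)$ more explicitly than the paper does, which is harmless.
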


\begin{proof}
	Notice that
	\begin{align} \label{eq:C-from-K}
		C_{m}^{(0)} = \frac{1}{b} Q \bs{u^{-1}} K_m (u).
	\end{align}
	Since the only term that contributes to the vacuum expectation in $\mathcal{E}_0$ is the summand proportional to the identity operator, we have:
	\begin{align}
		&  \bs{u^{-1}} K_m (u)  =  \ \bs{u^{-1}} \ba{\At(m, u m)} \\ \notag
		&  = \ \bs{u^{-1}} \dfrac{A^{b m}}{m \zeta(u m) } \sum_{\lambda \vdash m}\prod_{i=1}^{l(\lambda)} 
		\dfrac{    \br{\dfrac{(A^i-A^{-i})}{i}\, m  b \dfrac{\mathcal{S}(i u m)}{\mathcal{S}(i u  b^{-1})}}^{\la_i-\la_{i+1}}
		}{(\lambda_i-\lambda_{i+1})!}
		\\ \notag
		&  = \ \dfrac{A^{b m}}{m^2} \sum_{\lambda \vdash m}\prod_{i=1}^{l(\lambda)}
		\br{\dfrac{1}{(\lambda_i-\lambda_{i+1})!}
			\br{\dfrac{(A^i-A^{-i})}{i}\, m  b}^{\la_i-\la_{i+1}}}
		\\ \notag
		&  =  \ \dfrac{ b (1 - A^2)}{m^2} A^{(b-1)m} (-1)^m \mathcal{P}_{m-1}^{(( b-1)m, 1)}(1 - 2 A^2),
	\end{align}
	where we used Corollary \ref{prop:Jacobiseries} in the last line.
	Plugging this into \eqref{eq:C-from-K} proves the lemma.
\end{proof}

\begin{proof}[Proof of Theorem~\ref{thm:01correlators}]
	Compare the expression for $C^{(0)}_{m}$ given by Lemma~\ref{lem:01vacuumcomp} with expansion of function $\xi^1(U)$ given by Equation~\eqref{eq:xi-functions-expansion}. 
	We see that the genus zero $1$-point free energy $F_{0,1}$ analytically continued to the whole curve must be equal to
	\begin{align}
		F_{0,1}(U) = \sum_{m=1}^\infty C_m^{(0)} \Lambda^{-m} = Q (1-A^2) \br{\dfrac{d}{d\lambda}}^{-2} \xi^1(U)
	\end{align}
	for an appropriate choice of the integration constants (such that the constant and the linear in $\Lambda$ terms do not appear in the expansion of $F_{0,1}$); $\br{d/d\lambda}^{-2}$ appears due to the presence of the factor $1/m^2$ in \eqref{eq:01vacuum}.
	Integrating $\xi^1(U)$ once we get
	\begin{align}
		\br{\dfrac{d}{d\lambda}}^{-1} \xi^1(U) = \frac{(-1)}{1 - A^2}
		\log \br{\dfrac{1 - A^{ b+1} U}{A^2 - A^{ b+1} U}}.
	\end{align}
	Comparing this expression with the formula for $y$ given in Equation~\eqref{eq:bem-curve} proves the theorem
\end{proof}

\subsection{The case $(g,n) = (0,2)$}

%% \fixme{... alright, the first version will surely be crappy, so I need to get it down as soon as possible ...}
%% There are 3 components: calculation from semi-infinite wedge, simplification of it

In this section we prove that connected 2-point resolvent in genus 0 agrees with the expansion of the standard Bergman kernel on genus zero spectral curve.
%% \fixme{Formulate it concisely as a theorem}

\begin{theorem} We have:
	\begin{align} \label{eq:2-point-free-energy}
		F_{0,2}(U_1, U_2) = Q^2 \bs{\log(U_1 - U_2) - \log(\Lambda_1 - \Lambda_2)}.
	\end{align}
\end{theorem}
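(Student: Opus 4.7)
My strategy is to compute the coefficient $C^{(0)}_{m_1,m_2}$ on the fermionic side via the formalism of Section~\ref{sec:aops}, expand the right-hand side of \eqref{eq:2-point-free-energy} at $\Lambda_1,\Lambda_2 = \infty$, and compare the two. From the reduction $C^{(0)}_{m_1,m_2} = Q^2[u^0]\langle\At(m_1,um_1)\At(m_2,um_2)\rangle^{\circ}$ explained in Section~\ref{sec:aops}, the claim is equivalent to
\begin{align*}
\sum_{m_1,m_2\geq 1}[u^0]\langle\At(m_1,um_1)\At(m_2,um_2)\rangle^{\circ}\,\Lambda_1^{-m_1}\Lambda_2^{-m_2} \;=\; \log(U_1-U_2) - \log(\Lambda_1-\Lambda_2).
\end{align*}

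On the fermionic side, I would exploit the fact that the ``leading'' commutator $[\mathcal{E}_{-m_1}(um_1),\mathcal{E}_{-m_2}(um_2)]=\zeta(0)\mathcal{E}_{-m_1-m_2}(u(m_1+m_2))=0$ vanishes by~\eqref{eq:commEoper}, so the disconnected product already absorbs the principal $u^{-2}$-term and the connected $[u^0]$-contribution comes entirely from (i)~subleading $\mathcal{E}_{k-m}(um)$-terms in the expansion of $\At$ (those with $k\geq 1$, whose mutual commutators are generically nonzero), and (ii)~contractions with the left bosonic exponential $\exp(\sum_i\alpha_i p_i^*/i)$ via $[\alpha_i,\mathcal{E}_{k-m}(um)]=\zeta(ium)\mathcal{E}_{k+i-m}(um)$. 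Tracking these contractions analogously to the one-point computation of Lemma~\ref{lem:01vacuumcomp} should reduce $[u^0]\langle\At\At\rangle^{\circ}$ to an explicit sum over pairs of partitions of $m_1$ and $m_2$, which by Corollary~\ref{prop:Jacobiseries} can be put in closed form in terms of Jacobi polynomials.

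On the spectral-curve side, I would expand $\log[(U_1-U_2)/(\Lambda_1-\Lambda_2)]$ at $\Lambda_i=\infty$. This combination is regular away from the diagonal, symmetric, and its individual logarithmic pieces $\log\Lambda_i$ cancel, so it admits a genuine power-series expansion in $\Lambda_i^{-1}$. The coefficient of $\Lambda_1^{-m_1}\Lambda_2^{-m_2}$ is extracted by a double contour integral around $\Lambda_i=\infty$; integrating by parts once in each variable to remove the logarithm and changing variables to $U_i$ via $\Lambda_i = U_i((1-A^{b+1}U_i)/(1-A^{b-1}U_i))^{b}$ turns it into a double residue on the $U_i$-plane that can be evaluated by repeated application of Lemma~\ref{lem:shifted-genfunc-expansion}. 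Matching the two generating series is then a combinatorial identity of the same flavor as those in Section~\ref{sec:jacobi}. The hardest step is the fermionic bookkeeping in step~(i): the vanishing of the leading $\mathcal{E}$-commutator forces the whole connected $(0,2)$-contribution to come from genuinely subleading terms in the $u$-expansion of $\At$, and organizing them symmetrically in $m_1$ and $m_2$ to recognize the outcome as the log-difference on the curve is the main obstacle.
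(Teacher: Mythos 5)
Your plan coincides with the paper's own proof: the paper likewise computes $C^{(0)}_{\mu_1,\mu_2}=Q^2[u^0]\langle\At(\mu_1,u\mu_1)\At(\mu_2,u\mu_2)\rangle^{\circ}$ by commuting the $\widetilde{\mathcal{E}}$-operators via~\eqref{eq:commEoper} and evaluating the partition sums through Corollary~\ref{prop:Jacobiseries}, and matches this against the $\Lambda_i^{-\mu_i}$-coefficients of the log-difference extracted by double contour integrals, a change of variables to the $U$-plane, and Lemma~\ref{lem:shifted-genfunc-expansion} (the paper applies the Euler operator $\Lambda_1\p_{\Lambda_1}+\Lambda_2\p_{\Lambda_2}$ rather than integrating by parts, which is equivalent). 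The ``combinatorial identity'' you defer to at the end is precisely a telescoping of the sum of products of Jacobi polynomials via the three-term relation of Proposition~\ref{prop:JacobiThreeTerm}, which collapses both sides to the same bilinear combination of $\xi^1_{\mu_i}$ and $\xi^2_{\mu_i}$.
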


\begin{proof}
	From explicit form of $\Lambda(U)$ it is easy to conclude that right hand side of Equation~\eqref{eq:2-point-free-energy}
	has expansion at $\Lambda_{1,2} = \infty$ of the form
	\begin{align}
		\sum_{\mu_1,\mu_2 = 1}^\infty c_{\mu_1,\mu_2} \Lambda^{-\mu_1} \Lambda^{-\mu_2}
	\end{align}
	with some coefficients $c_{\mu_1,\mu_2}$.
	Therefore, the theorem follows from Lemmas \ref{lem:conn-2-point} and \ref{lem:2-point-log} below that provide the explicit computation of the expansion of the left hand side and the right hand side of Equation~\eqref{eq:2-point-free-energy}.
\end{proof}

\begin{lemma} \label{lem:conn-2-point} For the connected, genus zero part of the 2-point correlator we have:
	\begin{align}
		C^{(0)}_{\mu_1, \mu_2} = 
		Q^2 \frac{(a - 1)  b}{(\mu_1 + \mu_2)} \left (
		((a + 1) + (a - 1)  b) \xi^1_{\mu_1} \xi^1_{\mu_2}
		+ a \left ( \xi^1_{\mu_1} \xi^2_{\mu_2} + \xi^2_{\mu_1} \xi^1_{\mu_2} \right) \right),
	\end{align}
	where $\xi^1_m$ and $\xi^2_m$ are the coefficients in front of $\Lambda^{-m}$ in the $\Lambda^{-1}$-expansion of the functions $\xi^1(U)$
	and $\xi^2(U)$, respectively.
\end{lemma}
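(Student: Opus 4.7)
The plan is to compute $C^{(0)}_{\mu_1,\mu_2}$ directly from the semi-infinite wedge side and match the result against the claimed bilinear expression in $\xi^1$ and $\xi^2$.

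First, by \eqref{eq:definitionCgmu} specialized to $(g,n)=(0,2)$ and the derivation in Section~\ref{sec:aops}, we have $C^{(0)}_{\mu_1,\mu_2}=Q^2[u^0]\langle\At(\mu_1,u\mu_1)\At(\mu_2,u\mu_2)\rangle^\circ$. Expanding each $\At$ as a sum over $\mathcal{E}_n$-operators via Definition~\ref{def:a-oper} and using the commutator \eqref{eq:commEoper} together with $\mathcal{E}_a(z)|0\rangle=0$ for $a>0$, the only contributing connected pairing is $\langle\mathcal{E}_a(z)\mathcal{E}_{-a}(w)\rangle=\zeta(a(z+w))/\zeta(z+w)$ for $a>0$, which at $u=0$ equals $a=:r$. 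Combining this with Corollary~\ref{prop:Jacobiseries} for the inner coefficients $C_k(m,0)=(-1)^k A^{-k}(1-a)(mb/k)J_{k-1}(m)$ (with $C_0(m,0)=1$) yields
\begin{align*}
[u^0]\langle\At\At\rangle^\circ &= (-1)^{\mu_1+\mu_2}A^{(b-1)(\mu_1+\mu_2)}\cdot{}\\
&\quad\biggl[\frac{(1-a)b}{\mu_1+\mu_2}J_{\mu_1+\mu_2-1}(\mu_1)+(1-a)^2b^2\sum_{r=1}^{\mu_2-1}\frac{r\,J_{\mu_1+r-1}(\mu_1)J_{\mu_2-r-1}(\mu_2)}{(\mu_1+r)(\mu_2-r)}\biggr].
\end{align*}

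Second, the three-term recurrence (Proposition~\ref{prop:JacobiThreeTerm}) at $k=\rho=\mu_1$ gives $(a+1+(a-1)b)J_{\mu_1-1}(\mu_1)=-J_{\mu_1}(\mu_1)-aJ_{\mu_1-2}(\mu_1)$. Substituting $\xi^i_m=(-1)^m A^{(b-1)m}J_{m-i}(m)$ into the RHS of the lemma and applying this identity collapses the bracketed expression to $-J_{\mu_1}(\mu_1)J_{\mu_2-1}(\mu_2)+aJ_{\mu_1-1}(\mu_1)J_{\mu_2-2}(\mu_2)$, so that the claimed formula becomes $Q^2(-1)^{\mu_1+\mu_2}A^{(b-1)(\mu_1+\mu_2)}\tfrac{(1-a)b}{\mu_1+\mu_2}[J_{\mu_1}(\mu_1)J_{\mu_2-1}(\mu_2)-aJ_{\mu_1-1}(\mu_1)J_{\mu_2-2}(\mu_2)]$. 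Equating with the direct computation, the lemma collapses to the purely Jacobi-polynomial identity
\begin{align*}
J_{\mu_1+\mu_2-1}(\mu_1)+(1-a)b(\mu_1+\mu_2)\sum_{r=1}^{\mu_2-1}\frac{r\,J_{\mu_1+r-1}(\mu_1)J_{\mu_2-r-1}(\mu_2)}{(\mu_1+r)(\mu_2-r)}=J_{\mu_1}(\mu_1)J_{\mu_2-1}(\mu_2)-aJ_{\mu_1-1}(\mu_1)J_{\mu_2-2}(\mu_2).
\end{align*}

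The main obstacle is proving this identity. I would use the closed-form generating function $f_\rho(w)=((1-w)/(1-aw))^{\rho b}$ from Proposition~\ref{prop:j12exprs}, which realizes $J_{k-1}(\rho)=\tfrac{(-1)^k k}{(1-a)\rho b}[w^k]f_\rho(w)$. Setting $\phi(w,v):=f_{\mu_1}(w)f_{\mu_2}(v)$ and using the expansion $wv/(w-v)^2=\sum_{r\geq 1}r(v/w)^r$ valid for $|w|>|v|$, the LHS becomes (up to an explicit scalar) $[w^{\mu_1}v^{\mu_2}](wv\,\phi(w,v)/(w-v)^2)$, while the RHS is a linear combination of $[w^{\mu_1+1}v^{\mu_2}]\phi$ and $[w^{\mu_1}v^{\mu_2-1}]\phi$. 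This reduces the identity to a bilinear coefficient relation that follows by combining the ODE $(1-w)(1-aw)f_\rho'(w)=-(1-a)\rho b\,f_\rho(w)$ satisfied by $f_\rho$ (applied to each factor of $\phi$, which at the coefficient level produces recurrences of the form $(\mu_i+1)[w^{\mu_i+1}\cdots]\phi=\mu_i[(1+a)-(1-a)b][w^{\mu_i}\cdots]\phi-a(\mu_i-1)[w^{\mu_i-1}\cdots]\phi$) with the pole structure of the Bergman-like kernel $1/(w-v)^2$ at $w=v$. The technical crux is matching the shift structure coming from the ODE with this residue/pole structure.
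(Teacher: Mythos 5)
Your setup and reduction are correct, and they coincide with the paper's own strategy up to the decisive step. Like the paper, you compute the connected correlator in the semi-infinite wedge formalism, observe that at order $u^0$ the pairing of $\widetilde{\Ec}_{k-\mu_1}(u\mu_1)$ with $\widetilde{\Ec}_{l-\mu_2}(u\mu_2)$ contributes the weight $r=k-\mu_1=(k\mu_2-l\mu_1)/(\mu_1+\mu_2)$ on the locus $k+l=\mu_1+\mu_2$, and convert the partition sums into Jacobi polynomials via Corollary~\ref{prop:Jacobiseries}; your first display agrees with the paper's formula~\eqref{eq:2-point-in-mu-jacobi} (with the $l=0$, i.e.\ $r=\mu_2$, boundary term correctly separated). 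Your use of the three-term relation at $k=\rho=\mu_1$ to rewrite the target as $J_{\mu_1}(\mu_1)J_{\mu_2-1}(\mu_2)-aJ_{\mu_1-1}(\mu_1)J_{\mu_2-2}(\mu_2)$ is also valid (I checked it reproduces the lemma for $\mu_1=\mu_2=1$).

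The gap is that the resulting summation identity --- which is the entire content of the lemma --- is not proved. Saying that it ``follows by combining the ODE $(1-w)(1-aw)f_\rho'(w)=-(1-a)\rho b\,f_\rho(w)$ with the pole structure of $1/(w-v)^2$'' names the ingredients but does not exhibit the cancellation; when one actually writes $(\mu_1+\mu_2)[w^{\mu_1}v^{\mu_2}]$ as $[w^{\mu_1}v^{\mu_2}](w\p_w+v\p_v)$ and applies the ODE to $\phi$, one is left with a nontrivial coefficient identity involving $\frac{wv}{(w-v)^2}\big(\frac{\mu_1 w}{(1-w)(1-aw)}+\frac{\mu_2 v}{(1-v)(1-av)}\big)\phi$ whose reduction to the two boundary coefficients still requires a partial-fraction/residue argument that you have not carried out. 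The paper closes exactly this step in an elementary way that you already have all the tools for: decompose
\begin{equation*}
\frac{\mu_2-l}{(\mu_1+\mu_2-l)\,l}=\frac{1}{\mu_1+\mu_2}\left(\frac{\mu_2}{l}-\frac{\mu_1}{\mu_1+\mu_2-l}\right),
\end{equation*}
apply the three-term relation \eqref{eq:JacobiThreeTerm} once in each of the two resulting sums (once in the index $l$ with $\rho=\mu_2$ and once in the index $\mu_1+\mu_2-l$ with $\rho=\mu_1$), and observe that the sum over $l$ telescopes, leaving only the boundary terms, which are precisely $J^{\mu_1}_{\mu_1-1}J^{\mu_2}_{\mu_2-1}$, $J^{\mu_1}_{\mu_1-1}J^{\mu_2}_{\mu_2-2}$ and $J^{\mu_1}_{\mu_1-2}J^{\mu_2}_{\mu_2-1}$ with the stated coefficients. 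Your generating-function route is plausible and would amount to the same telescoping in disguise, but as written the crucial cancellation is asserted rather than demonstrated.
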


\begin{proof} The connected $2$-point correlator is manifestly given in terms of semi-infinite wedge formalism by
	\begin{align} \label{eq:2-point-in-mu}
		& %C_{\mu_1, \mu_2}(u) = 
		\Bigg \langle \frac{Q^2 A^{ b(\mu_1+\mu_2)}}{\mu_1\mu_2} 
		\sum_{k=0}^\infty \widetilde{\mathcal{E}}_{k-\mu_1}(u \mu_1)
		\left (\sum_\lambda ... \right )
		\sum_{l=0}^\infty \widetilde{\mathcal{E}}_{l-\mu_2}(u \mu_2)
		\left (\sum_\nu ... \right )
		\Bigg \rangle,
	\end{align}
	where
	\begin{align}\notag
		&  \left (\sum_\lambda ... \right ) := \left ( \sum_{\lambda \vdash k} \prod_{i=1}^{l(\lambda)} \frac{1}{(\lambda_i - \lambda_{i+1})!}
		\left (\frac{A^i-A^{-i}}{i} \mu_1  b \frac{\mathcal{S}(i u \mu_1)}{\mathcal{S}(i u  b^{-1})}
		\right )^{\lambda_i - \lambda_{i+1}} \right ),
		\\ \notag
		& \left (\sum_\nu ... \right ) := \left ( \sum_{\nu \vdash l} \prod_{j=1}^{l(\nu)} \frac{1}{(\nu_i - \nu_{i+1})!}
		\left (\frac{A^j-A^{-j}}{j} \mu_2  b \frac{\mathcal{S}(j u \mu_2)}{\mathcal{S}(j u  b^{-1})}
		\right )^{\nu_j - \nu_{j+1}} \right ).
	\end{align}
	Note that we have $\widetilde{\mathcal{E}}$-operators inside the average instead of $\mathcal{E}$-operators, because the $\zeta(z)^{-1}$ summands in $\mathcal{E}_0(z)$ are canceled via the inclusion-exclusion formula that transforms the disconnected correlators into the connected ones.
	
	Thus the correlator is represented as a double sum over $k,l\geq 0$. The vacuum expectation in non equal to zero only for the summands where $k - \mu_1 > 0$ and $l - \mu_2 < 0$. Using this and the commutation formula for the $\widetilde{\mathcal{E}}$-operators~\eqref{eq:commEoper}, we see that the right hand side of Equation~\eqref{eq:2-point-in-mu} is equal to
	\begin{align} \label{eq:2-point-in-mu-commuted}
		\Bigg \langle \sum_{\substack{k,l = 0 \\ k > \mu_1 \\ l < \mu_2}}^\infty
		\frac{Q^2 A^{ b(\mu_1+\mu_2)}}{\mu_1\mu_2} \zeta(u(k \mu_2 - l \mu_1)) \mathcal{E}_{k + l - \mu_1 - \mu_2} (u (\mu_1+\mu_2))
		\left (\sum_\lambda ... \right ) \left (\sum_\nu ... \right )
		\Bigg \rangle
	\end{align}
	
	The genus 0 contribution is given by the coefficient in front of $u^0$. The average of $\mathcal{E}$-operator
	is non-zero only when $k + l - \mu_1 - \mu_2 = 0$. Thus the coefficient of $u^0$ in formula~\eqref{eq:2-point-in-mu-commuted} is equal to
	\begin{align} \label{eq:2-point-in-mu-genus-0}
		\sum_{\substack{k,l = 0 \\ k > \mu_1 \\ l < \mu_2 \\ k + l = \mu_1 + \mu_2}}^\infty
		\frac{Q^2 A^{ b(\mu_1+\mu_2)}}{\mu_1 \mu_2} \frac{(k \mu_2 - l \mu_1)}{(\mu_1 + \mu_2)}
		&
		\left (\sum_{\lambda \vdash k} \prod_{i=1}^{l(\lambda)} \frac{1}{(\lambda_i - \lambda_{i+1})!}
		\left ( \frac{A^i-A^{-i}}{i} \mu_1  b
		\right )^{\lambda_i - \lambda_{i+1}} \right )
		\\ \notag &
		\left (
		\sum_{\nu \vdash l} \prod_{j=1}^{l(\nu)} \frac{1}{(\nu_i - \nu_{i+1})!}
		\left (\frac{A^j-A^{-j}}{j} \mu_2  b
		\right )^{\nu_j - \nu_{j+1}} \right )
	\end{align}
	Taking into account all the restrictions on $k$ and $l$ as well as
	explicit expression for sums over partitions in terms of Jacobi polynomials
	(given in Corollary~\ref{prop:Jacobiseries}), we can rewrite formula~\eqref{eq:2-point-in-mu-genus-0} as
	\begin{align} \label{eq:2-point-in-mu-jacobi}
		\sum_{l = 0}^{\mu_2-1} & \frac{Q^2 (\mu_2 - l)}{(\mu_1+\mu_2-l)l} (-1)^{\mu_1+\mu_2}
		A^{( b-1)\mu_1+( b-1)\mu_2}
		(1 - A^2)^2  b^2 
		\\ \notag
		& \times \jp{\mu_1+\mu_2-l-1}{\mu_1  b - \mu_1 - \mu_2 + l}{ 1}(1 - 2 A^2)
		\jp{l - 1}{\mu_2  b - l}{ 1}(1 - 2 A^2)
	\end{align}
	
	We can further simplify this expression. Note that we have the following decomposition into
	simple fractions with respect to $l$
	\begin{align}
		\frac{(\mu_2 - l)}{(\mu_1+\mu_2-l)l} = \frac{1}{\mu_1 + \mu_2} \left ( \frac{\mu_2}{l} - \frac{\mu_1}{\mu_1+\mu_2-l} \right )
	\end{align}
	Using three-term relation~\eqref{eq:JacobiThreeTerm} in the form
	\begin{align}
		(1 - a)\dfrac{\rho b}{k} J_{k-1}^\rho = J_k^\rho+\left(a+1\right)J_{k-1}^\rho+aJ_{k-2}^\rho
	\end{align}
	where for brevity we use the notation $J_k^\rho:=J_k(\rho)=\mathcal{P}_{k}^{(\rho b-k-1,1)}(1-2a)$, $a=A^2 $,
	we can rewrite formula~\eqref{eq:2-point-in-mu-jacobi} as
	% (note that denominators we have match indices of Jacobi polynomials)
	\begin{align} \label{eq:2-point-in-mu-jacobi-simplified}
		& 
		\frac{Q^2 (-1)^{\mu_1+\mu_2} A^{( b-1)\mu_1+( b-1)\mu_2}
			(1 - A^2)  b}{(\mu_1 + \mu_2)}
		\\ \notag 
		& \times 
		\sum_{l = 0}^{\mu_2} \Bigg ( {
			J_{\mu_1+\mu_2-l-1}^{\mu_1} J_l^{\mu_2}
			+ (1+a) J_{\mu_1+\mu_2-l-1}^{\mu_1} J_{l-1}^{\mu_2}
			+ a J_{\mu_1+\mu_2-l-1}^{\mu_1} J_{l-2}^{\mu_2}}
		\\ \notag
		& 
		% \phantom{\times\sum_{l = 0}^{\mu_2} \Bigg (}
		-{J_{\mu_1+\mu_2-l}^{\mu_1} J_{l-1}^{\mu_2}
			-(1+a) J_{\mu_1+\mu_2-l-1}^{\mu_1} J_{l-1}^{\mu_2}
			- a J_{\mu_1+\mu_2-l-2}^{\mu_1} J_{l-1}^{\mu_2}}
		\Bigg )
	\end{align}
	Most of the summands in the last expression cancel and we finally obtain
	(applying the three-term relation in its original form \eqref{eq:JacobiThreeTerm}
	%once more
	%in order to have the
	so that only the
	Jacobi polynomials appearing in the expansions of the $\xi$-functions remain) the following expression for the connected $2$-point correlator in genus $0$:
	\begin{align} \label{eq:2-point-in-mu-jacobi-totally-simplified}
		&
		\frac{Q^2 (-1)^{\mu_1+\mu_2} A^{( b-1)\mu_1+( b-1)\mu_2}
			(-1) (1 - A^2)  b}{(\mu_1 + \mu_2)}
		\\ \notag & \times \Bigg (
		((a + 1) + (a - 1)  b) J^{\mu_1}_{\mu_1 - 1} J^{\mu_2}_{\mu_2 - 1}
		+ a \left ( J^{\mu_1}_{\mu_1 - 1} J^{\mu_2}_{\mu_2 - 2}
		+ J^{\mu_1}_{\mu_1 - 2} J^{\mu_2}_{\mu_2 - 1} \right)
		\Bigg )
		\\ \notag
		& =  \frac{Q^2 (a - 1)  b}{(\mu_1 + \mu_2)} \Bigg (
		((a + 1) + (a - 1)  b) \xi^1_{\mu_1} \xi^1_{\mu_2}
		+ a \left ( \xi^1_{\mu_1} \xi^2_{\mu_2} + \xi^2_{\mu_1} \xi^1_{\mu_2} \right)
		\Bigg )
	\end{align}
	
	%% \fixme{Unify notation $A$ vs $a$ $J$ vs $P$ and so on}
\end{proof}

Denote by $E$ the Euler operator
\begin{align}
	E = \left(\Lambda_1 \frac{\p}{\p \Lambda_1} + \Lambda_2 \frac{\p}{\p \Lambda_2}\right)
\end{align}

\begin{lemma} % The Euler operator $E = \Lambda_1 \frac{\p}{\p \Lambda_1} + \Lambda_2 \frac{\p}{\p \Lambda_2}$
	\label{lem:2-point-log}
	We have
	\begin{align}
		& E \log\br{U_1 - U_2} =  b (1 - a) \times
		\\ \notag    
		& \sum_{\mu_1, \mu_2 = 1}^\infty
		\br{
			\br{(a+1)+(a-1) b} \xi^1_{\mu_1} \xi^1_{\mu_2}
			+ a \br{\xi^1_{\mu_1} \xi^2_{\mu_2} + \xi^2_{\mu_1} \xi^1_{\mu_2}}}
		\Lambda_1^{-\mu_1} \Lambda_2^{-\mu_2}
	\end{align}
\end{lemma}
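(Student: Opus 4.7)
The strategy is to prove the identity as an equality of rational functions on the spectral curve, from which the series identity follows by expansion at $\Lambda=\infty$ using the definition of the $\xi^i_m$ in Notation~\ref{not:xi-coeffs}.

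First, I would compute $\partial_\lambda U$ explicitly. From $\lambda = \log U + b[\log(1-A^{b+1}U) - \log(1-A^{b-1}U)]$, a direct differentiation gives $\partial_\lambda U = U(1-A^{b+1}U)(1-A^{b-1}U)/[A^{2b}(u_+-U)(u_--U)]$, where $u_\pm := u_0\pm\Delta u$ are the critical points of $\lambda$. A short computation establishes the key identity $u_+u_- = A^{-2b}$ from the explicit formulas for $u_0$ and $\Delta u$ recalled in Section~\ref{sec:bemcurve}. Polynomial long division, using this identity, yields the partial fraction decomposition
\begin{align*}
\partial_\lambda U = U + c_2 - \frac{c_2}{2\Delta u}\left(\frac{u_+^2}{u_+-U} - \frac{u_-^2}{u_--U}\right), \qquad c_2 := \frac{(a-1)b}{A^{b+1}}.
\end{align*}

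Next, using $E\log(U_1-U_2) = (\partial_{\lambda_1}U_1 - \partial_{\lambda_2}U_2)/(U_1-U_2)$ together with the elementary identity $(1/(u-U_1) - 1/(u-U_2))/(U_1-U_2) = 1/((u-U_1)(u-U_2))$, the linear part of $\partial_\lambda U$ contributes only the constant $1$ to the quotient, so
\begin{align*}
E\log(U_1-U_2) = 1 - \frac{c_2}{2\Delta u}\left(\frac{u_+^2}{(u_+-U_1)(u_+-U_2)} - \frac{u_-^2}{(u_--U_1)(u_--U_2)}\right).
\end{align*}

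Third, I would compute the right-hand side as a rational function. Translating the explicit form of $\xi^1, \xi^2$ from Theorem~\ref{thm:xi-functions} into partial fractions yields
\begin{align*}
\xi^1(U) = \frac{1}{2\Delta u\,A^{b+1}}\left(\frac{u_+}{u_+-U} - \frac{u_-}{u_--U}\right),\qquad \xi^2(U) = \frac{1}{2\Delta u\,A^{2b+2}}\left(\frac{1}{u_--U} - \frac{1}{u_+-U}\right).
\end{align*}
Expanding the right-hand side of the lemma as a linear combination of the four monomials $1/((u_\varepsilon-U_1)(u_{\varepsilon'}-U_2))$ for $\varepsilon,\varepsilon'\in\{+,-\}$, I expect the cross coefficients ($\varepsilon\neq\varepsilon'$) to cancel thanks to the identities $u_+u_- = A^{-2b}$, $u_++u_-=2u_0$, and $(a+1)+(a-1)b = 2u_0\,A^{b+1}$, while the diagonal coefficients collapse, via $A^2=a$, onto those of the previous display.

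Finally, the two rational functions agree up to the additive constant $1$, which corresponds to the $\mu_1=\mu_2=0$ coefficient in the $\Lambda^{-1}$-expansion and is irrelevant for the sum $\sum_{\mu_1,\mu_2\geq 1}$ appearing in the lemma. Expanding each $\xi^i(U_j)$ as $\sum_{\mu_j\geq 1}\xi^i_{\mu_j}\Lambda_j^{-\mu_j}$ then yields the claimed series identity. The main obstacle is the algebraic verification in the third step: the cancellation of the cross terms, though not coincidental (it reflects that the combination on the right is a symmetric rational function on the curve with poles only at $u_\pm$), depends on a careful interplay of the three identities above together with $A^2=a$.
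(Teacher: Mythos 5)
Your proposal is correct, but it follows a genuinely different route from the paper. The paper proves this lemma coefficient-by-coefficient: it writes $[\Lambda_1^{-\mu_1}\Lambda_2^{-\mu_2}]\,E\log(U_1-U_2)$ as a double contour integral, changes variables to the $U$-plane, reduces everything to the integrals $I_\mu(x,y)$ of Lemma~\ref{lem:I-integral}, evaluates those via the Jacobi-polynomial expansion of Lemma~\ref{lem:shifted-genfunc-expansion}, and then recombines the resulting Jacobi polynomials into $\xi^1_\mu,\xi^2_\mu$. You instead prove a single identity of rational functions on the curve: partial-fraction $\partial_\lambda U$ at the critical points $u_\pm$, observe that division by $U_1-U_2$ turns the simple poles into the products $1/((u_\varepsilon-U_1)(u_{\varepsilon'}-U_2))$, and match against the closed rational forms of $\xi^1,\xi^2$ from Theorem~\ref{thm:xi-functions}. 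I checked the key steps: the residue of $\partial_\lambda U$ at $u_\pm$ is indeed $c_2 u_\pm^2/(2\Delta u)$ with $c_2=(a-1)b\,A^{-b-1}$ (this uses the quadratic relation $A^{2b}u_\pm^2-2A^{2b}u_0u_\pm+1=0$), the partial-fraction forms of $\xi^1,\xi^2$ are correct, the diagonal coefficients match via $((a+1)+(a-1)b)u_\pm-2aA^{-b-1}=2A^{b+1}u_\pm(u_0-u_\mp)=\pm 2A^{b+1}u_\pm\Delta u$, and the cross terms cancel because $A^{b+1}u_+u_-=aA^{-b-1}$. Your handling of the additive constant $1$ (absent from the sum over $\mu_1,\mu_2\geq 1$, and cancelled in the theorem by $E\log(\Lambda_1-\Lambda_2)=1$) is also the right reading of the statement. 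What each approach buys: the paper's residue computation is uniform with the rest of its machinery and directly exhibits the coefficients as specific Jacobi polynomials, whereas your argument avoids special functions entirely, reduces to elementary algebra with the three identities $u_+u_-=A^{-2b}$, $u_++u_-=2u_0$, $(a+1)+(a-1)b=2u_0A^{b+1}$, and makes structurally transparent why the answer must be a bilinear combination of $\xi$-functions (it is a symmetric rational function with poles only at the critical points). The only dependency to flag is that your route uses the global rational forms of $\xi^1,\xi^2$ from Theorem~\ref{thm:xi-functions}, which is legitimate since that theorem is established independently earlier in the paper.
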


\begin{proof} Recall the expression for $\Lambda(U)$ given by Equation~\eqref{eq:bem-x-and-lambda}.
	The coefficient in front of $\Lambda_1^{-\mu_1} \Lambda_2^{-\mu_2}$ in the expansion of $ E \log\br{U_1 - U_2} $
	is equal to the following residue:
	\begin{align} \label{eq:euler-log-u}
		& [\Lambda_1^{-\mu_1} \Lambda_2^{-\mu_2}] E \log(U_1 - U_2) \\ \notag
		& =
		\frac{1}{(2 \pi \sqrt{-1})^2} \oint_{\Lambda_1} \oint_{\Lambda_2} \Lambda_1^{\mu_1-1} \Lambda_2^{\mu_2-1}
		\left(\Lambda_1 \frac{\p}{\p \Lambda_1} + \Lambda_2 \frac{\p}{\p \Lambda_2}\right) \log(U_1 - U_2) d\Lambda_1 d\Lambda_2
		\\ \notag
		& = \frac{1}{(2 \pi \sqrt{-1})^2} \oint_{\Lambda_1} \oint_{\Lambda_2} \Lambda_1^{\mu_1-1} \Lambda_2^{\mu_2-1}
		\br{\Lambda_1 \frac{d \Lambda_2}{d U_2} - \Lambda_2 \frac{d \Lambda_1}{d U_1}} \frac{1}{(U_1 - U_2)} d U_1 d U_2
		\\ \notag
		& = I_{\mu_1} (1,0) I_{\mu_2} (1,0)
		+  b \br{A^{2( b+1)} - A^{2( b-1)}} I_{\mu_1}(0,1) I_{\mu_2}(0,1)
		\\ \notag
		& \quad
		+  b (A^{ b-1} - A^{ b+1}) A^{2  b} \br{I_{\mu_1}(-1,1) I_{\mu_2}(0,1) + I_{\mu_1}(0,1) I_{\mu_2}(-1,1)},
	\end{align}
	where 
	\begin{align}
		I_\mu(x, y) : = \frac{1}{2 \pi \sqrt{-1}} \oint_{U, \infty} U^{\mu-x} \frac{(1 - A^{ b+1} U)^{ b \mu - y}}{(1 - A^{ b-1} U)^{ b \mu + y}} d U
	\end{align}
	These integrals are computed below, in Lemma \ref{lem:I-integral}. In particular, we have:
	\begin{align}
		I_\mu(1,0) & =    b\br{A^{ b-1}-A^{ b+1}} I_\mu(0,1) =  b (1 - A^2) (-1) \xi^1_\mu
		\\ \notag
		I_\mu(0,1) & =   (-1) A^{- b+1} \xi^1_\mu
		\\ \notag
		I_\mu(-1,1) & =  A^{-2  b} (-1) \bs{\br{(a+1) + (a-1) b}\xi^1_\mu + a \xi^2_\mu}
	\end{align}
	where we used integration by parts to evaluate the first integral. Substituting these expressions we obtain that the coefficient of $\Lambda_1^{-\mu_1} \Lambda_2^{-\mu_2}$ in $E \log(U_1 - U_2)$ is equal to
	\begin{align}
		 b (1 - a)
		\br{
			\br{(a+1)+(a-1) b} \xi^1_{\mu_1} \xi^1_{\mu_2}
			+ a \br{\xi^1_{\mu_1} \xi^2_{\mu_2} + \xi^2_{\mu_1} \xi^1_{\mu_2}}}
	\end{align}
	
\end{proof}

\begin{lemma} \label{lem:I-integral} We have:
	\begin{align} 
		& I_\mu(x, y) := \frac{1}{2 \pi I} \oint_{U, \infty} U^{\mu-x} \frac{(1 - A_+ U)^{b \mu - y}}{(1 - A_- U)^{b \mu + y}} d U 
		\\ \notag
		& = \frac{A^{2 b(\mu - y)} (-1)^{\mu-x-2 y + 1}}{A_+^{\mu-x-2 y + 1}}
		\jp{\mu-x-2 y + 1}{\mu(b-1) + x + y - 1}{2 y - 1} (1 - 2 A^2).
	\end{align}
\end{lemma}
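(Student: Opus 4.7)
The proof should be essentially a one-step residue computation, relying on the earlier expansion result. Here is my plan.

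The natural approach is to substitute the expansion at $U=\infty$ of the rational factor in the integrand that was already computed in Lemma~\ref{lem:shifted-genfunc-expansion}, and then extract the single coefficient that contributes to the residue. Specifically, Lemma~\ref{lem:shifted-genfunc-expansion} gives
\begin{align*}
\frac{(1-A_+ U)^{b\mu-y}}{(1-A_- U)^{b\mu+y}} = \frac{A^{2b(\mu-y)}}{U^{2y}} \sum_{m=0}^{\infty} \frac{(-1)^m}{U^m A_+^m} \mathcal{P}_m^{(b\mu-y-m,\, 2y-1)}(1-2A^2),
\end{align*}
so multiplying by $U^{\mu-x}$ produces a Laurent series in $U$ in which the term proportional to $U^{-1}$ corresponds to the unique index $m = \mu - x - 2y + 1$.

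The next step is to read off the residue at infinity. With the sign convention implicit in the notation $\oint_{U,\infty}$ (the contour encircling infinity in the sense consistent with the preceding uses in Section~\ref{sec:xiJac}), the integral equals this coefficient of $U^{-1}$. Substituting $m=\mu-x-2y+1$ and simplifying the first upper index of the Jacobi polynomial via
\begin{align*}
b\mu - y - m \;=\; b\mu - y - (\mu-x-2y+1) \;=\; \mu(b-1) + x + y - 1
\end{align*}
yields exactly the asserted formula
\begin{align*}
I_\mu(x,y) = \frac{A^{2b(\mu-y)}(-1)^{\mu-x-2y+1}}{A_+^{\mu-x-2y+1}} \mathcal{P}_{\mu-x-2y+1}^{(\mu(b-1)+x+y-1,\, 2y-1)}(1-2A^2).
\end{align*}

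There is no genuine obstacle here: the only care needed is in the sign/orientation convention for the contour at infinity and the range of validity (we implicitly require $\mu-x-2y+1\geq 0$, which holds for all the specific values $(x,y)\in\{(1,0),(0,1),(-1,1)\}$ used in Lemma~\ref{lem:2-point-log} when $\mu\geq 1$). Thus the whole argument amounts to invoking Lemma~\ref{lem:shifted-genfunc-expansion}, picking the coefficient of $U^{-1}$, and rewriting the Jacobi-polynomial indices.
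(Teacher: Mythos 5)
Your proposal is correct and coincides with the paper's own argument: the paper's proof is the one-line statement ``Explicit calculation using expansion obtained in Lemma~\ref{lem:shifted-genfunc-expansion},'' which is precisely the substitution-and-residue-extraction you carry out, and your index simplification $b\mu - y - m = \mu(b-1)+x+y-1$ checks out.
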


\begin{proof}
	Explicit calculation using expansion obtained in Lemma \ref{lem:shifted-genfunc-expansion}.
\end{proof}

\section{Quantum curve} \label{sec:qsc}

In this Section we derive a quantum spectral curve for a natural wave function obtained from the extended Ooguri-Vafa partition function given by the vacuum expectation formula~\eqref{eq:extended-partition-function}. In this derivation we follow closely the ideas used in~\cite{Zhou,MulShaSpi}.

A natural wave function $\Psi(\Lambda)$ is obtained by the substitution $\pb_i:=Q^{-1}\Lambda^i$ in $Z^{\ext}$ restricted to the topological locus with the simultaneous change $\hbar\to -\hbar$ (cf.~the general formulas and computation for low genera above --- we need a sign adjustment of $(-1)^n$ for the $n$-point functions, which can be achieved by replacing $\hbar^{2g-2+n}$ by $(-\hbar)^{2g-2+n}$). We have:
 \begin{align*} 
\Psi(\Lambda) &=
\left\langle
\exp \lb \sum_{j=1}^\infty \frac{\alpha_{-j}}{j} \cdot \frac{A^j-A^{-j}}{\zeta(-j\hbar)} \rb
\exp \lb -\hbar b \Fc_2 \rb
\exp \lb \sum_{i=1}^\infty \frac{\alpha_{i} \Lambda^{i}A^{ib}}{i} \rb
\right\rangle
\\ 
& =\sum_{\ell=0}^\infty \Lambda^{\ell} A^{\ell b} \exp\left(-\hbar b \frac {\ell^2-\ell}{2} \right) 
\prod_{i=1}^\ell \frac{Ae^{-\hbar (i-1)/2} - A^{-1}e^{\hbar (i-1)/2} }{e^{-\hbar i/2} - e^{\hbar i/2}}
\end{align*}
Here we used that $s_R|_{p_i=r^i}$ is non-trivial only for $R=(\ell,0,0,\dots)$, in which case it is equal to $r^\ell$, and the fact that 
\[
s_\ell^*:=s_\ell|_{p_i=p_i^*} = \prod_{i=1}^\ell \frac{Ae^{\hbar (i-1)/2} - A^{-1}e^{-\hbar (i-1)/2} }{e^{\hbar i/2} - e^{-\hbar i/2}}.
\]

\begin{theorem} We have:
\begin{equation}\label{eq:qsc}
\left[\left(
e^{-\frac\hbar2 \Lambda\frac d {d\Lambda}} - e^{\frac\hbar2 \Lambda\frac d {d\Lambda}}
\right)
-
\Lambda A^b e^{-\hbar b \Lambda\frac d {d\Lambda}} \left(
Ae^{-\frac\hbar2 \Lambda\frac d {d\Lambda}} - A^{-1}e^{\frac\hbar2 \Lambda\frac d {d\Lambda}} 
\right)\right]
\Psi = 0.
\end{equation}
\end{theorem}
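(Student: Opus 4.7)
The plan is to verify the operator equation term by term on the power series expansion of $\Psi$. Writing
\eq{
\Psi(\Lambda) = \sum_{\ell=0}^\infty c_\ell \Lambda^\ell, \qquad c_\ell = A^{\ell b} e^{-\hbar b \frac{\ell(\ell-1)}{2}} \prod_{i=1}^\ell \frac{A e^{-\hbar(i-1)/2} - A^{-1} e^{\hbar(i-1)/2}}{e^{-\hbar i/2} - e^{\hbar i/2}},
}
the key observation is that the shift-type operators $e^{a\hbar \Lambda\frac{d}{d\Lambda}}$ act diagonally on monomials: $e^{a\hbar \Lambda \frac{d}{d\Lambda}} \Lambda^\ell = e^{a\hbar\ell} \Lambda^\ell$. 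Therefore the operator equation~\eqref{eq:qsc} reduces to a scalar identity at each power of $\Lambda$.

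First I would compute the coefficient of $\Lambda^\ell$ on each side. The contribution from the first bracket $(e^{-\frac\hbar2\Lambda d/d\Lambda} - e^{\frac\hbar2\Lambda d/d\Lambda})\Psi$ to $\Lambda^\ell$ is $(e^{-\hbar\ell/2} - e^{\hbar\ell/2}) c_\ell$. For the second term, the outer factor $\Lambda$ shifts the index by one, so acting on $\Lambda^{\ell-1}$ and then multiplying by $\Lambda$ produces the coefficient $A^b e^{-\hbar b (\ell-1)} \bigl(A e^{-\hbar(\ell-1)/2} - A^{-1} e^{\hbar(\ell-1)/2}\bigr) c_{\ell-1}$. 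Thus~\eqref{eq:qsc} is equivalent to the two-term recursion
\eq{
(e^{-\hbar\ell/2} - e^{\hbar\ell/2}) c_\ell = A^b e^{-\hbar b(\ell-1)} \bigl(A e^{-\hbar(\ell-1)/2} - A^{-1} e^{\hbar(\ell-1)/2}\bigr) c_{\ell-1}
}
for all $\ell \geq 1$, together with the obvious boundary identity at $\ell=0$ (which just says $c_{-1}=0$, consistent with $c_0=1$).

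The second step is to verify this recursion directly from the closed form of $c_\ell$. Taking the ratio $c_\ell/c_{\ell-1}$ from the explicit formula gives
\eq{
\frac{c_\ell}{c_{\ell-1}} = A^b \cdot e^{-\hbar b (\ell-1)} \cdot \frac{A e^{-\hbar(\ell-1)/2} - A^{-1} e^{\hbar(\ell-1)/2}}{e^{-\hbar\ell/2} - e^{\hbar\ell/2}},
}
which is exactly what the recursion demands. This completes the verification.

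There is no serious obstacle here; the main thing to keep track of is the bookkeeping of the three independent shift operators $e^{\pm\frac\hbar2 \Lambda d/d\Lambda}$ and $e^{-\hbar b \Lambda d/d\Lambda}$ and the fact that multiplication by $\Lambda$ does \emph{not} commute with them, so one must apply the shifts first and multiply by $\Lambda$ at the very end (equivalently, reindex $\ell \mapsto \ell-1$ inside the sum). Once this is done correctly, the theorem reduces to the tautological observation that the formula defining $c_\ell$ as a product is precisely the solution of the recurrence that the quantum curve imposes.
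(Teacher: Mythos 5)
Your proposal is correct and follows essentially the same route as the paper: the paper also writes $\Psi=\sum_\ell \psi_\ell$, observes that the shift operators act diagonally on $\Lambda^\ell$, and reduces the theorem to the two-term recursion $\psi_{\ell+1}/\psi_\ell = \Lambda A^b e^{-\hbar b\ell}\,\frac{Ae^{-\hbar\ell/2}-A^{-1}e^{\hbar\ell/2}}{e^{-\hbar(\ell+1)/2}-e^{\hbar(\ell+1)/2}}$, which is exactly your recursion after the reindexing $\ell\mapsto\ell-1$. Your ratio computation $c_\ell/c_{\ell-1}$ checks out (the telescoping of the quadratic exponent gives $e^{-\hbar b(\ell-1)}$), and the boundary case $\ell=0$ is handled the same way.
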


\begin{proof}
We denote the $\ell$-th term in the representation of $\Psi$ above by $\psi_\ell$, that is, $\Psi=\sum_{\ell=0}^\infty \psi_\ell$. Note that 
\[
\frac{\psi_{\ell+1}}{\psi_\ell} = \Lambda A^b e^{\hbar b \ell} \frac{Ae^{-\hbar \ell/2} - A^{-1}e^{\hbar \ell/2} }{e^{-\hbar (\ell+1)/2} - e^{\hbar (\ell +1)/2}},
\]
or, in other words,
\[
\left(
e^{-\frac\hbar2 \Lambda\frac d {d\Lambda}} - e^{\frac\hbar2 \Lambda\frac d {d\Lambda}}
\right)
\psi_{\ell+1} = \Lambda A^b e^{-\hbar b \Lambda\frac d {d\Lambda}} \left(
Ae^{-\frac\hbar2 \Lambda\frac d {d\Lambda}} - A^{-1}e^{\frac\hbar2 \Lambda\frac d {d\Lambda}} 
\right)\psi_{\ell},
\]
which implies the statement of the theorem.
\end{proof}

The differential-difference operator annihilating the wave function $\Psi$~\eqref{eq:qsc} is the quantum spectral curve. Under the dequantization $\exp (\hbar \Lambda\frac d{d\Lambda})\to V$ we obtain the equation 
\[
\Lambda = \frac{1-V}{A^{b+1}V^{-b} (1-A^{-2}V)},
\]
which is an equivalent way to present the BEM spectral curve~\eqref{eq:bem-curve} (in this form it is given in~\cite[Equation (3.26)]{BEM11}).

\end{document}